\newtheorem{theorem}{Theorem}[section]
\newtheorem{definition}{Definition}[section]
\newtheorem{lemma}{Lemma}[section]
\newtheorem{corollary}{Corollary}[section]
\newtheorem{remark}{Remark}[section]
 \numberwithin{equation}{section}
  \numberwithin{figure}{section}
\definecolor{Blue}{rgb}{0.3,0.3,0.9}
\begin{document}

\title{THE LOCALIZATION OF QUANTUM RANDOM WALKS\\
ON SIERPINSKI GASKETS}



\author{Kai Zhao and Wei-Shih Yang}
\maketitle


\vskip 3pt

Department of Mathematics

Temple University, Philadelphia, PA 19122

\vskip 6pt
Email: kai.zhao@temple.edu, yang@temple.edu

\vskip 12pt

KEY WORDS: Quantum walk; Fractals; Recursive formula.

\begin{abstract}
We consider the discrete time quantum random walks on a Sierpinski gasket. We study the hitting probability as the level of fractal goes to infinity in terms of their 
localization exponents $\beta_w$, total variation exponents $\delta_w$ and relative entropy exponents $\eta_w$. 
We define and solve the amplitude Green  functions recursively  when the level of the fractal graph goes to infinity. We obtain exact recursive formulas for the amplitude Green functions, based on which the hitting probabilities and expectation of the first-passage time are calculated. Using the recursive formula with the aid of Monte Carlo integration, we evaluate their numerical values. 
We also show  that when the level of the  fractal graph goes to infinity, with probability 1, the quantum random walks will return to origin, i.e., the quantum walks on Sierpinski gasket are recurrent. 
\end{abstract}

\markboth{Wei-Shih Yang and Kai Zhao}
{The localization of quantum random walks on Sierpinski gaskets}

\section{Introduction}

\setcounter{equation}{0}

\subsection{Classical theory}

\subsubsection{Classical walk, spectral and  geometric dimensions}

For classical random walks $(X_n)_{n=0}^{\infty}$ on a space, the walk dimension (or fractal dimension of the walk, diffusion exponent), $d_w$, is defined by the reciprocal of the exponent of the expected distance at time $n$ from the initial point $X_0$; namely, $E|X_n-X_0| \sim n^{1/d_{w}}$, as $n \to \infty$. For simple random walk on $d$-dimensional lattice $Z^d$, it is well known that $d_w= 2$, for all $d$. A random walk is called a diffusion  if $d_w =2$. When $d_w >2$, the walk is called an anomalous diffusion. This type of anomalous diffusion were first observed by random walks in random media such as critical percolation clusters. Critical percolation clusters behave like a fractal in a large scale and simple random walks on the clusters are expected to be   anomalous diffusions \cite{AlexanderOrbach 1982, RammalToulouse 1983}.

Kesten \cite{Kesten 1986} showed that the anomalous diffusion occurs on critical percolation on trees and $Z^2$. For high-dimensional $Z^d$, anomalous diffusion have also obtained for random walks on critical percolation clusters \cite{Kumagai 2006, BarlowJaraiKumagaiSlade 2006, KozmaNachmias 2009}. For simplicity, instead of on random clusters, random walks on deterministic regular fractal structures have also been studied. Discrete random walks on Sierpinski gaskets were studied in \cite{AlexanderOrbach 1982, 2005 diffusion,  RammalToulouse 1983} and Brownian motions on Sierpinski gaskets were constructed \cite{BarlowPerkins1988, Goldstein1987, Kusuoka1987}. 

It is generally believed that the fractal dimensions of the walk is related to the spectral dimension, not just the geometric dimension (Hausdorff dimension), of the space. It is conjectured by Alexander and Orbach \cite{AlexanderOrbach 1982} that $d_w=2d_f/d_s$, where $d_f$ is the Hausdorff dimension of the space and $d_s$ is the spectral dimension of the space. $d_f$ is a geometric constant, while $d_s$ is an analytic constant defined by properties of harmonic functions (density of the states) on the space. 

\subsubsection{Classical walk dimension on Sierpinski gasket}

It is proved that the fractal dimension of random walks on the Sierpinski gasket is  $d_w=\ln 5 /\ln 2$ agreeing with the conjecture, see Ref.~\citen{2005 diffusion}.

Brownian motions, heat kernels, Sobolev inequalities, and Harnak inequality have been studied on Sierpinski carpets \cite{BarlowBass1999} and shown that $d_s$ is more significant than $d_f$; for example, unlike the Euclidean space, $d_f$ does not enter into the Sobolev inequality in this case.

\subsubsection{Classical first passage time}

An alternative approach to computing $d_w$ is to consider first passage time $T_L$, the first time that the random walker has reached a distance $L$ from its initial position, and set $E(T_L) \sim L^{c_w}$, as $ L \to \infty$. It is well known that for simple random walk on $Z^d$, $c_w=d_w=2$. When the random walk is strongly localized in the sense that $P(T_L = \infty)>0$,  we consider the first returning  time $\tau_L$, the first time that the random walker has either reached a distance $L$ from the initial point or returned to its  initial point and the recurrent exponent $r_w$ that measures the degree of recurrence is defined by $E(\tau_L) \sim L^{r_w}$, as $ L \to \infty$.  

\subsubsection{Classical localization}
If the walk $\{X_t, t \ge 0\}$ is localized in the sense that 
$\lim_{L  \to \infty}P(X_{\tau_L} =x|X_0=x)=1$, we also want to find the localization exponents defined
by 

$$\beta_w=\lim_{L \to \infty}  \frac{-\ln |P(X_{\tau_L} =x|X_0=x)-1|}{\ln L}.$$

In this case, we would  like to study the asymptotic distributions of $p^{(L)}(y)=P(X_{\tau_L} =y|X_0=x)$. Let  $p^{(\infty)} (y)=\lim_{L \to \infty} p^{(L)} (y)$. Let $d^{(L)}=|| p^{(L)}-p^{(\infty)}||_{TV}$ be the total variation distance of $p^{(L)}$ and $p^{(\infty)}$. Then we define 
$$\delta_w=\lim_{L \to \infty}  \frac{-\ln d^{(L)}}{\ln L}.$$
We would also like to study the asymptotic behavior of their relative entropy. Let $H^{(L)}=-\sum_{y}p^{(\infty)}(y) \ln p^{(L)}(y)$ be the relative entropy of $p^{(L)} $ with respect to $p^{(\infty)}$. Then we define 
$$\eta_w=\lim_{L \to \infty}  \frac{-\ln H^{(L)}}{\ln L}.$$

Since the simple random walk on $\mathbb{Z}^d$ is either null recurrent or transient, $\beta_w$,
 $\delta_w$ and $\eta_w$ are undefined. For quantum random walk on Sierpinski gasket, we will show that it is strongly localized and these exponents exist and we will obtain their numerical values.

\subsection{Quantum theory}

In this paper, we will study the  recurrence and localization exponents for quantum random walks on Sierpinski gaskets. In quantum computing, quantum algorithms have shown speed ups over their classical counterpart. One of the most celebrated examples is Shor's factorization algorithm that shows exponential speed up over the classical algorithm \cite{Sho97}. For unsorted database search, Grover's quantum search algorithm has shown $O(\sqrt {N})$ speed up over the classical  algorithm of $O(N)$ \cite{Grover96}. Classical random walks have been used for spatial search. For quantum spatial search, quantum random walks were introduced by Aharanov, Davidovich, and Zagury \cite{Aharonov 1993}  as the quantum version  of classical random walks. Since quantum random walks spread out ballistically, which is faster than the diffusive behavior of classical random walks, one generally expects quantum spatial search algorithm also has a speed up over the classical counterpart when quantum random walks are used for the search.

\subsubsection{Quantum Search on $Z^d$}

A spatial search using the discrete quantum random walk on $Z^d$ has been studied \cite{AmbainisKempeRivosh2008}. 

For $d=2$, the number of queries to the oracle needed in order to search for a marked site in a $d$-dimensional cubes with $N$ sites is $T=O(\sqrt {N \log N})$ with flip-flop walk and the moving walk is worse than the classical case.  For $d \ge 3 $, $T=O(\sqrt {N})$ with flip-flop walk and the moving walk is worse than the classical case. 
For $ d=1$ Hadamard quantum random walk, $T=O(N)$ with moving walk and flip-flop walk is worse than the classical case \cite{AmbainisKempeRivosh2008}.

\subsubsection{Quantum Search on Sierpinski gasket}

In Ref.~\citen{PatelRaghunathan2012}, it is conjectured  that for the spatial search in $d$-dimensional lattice, the lower bound obeys 
\begin{eqnarray}
t_2 \ge \max  \{d N ^{1/d}, \pi \sqrt{N}/4 \}, \label{eq:search lower bound}
\end{eqnarray}
where $t_2$ is the number of times that the oracle have been called.

The first bound $d N ^{1/d}$ in (\ref{eq:search lower bound}) is called the relativity bound; it is the minimal time that the walk must be able to cross the lattice. The bound $\pi \sqrt{N}/4$ in  (\ref{eq:search lower bound}) is called the unitarity bound; it is the Grove's bound when there is no constraint on the walk. (\ref{eq:search lower bound}) shows that there is a critical dimension $d_c=2$. If $d > d_c$, then unitarity bound dominates the search speed and if $d <d_c$, the relativity bound dominates the search speed. In Ref.~\citen{PatelRaghunathan2012}, numerical results support the conjecture (\ref{eq:search lower bound}) for non-integer dimensional structures, such as Sierpinski gaskets, but with $d$ in (\ref{eq:search lower bound}) replaced by the spectral dimension $d_s$ of the space. Note that for integer $d$, $d=d_s$.

\subsubsection{Behavior of quantum walks on $Z^d$}
 
With the motivation of quantum spatial search, one is interested in the behavior of the quantum walks on various spaces.  There are two forms of quantum walks, continuous-time quantum walks and discrete-time quantum walks have been widely studied. In this paper, we restrict our discussion to the discrete time.
Discrete time quantum walks for searching algorithms have been widely studied on finite graphs. However, the limiting distributions of a quantum random walks have only been studied mostly on the line $\mathbb{Z}$  using combinatorial methods and recursive methods \cite{2001 Ambainis}, or higher dimensional space $\mathbb{Z}^d$ using Fourier transforms \cite{Grimmett et al}.
It is known that $d_w=1$, for standard (moving) Hadamard quantum random walks on $Z^d$, for all $d$. Yang et al. \cite{2007 Z^d} used path integral to derive a recursive formula to solve the return probability in $\mathbb{Z}^d$.  For partially decoherent quantum random walks on $Z^1$ with both position- and coin-space decoherence, Zhang \cite{K Zhang} showed that the scaling limit is Gaussian. For partially, only coin-space decoherent quantum random walks on $Z^1$, Fan et al. \cite{2011 decoherence} showed that the distribution of the scaling limit is a continuous
convex combination of normal distributions (but not Gaussian) for quantum random walks in $\mathbb{Z}$.  For a comprehensive review of quantum walks, see Ref.~\citen{2012 review}. More recently,
Attal et al. \cite{2014 attal} showed the quantum version of central limit theorem for open quantum random walks. Subsequently, Xiong and Yang  \cite {2013 open} extended their results and showed the scaling limit of a partially open quantum random walk converges to a convex combination of Gaussian distributions. 

\subsubsection{Behavior of quantum walks on Sierpinski gasket.}  

  Flip-flop quantum random walk on Sierpinski gasket has been studied numerically \cite{2012 sierpinski}, where the simulation results show the diffusion exponent  $d_w $ lies between $1.92$ and $3.45$ (with average over some initial points).   The simulation shows that it is of a very small spreading rate as a sub-diffusive process.  Flip-flop quantum random walk has also been applied to spatial search algorithm on Sierpinski gasket \cite{PatelRaghunathan2012}, where simulation shows that the scaling behavior of the spatial search is determined by its spectral dimensions not the fractal dimensions.

\subsubsection{Quantum localization} 
 Another feature of quantum random walks is their localization.  This property is reminiscent of the Anderson localization in condensed matter physics \cite{Anderson1978}. Anderson showed that there is no quantum diffusion on disordered medium and the quantum states after a long time are centered around their initial states \cite{Anderson1978, WiersmaBLR1997}.  

For quantum random walks, Tregenna et al. \cite{TregennaFMK2003} showed numerically that the quantum random walk with Grover coin operator centered around its initial location with high probability. In Ref.~\citen{InuiKonishiKonno2004}, Inui, Konoshi and Konno showed that the 2-dimensional Grover quantum walks exhibits localization phenomenon and they also gave a criterion for occurrence of localization in terms of degrees of degeneracy of eigenstates: when the Hamiltonian  spectrum are degenerated in a rate proportional to the size of the system. It was shown that the localization may depend on the initial states, for some initial state there are localization and for other initial states localization disappears. It was pointed out by Tregenna et al. \cite{TregennaFMK2003} that this behavior can be used to control the Grover's search. 

 \subsection{Our results on the behavior quantum walks on Sierpinski gasket} 
 
 Despite all of the  interesting simulation results, quantum walks on fractal have not been  studied analytically  because the coin operator depends on the position and the method of Fourier transform on the fractals is not effective. Therefore computing the hitting probability and expectation of the passage time using combinatorics or Fourier transform used on $\mathbb{Z}^d$ do not work. In this paper, we analytically compute the exit distribution using path integrals and recursive Green function approaches. We will obtain an exact recursive relations for the amplitude functions and use them to compute the exit and recurrence probabilities and the expected exit and recurrence times. Using our approach, by Theorem \ref{th: g recursive} and  Theorem \ref{th:T formula} and with the aid of   Monte Carlo integration, we obtained the quantum analogues of $\beta_w$, $\gamma_w$ and $\delta_w$ and $\eta_w$. The detail definitions and values of which for the quantum random walk on Sierpinski gaskets will be summarized in Section \ref{sec:main results}.
 
 We also obtain some interesting behaviors of quantum random walks that are different from the corresponding classical random walks. For example, the quantum random walk starting with $|\mathbf{0}^0 \rangle$ will not exit at $|\mathbf{a_1}^5\rangle$ from the right bottom direction while the exiting probability for the classical random walks from the right bottom direction is positive. And with probability less than $1$, the particle will ever exit the boundary of $F^{(n)}$ and its reflection part while it is always 1 for the classical case. When the coin operator is uniform, the evolution operator is no longer unitary, then our approach applies to the classical random walk on Sierpinski gasket, and we obtain the limiting behavior of hitting probabilities for each direction of the boundary points for classical random walks.

\subsection{Open problems on the behavior of quantum walks on Sierpinski gasket}  

We have shown that solving the Dirichlet boundary value problem for amplitude functions on the space that has self-similar structures such as Sierpinski gasket is effective. It is not known however, that this general method is effective to other graphs. The problems of calculating 
$\beta_w$, ${\gamma_w}$ ,$\delta_w$, and $\eta_w$ for quantum random walks on general graphs without self-similarities remain open. 

\subsection{Organization of our paper} 

This paper is organized as follows. In Section 2, we give definitions and develop necessary tools - path integrals, hitting times, Green functions,  and solutions of quantum Poisson equations and quantum Dirichlet problem - for quantum random walks on Sierpinski gasket and give our main Theorem \ref{th: g recursive} and Theorem \ref{th:T formula} on the recursive relations of the amplitude Green functions. In Section 3, we give proofs to Lemma 1, and from Theorems \ref{th: g recursive} - \ref{th:T formula}. In Section 4, we apply the recursive formulas to obtain our recurrence of the quantum random walk Theorem \ref{th:recurrent} and obtain the numerical values of exponents listed on the Table \ref{tab:localization table}. Section 5 contains  the conclusion of this paper and discussions of some open problems in this direction.

\section{Quantum walks on Sierpinski gasket}

\subsection{Definitions, Notations and  Main Results}\label{sec:main results}

\subsubsection{Definitions and  notations}
The Sierpinski gasket, a fractal set with the overall shape of an triangle, generated recursively into more triangles with the same shape. Let $F^{(n)}$ be the  $n^{th}$ order Sierpinski gasket, see Fig.  \ref{fig Sierpinski Gasket} - \ref{fig Sierpinski Gasket order two} for $F^{(n)}$,  $n = 0,1,2.$ We define $F^{(\infty)} = \cup_{n=0}^\infty F^{(n)}$.
\begin{figure}[htb] 
\begin{minipage}{0.3\textwidth} 
\begin{tikzpicture}[scale = 0.35]
\begin{axis}[dashed, xmin=0,ymin=0,xtick = {0,1,2},ytick = {0,1},xmax=8.4,ymax=4.5]
\addplot[mark = none] coordinates{(1,0) (1,5)};
\addplot[mark = none] coordinates{(2,0) (2,5)};
\addplot[mark = none] coordinates{(3,0) (3,5)};
\addplot[mark = none] coordinates{(4,0) (4,5)};
\addplot[mark = none] coordinates{(5,0) (5,5)};
\addplot[mark = none] coordinates{(6,0) (6,5)};
\addplot[mark = none] coordinates{(7,0) (7,5)};
\addplot[mark = none] coordinates{(8,0) (8,5)};
\addplot[mark = none] coordinates{(0,1) (9,1)};
\addplot[mark = none] coordinates{(0,2) (9,2)};
\addplot[mark = none] coordinates{(0,3) (9,3)};
\addplot[mark = none] coordinates{(0,4) (9,4)};
\addplot[mark = *, solid, thick] coordinates {(0, 0) (2,0) (1,1) (0,0)};
\addplot[mark=*] coordinates {(2,0)} node[label={$b_0$}]{} ;
\addplot[mark=*] coordinates {(1,1)} node[label={$a_0$}]{} ;
\end{axis}        
\end{tikzpicture}
\caption{$0^{th}$ order
Sierpinski Gasket $F^{(0)}$}
\label{fig Sierpinski Gasket}
\end{minipage}
\begin{minipage}{0.3\textwidth}
\begin{tikzpicture}[scale = 0.35] 
\begin{axis}[dashed, xmin=0,ymin=0,xtick = {0,1,...,4},ytick = {0,1,2},xmax=8.4,ymax=4.5]
\addplot[mark = none] coordinates{(1,0) (1,5)};
\addplot[mark = none] coordinates{(2,0) (2,5)};
\addplot[mark = none] coordinates{(3,0) (3,5)};
\addplot[mark = none] coordinates{(4,0) (4,5)};
\addplot[mark = none] coordinates{(5,0) (5,5)};
\addplot[mark = none] coordinates{(6,0) (6,5)};
\addplot[mark = none] coordinates{(7,0) (7,5)};
\addplot[mark = none] coordinates{(8,0) (8,5)};
\addplot[mark = none] coordinates{(0,1) (9,1)};
\addplot[mark = none] coordinates{(0,2) (9,2)};
\addplot[mark = none] coordinates{(0,3) (9,3)};
\addplot[mark = none] coordinates{(0,4) (9,4)};
\addplot[mark = *, solid, thick] coordinates {(0, 0) (2,0) (4,0) (3,1) (2,2) (1,1) (0,0)};
\addplot[mark = *, solid ,thick] coordinates {(1,1) (3,1) (2,0) (1,1)};
\addplot[mark=*] coordinates {(2,0)} node[label={$b_0$}]{} ;
\addplot[mark=*] coordinates {(1,1)} node[label={$a_0$}]{} ;
\addplot[mark=*] coordinates {(4,0)} node[label={$b_1$}]{} ;
\addplot[mark=*] coordinates {(2,2)} node[label={$a_1$}]{} ;
\end{axis}        
\end{tikzpicture}
\caption{$1^{st}$ order Sierpinski Gasket $F^{(1)}$}
\label{fig Sierpinski Gasket order one}
\end{minipage}
\begin{minipage}{0.35\textwidth} 
\begin{tikzpicture}[scale = 0.35]
\begin{axis}[dashed, xmin=0,ymin=0,xtick = {0,1,...,9},xmax=8.4,ymax=4.5]
\addplot[mark = none] coordinates{(1,0) (1,5)};
\addplot[mark = none] coordinates{(2,0) (2,5)};
\addplot[mark = none] coordinates{(3,0) (3,5)};
\addplot[mark = none] coordinates{(4,0) (4,5)};
\addplot[mark = none] coordinates{(5,0) (5,5)};
\addplot[mark = none] coordinates{(6,0) (6,5)};
\addplot[mark = none] coordinates{(7,0) (7,5)};
\addplot[mark = none] coordinates{(8,0) (8,5)};
\addplot[mark = none] coordinates{(0,1) (9,1)};
\addplot[mark = none] coordinates{(0,2) (9,2)};
\addplot[mark = none] coordinates{(0,3) (9,3)};
\addplot[mark = none] coordinates{(0,4) (9,4)};
\addplot[mark = *, solid, thick] coordinates {(0, 0) (2,0) (4,0) (6,0) (8,0) (7,1) (6,2) (5,3) (4,4) (3,3) (2,2) (1,1) (0,0)};
\addplot[mark = *, solid ,thick] coordinates {(1,1) (3,1) (2,0) (1,1)};
\addplot[mark = *, solid, thick] coordinates {(2,2) (4,2) (6,2) (5,1) (4,0) (3,1) (2,2)};
\addplot[mark = *, solid ,thick] coordinates {(3,3) (5,3) (4,2) (3,3)};
\addplot[mark = *, solid, thick] coordinates {(5,1) (7,1) (6,0) (5,1)};
\addplot[mark=*] coordinates {(2,0)} node[label={$b_0$}]{} ;
\addplot[mark=*] coordinates {(1,1)} node[label={$a_0$}]{} ;
\addplot[mark=*] coordinates {(4,0)} node[label={$b_1$}]{} ;
\addplot[mark=*] coordinates {(2,2)} node[label={$a_1$}]{} ;
\addplot[mark=*] coordinates {(8,0)} node[label={$b_2$}]{} ;
\addplot[mark=*] coordinates {(4,4)} node[label={$a_2$}]{} ;

\end{axis}        
\end{tikzpicture}
\caption{$2^{nd}$ order Sierpinski Gasket $F^{(2)}$}
\label{fig Sierpinski Gasket order two}
\end{minipage}
\end{figure}

The $n^{th}$ order Sierpinski gasket $F^{(n)}$ is a degree-4 regular graph except for overall corners. Let $V_{n}$ be the number of vertices for $F^{(n)}$. Then $V_{0} = 3$, $V_{1} = 6$, $V_{2} = 15$ and it is easy to check that $V_{n} = 3V_{n-1}-3 = 3^{n+1}-3^n-3^{n-1}-\cdots -3$ = $\frac{3(3^n+1)}{2}$. Let $S_n$ be the number of vertices  in the $x$ direction, for example, $S_0 = 2$, $S_1 = 3$, $S_2 = 5$, so $S_n = 2S_{n-1}-1 = 2^n+1$. By the definition of Hausdorff dimension $d_f$, as $n $ goes to infinity, 
\begin{equation*}
V_n \sim S_n^{d_f},
\end{equation*}
we have $d_f = \lim_{n\rightarrow \infty} \frac{\ln V_n}{\ln S_n}$ = $\frac{\ln3}{\ln2}$, which is bigger than the dimension of the line and smaller than that of the plane. Classical random walks on the Sierpinski gasket have been studied in Ref.~\citen{2005 diffusion}. It has shown a recursive formula for the hitting probabilities, and that the expectation of the passage time on the $(n+1)^{th}$ generation is 5 five times the expectation of the passage time on the $n^{th}$ generation. It then follows that  the diffusion exponent $d_w = \frac{\ln 5}{\ln 2}.$  A self-avoiding random walk model on the Sierpinski gasket has also been studied by using recursive formulas in Ref.~\citen{2016 Kumiko}. Using recursive formulas, Chang et al. studied lattice trees on the Sierpinski gasket \cite{ChangChenYang2007}. Despite existences of these classical results, recursive method cannot be applied directly to quantum random walks on graphs other than $Z^1$ due to the differences between amplitude functions and probability distributions and, moreover, the solution of amplitude Dirichlet problem has not been known.

The $n^{th}$ order Sierpinski gasket $F^{(n)}$ is a degree-4 regular graph except overall corners. 
For the $n^{th}$ order Sierpinski gasket $F^{(n)}$ embedded in the two-dimensional plane $\mathbb{R}^2$, any point in $F^{(n)}$ can be represented by its coordinates $x=(x_1, x_2)$,  with $0 \leq x_1 \leq 2^{n+1}$, and $0 \leq x_2 \leq 2^n$ restricted to be on the gasket, see Fig.  \ref{fig Sierpinski Gasket} - \ref{fig Sierpinski Gasket order two} for $F^{(n)}$,  $n = 0,1,2.$

 For the directions coming out of  each point, we denote  $\mathbf{e}_0  = (2,0) $, $\mathbf{e}_1  = (1,1) $, $\mathbf{e}_2  = (-1,1) $, $\mathbf{e}_3 = (-2,0) $, $\mathbf{e}_4  = (-1,-1) $, $\mathbf{e}_5  = (1,-1)  $ as in Figure \ref{fig:computational basis}. Note that only 4 of these directions will be used at each point for the random walk, so  we define $out(\mathbf{x})$ to be the set of these four direction vectors. For $\mathbf{a_n}$ and $\mathbf{b_n}$, we treat $F^{(n)}$ as part of $F^{(n+1)}$, so $out(\mathbf{a_n}) = \{\mathbf{e}_0,\mathbf{e}_1,\mathbf{e}_4,\mathbf{e}_5\}$ and $out(\mathbf{b_n}) = \{\mathbf{e}_0,\mathbf{e}_1,\mathbf{e}_2,\mathbf{e}_3\}$. As for $\mathbf{0}$, we extend $F^{(n)}$ as follows. If we think $\mathbf{0},\mathbf{a_n},\mathbf{b_n}$ as boundary points of $F^{(n)}$, then we call reflection of $F^{(n)}$ denoted by $F^{(n)'}$ with boundary points $\mathbf{0},\mathbf{a'_n},\mathbf{b'_n}$, here $\mathbf{a'_n} = -\mathbf{a_n} =(-2^n,-2^n)$ and $\mathbf{b'_n}  = (2^n,-2^n) $, see Figure \ref{fig:F prime}.  Therefore, $out(\mathbf{0}) = \{\mathbf{e}_0,\mathbf{e}_1,\mathbf{e}_4,\mathbf{e}_5\}$.

\begin{figure} 
\begin{minipage}{0.45\textwidth} 
\begin{tikzpicture}[scale = 0.65]
\begin{axis}[dashed, xmin=-5,ymin=-5,xticklabels = {,,}, yticklabels = {},xmax=9,ymax=5,]
\addplot[mark = none] coordinates{(-5,0) (9,0)};
\addplot[mark = none] coordinates{(-5,1) (9,1)};
\addplot[mark = none] coordinates{(-5,2) (9,2)};
\addplot[mark = none] coordinates{(-5,3) (9,3)};
\addplot[mark = none] coordinates{(-5,4) (9,4)};
\addplot[mark = none] coordinates{(-5,-1) (9,-1)};
\addplot[mark = none] coordinates{(-5,-2) (9,-2)};
\addplot[mark = none] coordinates{(-5,-3) (9,-3)};
\addplot[mark = none] coordinates{(-5,-4) (9,-4)};
\addplot[mark = none] coordinates{(-5,-1) (5,-1)};
\addplot[mark = none] coordinates{(-5,-2) (5,-2)};
\addplot[mark = none] coordinates{(0,-5) (0,9)};
\addplot[mark = none] coordinates{(1,-5) (1,9)};
\addplot[mark = none] coordinates{(2,-5) (2,9)};
\addplot[mark = none] coordinates{(3,-5) (3,9)};
\addplot[mark = none] coordinates{(4,-5) (4,9)};
\addplot[mark = none] coordinates{(5,-5) (5,9)};
\addplot[mark = none] coordinates{(6,-5) (6,9)};
\addplot[mark = none] coordinates{(7,-5) (7,9)};
\addplot[mark = none] coordinates{(8,-5) (8,9)};
\addplot[mark = none] coordinates{(-1,-5) (-1,9)};
\addplot[mark = none] coordinates{(-2,-5) (-2,9)};
\addplot[mark = none] coordinates{(-3,-5) (-3,9)};
\addplot[mark = none] coordinates{(-4,-5) (-4,9)};
\addplot[mark = *, solid, thick] coordinates {(0, 0) (2,0) (4,0) (6,0) (8,0) (7,1) (6,2) (5,3) (4,4) (3,3) (2,2) (1,1) (0,0)};
\addplot[mark = *, solid ,thick] coordinates {(1,1) (3,1) (2,0) (1,1)};
\addplot[mark = *, solid, thick] coordinates {(2,2) (4,2) (6,2) (5,1) (4,0) (3,1) (2,2)};
\addplot[mark = *, solid ,thick] coordinates {(3,3) (5,3) (4,2) (3,3)};
\addplot[mark = *, solid, thick] coordinates {(5,1) (7,1) (6,0) (5,1)};
\addplot[mark=*] coordinates {(0,0)} node[label={$0$}]{} ;
\addplot[mark=*] coordinates {(2,0)} node[label={$b_0$}]{} ;
\addplot[mark=*] coordinates {(1,1)} node[label={$a_0$}]{} ;
\addplot[mark=*] coordinates {(4,0)} node[label={$b_1$}]{} ;
\addplot[mark=*] coordinates {(2,2)} node[label={$a_1$}]{} ;
\addplot[mark=*] coordinates {(8,0)} node[label={$b_2$}]{} ;
\addplot[mark=*] coordinates {(4,4)} node[label={$a_2$}]{} ;
\addplot[mark = *, solid, thick] coordinates {(0, 0) (-1,-1) (-2,-2) (-3,-3) (-4,-4) (-2,-4) (0,-4) (2,-4) (4,-4) (3,-3) (2,-2) (1,-1) (0,0)};
\addplot[mark = *, solid ,thick] coordinates {(-1,-1) (1,-1) (0,-2) (-1,-1)};
\addplot[mark = *, solid, thick] coordinates {(-2,-2) (0,-2) (2,-2) (1,-3) (0,-4) (-1,-3) (-2,-2)};
\addplot[mark = *, solid ,thick] coordinates {(-3,-3) (-1,-3) (-2,-4) (-3,-3)};
\addplot[mark = *, solid, thick] coordinates {(1,-3) (3,-3) (2,-4) (1,-3)};
\addplot[mark=*] coordinates {(1,-1)} node[xshift=0pt,yshift=-4pt][label={$b'_0$}]{} ;
\addplot[mark=*] coordinates {(-1,-1)} node[xshift=0pt,yshift=-4pt][label={$a'_0$}]{} ;
\addplot[mark=*] coordinates {(2,-2)} node[xshift=0pt,yshift=-4pt][label={$b'_1$}]{} ;
\addplot[mark=*] coordinates {(-2,-2)} node[xshift=0pt,yshift=-4pt][label={$a'_1$}]{} ;
\addplot[mark=*] coordinates {(4,-4)} node[xshift=0pt,yshift=-4pt][label={$b'_2$}]{} ;
\addplot[mark=*] coordinates {(-4,-4)} node[xshift=0pt,yshift=-4pt][label={$a'_2$}]{} ;
\end{axis}        
\end{tikzpicture}
\caption{$F^{(2)} \cup F^{(2)'}$}
\label{fig:F prime}
\end{minipage}
\hspace{1cm}
\begin{minipage}{0.45\textwidth} 
\begin{tikzpicture}[scale = 0.65]
\begin{axis}[dashed, xmin=0,ymin=0,xticklabels = {,,},yticklabels = {,,}, xmax=8,ymax=4]
\addplot[mark = none] coordinates{(1,0) (1,5)};
\addplot[mark = none] coordinates{(2,0) (2,5)};
\addplot[mark = none] coordinates{(3,0) (3,5)};
\addplot[mark = none] coordinates{(4,0) (4,5)};
\addplot[mark = none] coordinates{(5,0) (5,5)};
\addplot[mark = none] coordinates{(6,0) (6,5)};
\addplot[mark = none] coordinates{(7,0) (7,5)};
\addplot[mark = none] coordinates{(8,0) (8,5)};
\addplot[mark = none] coordinates{(0,1) (9,1)};
\addplot[mark = none] coordinates{(0,2) (9,2)};
\addplot[mark = none] coordinates{(0,3) (9,3)};
\addplot[mark = none] coordinates{(0,4) (9,4)};
\addplot[mark=*] coordinates {(4,2)} node(0)[]{} ;
\addplot[mark=*] coordinates {(6,2)} node(1)[]{} ;
\addplot[mark=*] coordinates {(5,3)} node(2)[]{} ;
\addplot[mark=*] coordinates {(3,3)} node(3)[]{} ;
\addplot[mark=*] coordinates {(2,2)} node(4)[]{} ;
\addplot[mark=*] coordinates {(3,1)} node(5)[]{} ;
\addplot[mark=*] coordinates {(5,1)} node(6)[]{} ;
\draw [->,thick ,solid] (0) -- (1)node[xshift=-10pt,yshift=5pt] {$e_0$};
\draw [->,thick ,solid] (0) -- (2)node[xshift=0pt,yshift=-10pt] {$e_1$};
\draw [->,thick ,solid] (0) -- (3)node[xshift=0pt,yshift=-10pt] {$e_2$};
\draw [->,thick ,solid] (0) -- (4)node[xshift=10pt,yshift=5pt] {$e_3$};
\draw [->,thick ,solid] (0) -- (5)node[xshift=0pt,yshift=10pt] {$e_4$};
\draw [->,thick ,solid] (0) -- (6)node[xshift=0pt,yshift=10pt] {$e_5$};
\end{axis}
\end{tikzpicture}
\caption{computational basis}
\label{fig:computational basis}
\end{minipage}
\end{figure}

For quantum random walk on the Sierpinski gasket, we let $H^{(n)}_p = Span\{ |\mathbf{x}\rangle ; \mathbf{x} \in  F^{(n)} \cup F^{(n)'} \}$ be the position Hilbert space associated with $F^{(n)} \cup F^{(n)'}$, where   $\{ |\mathbf{x}\rangle ; \mathbf{x} \in  F^{(n)} \cup F^{(n)'} \}$ is a set of   orthonormal basis. Let $H_c$ be the coin space with computational orthonormal basis $\{|\mathbf{e}_i\rangle , 0 \leq i \leq 5 \}$.
The state space is defined by $H^{(n)} =  Span \{|\bar{\mathbf{x}}\rangle  = |\mathbf{x}\rangle \otimes |\mathbf{e}_i\rangle ,\mathbf{x}\in F^{(n)} \cup F^{(n)'}, \mathbf{e}_i \in out(\mathbf{x})\}$. $H^{(n)}$ is a subspace of $H_p^{(n)} \otimes H_c$. For simplicity, throughout this paper, we write the state space as $H^{(n)} =  Span \{|\bar{\mathbf{x}}\rangle  = |\mathbf{x}^i\rangle , \mathbf{x}\in F^{(n)} \cup F^{(n)'}, \mathbf{e}_i \in out(\mathbf{x})\}$.

Let $F^{(\infty)} = \cup_{n=0}^\infty F^{(n)}$ and $F^{(\infty)'} = \cup_{n=0}^\infty F^{(n)'}$, then the state space can be extended as 
$H^{(\infty)} =  Span \{|\mathbf{x}^i\rangle , \mathbf{x}\in F^{(\infty)} \cup F^{(\infty)'}, \mathbf{e}_i \in out(\mathbf{x})\}$.

The shift operator $S: H^{(\infty)} \rightarrow H^{(\infty)}$ is defined by 
$ S |\mathbf{x}^k\rangle  = |\mathbf{y}^j\rangle , $ for $j = k+3\  (mod\ 6)$, where $\mathbf{y} = \mathbf{x}+\mathbf{e}_k$.

Let G be the $4 \times 4$ Grover matrix $(g_{ij})$, where 
$g_{ij} =  -1/2$   if $ i = j$ and $g_{ij} =  1/2$   if $ i \ne j$. So,
           $$G = r
  \left[
 \begin{array}{ c c c c }
-1 & 1 & 1 & 1 \\
1 & -1 & 1 & 1 \\
1 & 1 & -1 & 1 \\
1 & 1 & 1 & -1 
\end{array} \right], \text{where} \  \displaystyle r = \frac{1}{2}.$$\\
Let $G_{\mathbf{x}}: H^{(\infty)} \rightarrow H^{(\infty)}$ be a local operator defined by
$$G_{\mathbf{x}}(|\mathbf{y}^i\rangle ) = \sum_{j: \mathbf{e}_j \in out(\mathbf{y})} g_{ji} |\mathbf{x}^j\rangle  \delta_{{\mathbf{x}}{\mathbf{y}}}, \forall {\mathbf{y}} \in F^{(\infty)}\cup F^{(\infty)'}, \mathbf{e}_i \in out(\mathbf{y}). $$
Define $\tilde{G} = \sum_{\mathbf{x}} G_{\mathbf{x}}$.
The evolution operator for the quantum random walk is defined by $U = S\tilde{G}$. Then $U: H^{(\infty)} \rightarrow H^{(\infty)}$ is a unitary operator.

Let $\psi_0 \in H^{(\infty)}$ and $\psi_t = U^t\psi_0$. The sequence $\{\psi_t\}_0^\infty$ is called a quantum random walk with the initial state $\psi_0$. 
Let $\psi_t = \sum_{\mathbf{e}_i \in out(\mathbf{x})}\sum_{\mathbf{x}\in F^{(\infty)} \cup F^{(\infty)'}}\psi_t(\mathbf{x},i)|\mathbf{x}^i\rangle $ be the quantum random walk at time $t$, where $\psi_t(\mathbf{x},i)$ is the coefficient at $|\mathbf{x}^i\rangle $. Let $p_t(\mathbf{x},i) = |\psi_t(\mathbf{x},i)|^2$ be the probability that the particle is at state $|\mathbf{x}^i\rangle $ at time $t$, and $p_t(\mathbf{x}) = \sum_{\mathbf{e}_i \in out(\mathbf{x})}p_t(\mathbf{x},i)$ be the probability that the particle is found at state $|\mathbf{x}\rangle $ at time $t$.

 We will use the path integral 
 as formulated as follows. 
 For the state space 
$Span \{|\mathbf{x}^i\rangle, \mathbf{x}\in F^{(\infty)}\cup F^{(\infty)'}, \mathbf{e_i} \in out(\mathbf{x})\}$, a path $w$ is defined by $w = (w_0, \dots, w_m)$, where $w_t = \mathbf{x}_t ^{k_t} $, and $\Vert \mathbf{x}_{t+1} - \mathbf{x}_t \Vert \le  2$ where $\Vert \mathbf{x} -\mathbf{y}\Vert = |x_1 - y_1|+|x_2-y_2|.$ The length of $w$ is defined by $|w| = m$. Let $\Omega^m = \{ w;|w|=m \}.$

\begin{definition}(Amplitude function)  \label{def1}

Let $$\phi (\mathbf{x}^i, \mathbf{y}^j) = 
\left\{
        \begin{array}{ll}
            -r & if \ \Vert \mathbf{x}-\mathbf{y} \Vert = 2, j = i+3\ (mod\ 6) \\
            r & if \ \Vert \mathbf{x}-\mathbf{y} \Vert = 2, j \neq i+3\ (mod\ 6), \mathbf{y} = \mathbf{x} + \mathbf{e}_k, j = k+3\ (mod\ 6) \\
            0 & otherwise
        \end{array}
    \right.$$
The amplitude function for $w \in  
\Omega^{m}$ is defined as 
\begin{equation}
 \Psi (w) =\prod_{t=0}^{m-1} \phi (w_t , w_{t+1}). 
\end{equation}
\end{definition}

\begin{definition}  Let $\Gamma \subseteq \Omega^m$ .Then the amplitude function of a $\Gamma $ is defined by 
\begin{equation}
 \Psi (\Gamma) =\sum_{w \in \Gamma}\Psi(w).
\end{equation}
Let $\Omega = \cup_{m=0}^{\infty} \Omega^m$. For $\Gamma \in \Omega$ with $\Gamma^m = \Gamma \cap \Omega^m$, we also define
\begin{equation*}
\Psi (\Gamma) =\sum_{m=0}^{\infty} \Psi(\Gamma^m).
\end{equation*}
\end{definition}

For any $\psi \in H^{(\infty)}$, we shall write $\psi = \sum_{\mathbf{e}_i \in out(\mathbf{x})}\sum_{\mathbf{x}\in F^{(\infty)} \cup F^{(\infty)'}}\psi(\mathbf{x},i)|\mathbf{x}^i\rangle $.
Our {\it{path integral formula}}, which is a variant of the one obatined for the discrete quantum random walks on $Z^d$  \cite{2007 Z^d}, takes the following form:  suppose $\psi_0 = |\mathbf{x}^i\rangle $, and $\psi_t = U^t \psi_0$, then 
$$\psi_t(\mathbf{y},j) = \Psi(w_0 = \mathbf{x}^i , w_{t} = \mathbf{y}^j).$$

The first exit time from level $n$ is defined by 
\begin{eqnarray}
 {\tau^{(n)}} = \inf \{ t\geq 1; w_t=x_t^{k_t}, x_t  \in {\partial F^{(n)}} \cup {\partial F^{(n)'}} \},\label{eq:first exit time}
 \end{eqnarray}
  where ${\partial F^{(n)}} \cup {\partial F^{(n)'}} = \{\mathbf{0}, \mathbf{a_n}, \mathbf{b_n},\mathbf{a'_n}, \mathbf{b'_n}\}$. We define the set of path that start with $\mathbf{x}^i$ and exit from $\mathbf{y}^j$, where $\mathbf{x},\mathbf{y}  \in {\partial F^{(n)}} \cup {\partial F^{(n)'}}$ at time $t$ as $\{w = (w_0 , \dots, w_{\tau^{(n)}}), w_0 = \mathbf{x}^i, w_{t} = \mathbf{y}^j ,\tau^{(n)} = t\}.$
  
  We also consider the first passage time at level $n$ defined as follows. Let $\partial (F^{(n)}\cup F^{(n)'}) := \{\mathbf{a_n}, \mathbf{b_n},\mathbf{a'_n}, \mathbf{b'_n}\}$. 
 Let 
 \begin{eqnarray}
 T^{(n)} = \inf \{ t\geq 1; w_t \in \partial (F^{(n)}\cup F^{(n)'}) \} \label{eq:first passage time}
 \end{eqnarray}
  be the first-passage time taken to exit $F^{(n)}\cup F^{(n)'}$ at the four vertices.

The amplitude Green function for quantum random walk is defined by
\begin{equation}
g^{(n)}(z)(\mathbf{x},\mathbf{y})_{j}^{i} =\sum_{t=1}^{\infty}z^t \Psi (w_0 = \mathbf{x}^i , w_{t} = \mathbf{y}^j ,\tau^{(n)} = t) .
\end{equation}
The probability that a quantum random walk starts with $\mathbf{x}^i$ and exits from $\mathbf{y}^j$ is given by
\begin{equation}\label{eq:quantum transition probability direction}
P^{(n)}(\mathbf{x},\mathbf{y})_{j}^{i} = \sum_{t=1}^{\infty}|\Psi (w_0 = \mathbf{x}^i , w_{t} = \mathbf{y}^j ,\tau^{(n)} = t)|^2 .
\end{equation}
The total probability that a quantum random walk starts with $\mathbf{x}^i$ and exits from $\mathbf{y}^j$ for any $j$ is given by
\begin{equation}\label{eq:quantum transition probability}
P^{(n)}(\mathbf{x},\mathbf{y})_{}^{i}=\sum _{j} P^{(n)}(\mathbf{x},\mathbf{y})_{j}^{i} .
\end{equation}
By Parseval's identity, we have
\begin{equation}\label{eq:probability parseval}
P^{(n)}(\mathbf{x},\mathbf{y})_{j}^{i} = \frac{1}{2\pi}\int_0^{2\pi} |g^{(n)}(e^{i\theta})(\mathbf{x},\mathbf{y})_{j}^{i}|^2 d\theta. 
\end{equation}
On $F^{(n)}$, consider the exiting amplitude Green matrix $g^{(n)}(z)$ as
\begin{scriptsize}
\begin{equation*}
\bordermatrix{~& \mathbf{0}^0& \mathbf{0}^1 & \mathbf{0}^4 & \mathbf{0}^5& \mathbf{a_n}^0&\mathbf{a_n}^1&\mathbf{a_n}^4&\mathbf{a_n}^5 & \mathbf{b_n}^0&\mathbf{b_n}^1&\mathbf{b_n}^2&\mathbf{b_n}^3         \cr
                  \mathbf{0}^0 &{g^{(n)}(z)(\mathbf{0},\mathbf{0})}_{0}^{0}  &&&\cdots&&&&\cdots&&&&{g^{(n)}(z)(\mathbf{0},\mathbf{b_n})}_{3}^{0}                                     \cr 
                  \mathbf{0}^1                                                  \cr
                  \mathbf{0}^4                                                  \cr
                  \mathbf{0}^5     &\vdots             &&&\ddots      &&&&&&&&\vdots                       \cr
                  \mathbf{a_n}^0                            \cr   
                  \mathbf{a_n}^1                           \cr
                  \mathbf{a_n}^4                           \cr
                  \mathbf{a_n}^5  &\vdots      &&&&&&&\ddots     &&&&\vdots               \cr
                  \mathbf{b_n}^0            \cr
                  \mathbf{b_n}^1            \cr
                  \mathbf{b_n}^2            \cr
                  \mathbf{b_n}^3 &{g^{(n)}(z)(\mathbf{b_n},\mathbf{0})}_{0}^{3}        &&&\cdots&&&&\cdots&&&&  {g^{(n)}(z)(\mathbf{b_n},\mathbf{b_n})}_{3}^{3} \cr}.
\end{equation*}
\end{scriptsize}

To facilitate our recursive scheme, we shall use the following general amplitude functions based on general transition amplitudes. Instead of using $\phi$, we consider any complex-valued function 
$\rho (\mathbf{x}^i, \mathbf{y}^j) $, for $\mathbf{x}^i, \mathbf{y}^j$, $ \mathbf{x}, \mathbf{y} \in F^{(\infty)} \cup F^{(\infty)'}, \mathbf{e}_i \in out(\mathbf{x}), \mathbf{e}_j \in out(\mathbf{y})$, $||\mathbf {x}-\mathbf{y}|| =0$ or $2$.
The amplitude function for $w \in  
\Omega^{m}$ associated with $\rho$ is defined by
\begin{equation}
 \Psi _{\rho}(w) =\prod_{t=0}^{m-1} \rho (w_t , w_{t+1}). 
\end{equation}

Similarly,  for $\Gamma \subseteq \Omega^m$, the amplitude function of a $\Gamma $ associated with $\rho$ is defined by 
\begin{equation}
 \Psi _{\rho} (\Gamma) =\sum_{w \in \Gamma}\Psi_{\rho}(w).
\end{equation}
Let $\Omega = \cup_{m=0}^{\infty} \Omega^m$. For $\Gamma \subset \Omega$ with $\Gamma^m = \Gamma \cap \Omega^m$, we also define
\begin{equation*}
\Psi_\rho (\Gamma) =\sum_{m=0}^{\infty} \Psi_\rho(\Gamma^m).
\end{equation*}

The amplitude Green function for quantum random walk associated with $\rho$ is defined by
\begin{equation}\label{eq:Green rho}
g_{\rho}^{(n)}(\mathbf{x},\mathbf{y})_{j}^{i} =\sum_{t=1}^{\infty} \Psi_\rho (w_0 = \mathbf{x}^i , w_{t} = \mathbf{y}^j ,\tau^{(n)} = t) .
\end{equation}

Note that we have omitted the factor $z$ in the above definition, since $z$ can be absorbed in $\rho$.

With these definitions, we have 
\begin{eqnarray}\label{en:general Green function}
g^{(n)}(z)(\mathbf{x},\mathbf{y})_{j}^{i}=g_{\rho^{(0)}}^{(n)}(\mathbf{x},\mathbf{y})_{j}^{i},
\end{eqnarray}
where $\rho^{(0)}=z\phi$.

\subsubsection{Main results}

If the walk is localized in the sense that $\lim_{n \to \infty}P^{(n)} (0, 0) =1$, we want to find the localization exponents and, in particular, in each direction, defined as follows.
\begin{definition}\label{def:localization exponent}
$$\beta_w(0, 0)^i_j =\lim_{n \to \infty}  \frac{-\ln |P^{(n)} (0, 0)^i_j-P^{(\infty)} (0, 0)^i_j|}{n \ln 2},$$
$$\gamma_w (0, a_n)^i_ j=\lim_{n \to \infty}  \frac{-\ln P^{(n)} (0, a_n)^i_j}{n \ln 2},$$
where $P^{(\infty)} (0, 0)^i_j = \lim_{n\to  \infty} P^{(n)} (0, 0)^i_j$. $\beta_w$ and $\gamma_w$ are called 
 the localization exponents. They are $4 \times 4$ matrices. 
Here $P^{(\infty)}(0, 0)^i_j$ is the probability of the walk starting at the initial state $0^i$   and return to $0^j$  before reaching the boundary of level $n$ fractal.   $P^{(n)}(0, a_n) ^i_j$ is the probability of the walk reaching the boundary point $a{_n}^{j}$ of level $n$ fractal before coming back to the initial point; see (\ref{eq:quantum transition probability}) for their  definitions.  
\end{definition}

In the following definition, note that if $x \in F^{(1)} \cup F^{(1)'}$. Then $2^n x \in F^{(n)} \cup F^{(n)'}$. 
For $x \in F^{(1)} \cup F^{(1)'}$, 
let $p^{(n)} (0, x)^i_j=P^{(n)} (0, 2^nx)^i_j$ and $p^{(\infty)} (0, x)^i_j=\lim_{n \to \infty} p^{(n)} (0, x)^i_j$  be the re-scaled limiting distribution of the first passage distribution of level $n$ as $n \to \infty$. 
\begin{definition}\label{def:d exponent} Let $d^{(n)}=|| p^{(n)}-p^{(\infty)}||_{TV}=\frac{1}{2} \sum_{i, j}\sum_{x}|p^{(n)}(0, x)^i_j-p^{(\infty)}(0, x)^i_j|$ be the total variation distance of $p^{(n)}$ and $p^{(\infty)}$. Then we define 
$\delta_w=\lim_{n \to \infty}  \frac{-\ln d^{(n)}}{n \ln 2}$.
\end{definition}

\begin{definition}\label{def:entropy exponent} Let $H^{(n)}=-\sum_{i, j} \sum_{x}p^{(\infty)}(0, x)^i_j \ln p^{(n)}(0, x)^i_j$ be the relative entropy of $p^{(n)} $ with respect to $p^{(\infty)}$. Then we define 
$\eta_w=\lim_{n \to \infty}  \frac{-\ln H^{(n)}}{n \ln 2}$.
\end{definition}

Our results for quantum random walks on Sierpinski gaskets are summarized and compared to known results in the following table.

\begin{table}
\begin{center}
\begin{tabular}{ |p{2cm}||p{6.5cm}|p{6.5cm}|} 
\hline
\textbf{} & \textbf{Quantum} & \textbf{Classical} \\
 \hline   
 \hline

${P^{(\infty)}}(\mathbf{0},\mathbf{0})$ & \begin{scriptsize} $\bordermatrix{~& 0 & 1 & 4              & 5           \cr
                                                       0 & 0 & 0 & 0.25 & 0.25     \cr
                                                       1 & 0 & 0 & 0.25 & 0.25  \cr
                                                       4 & 0 & 0 & 0.25  & 0.25   \cr
                                                       5 & 0 & 0 & 0.25 & 0.25      \cr}$ \end{scriptsize}   & \begin{scriptsize}  $\bordermatrix{~& 0 & 1 & 4              & 5           \cr
                                                       0 & 0 & 0 & 0.25& 0.25     \cr
                                                       1 & 0 & 0 & 0.25 & 0.25  \cr
                                                       4 & 0 & 0 & 0.25 & 0.25    \cr
                                                       5 & 0 & 0 & 0.25 & 0.25      \cr}$  \end{scriptsize}  \\ 

\hline

${P^{(\infty)}}(\mathbf{0},\mathbf{a_n})$ & \begin{scriptsize} $\bordermatrix{~& 0 & 1 & 4              & 5           \cr
                                                       0 & 0 & 0 & 0 & 0    \cr
                                                       1 & 0 & 0 & 0 & 0  \cr
                                                       4 & 0 & 0 & 0  & 0    \cr
                                                       5 & 0 & 0 & 0 & 0     \cr}$ \end{scriptsize}   & \begin{scriptsize}  $\bordermatrix{~& 0 & 1 & 4              & 5           \cr
                                                       0 & 0 & 0 & 0 & 0     \cr
                                                       1 & 0 & 0 & 0 & 0  \cr
                                                       4 & 0 & 0 & 0 & 0    \cr
                                                       5 & 0 & 0 & 0 & 0      \cr}$  \end{scriptsize}  \\ 

\hline

${\gamma_w}(\mathbf{0},\mathbf{a_n})$ & \begin{scriptsize} $\bordermatrix{~& 0 & 1 & 4              & 5           \cr
                                                       0 & NA & NA & 2.3011 & 2.4526      \cr
                                                       1 & NA & NA & 2.3011 & 2.4526   \cr
                                                       4 & NA & NA & 0.8547  & 0.8668    \cr
                                                       5 & NA & NA & 0.8547 & 0.8668      \cr}$ \end{scriptsize}   & \begin{scriptsize}  $\bordermatrix{~& 0 & 1 & 4              & 5           \cr
                                                       0 & NA & NA & 1.1758 & 0.5667      \cr
                                                       1 & NA & NA & 1.1758 & 0.5667  \cr
                                                       4 & NA & NA & 1.1758 & 0.5667    \cr
                                                       5 & NA & NA & 1.1758 & 0.5667      \cr}$  \end{scriptsize}  \\ 

\hline
${\beta_w}(\mathbf{0},\mathbf{0})$ & \begin{scriptsize} $\bordermatrix{~& 0 & 1 & 4              & 5           \cr
                                                       0 & 1.0229 & 1.0229 & 0.8609 & 0.8609       \cr
                                                       1 & 1.0229 & 1.0229 & 0.8609 & 0.8609   \cr
                                                       4 & 0.8609 & 0.8609 & 1.0229 & 1.0229  \cr
                                                       5 & 0.8609 & 0.8609 & 1.0229 & 1.0229     \cr} $ \end{scriptsize}&  \begin{scriptsize} $\bordermatrix{~& 0 & 1 & 4              & 5           \cr
                                                       0 & \frac{\ln 5 - \ln 3}{\ln 2} & \frac{\ln 5 - \ln 3}{\ln 2} & \frac{\ln 5 - \ln 3}{\ln 2} & \frac{\ln 5 - \ln 3}{\ln 2}      \cr
                                                       1 & \frac{\ln 5 - \ln 3}{\ln 2} & \frac{\ln 5 - \ln 3}{\ln 2} & \frac{\ln 5 - \ln 3}{\ln 2} & \frac{\ln 5 - \ln 3}{\ln 2}   \cr
                                                       4 & \frac{\ln 5 - \ln 3}{\ln 2}& \frac{\ln 5 - \ln 3}{\ln 2} & \frac{\ln 5 - \ln 3}{\ln 2} & \frac{\ln 5 - \ln 3}{\ln 2}   \cr
                                                       5 & \frac{\ln 5 - \ln 3}{\ln 2} & \frac{\ln 5 - \ln 3}{\ln 2} & \frac{\ln 5 - \ln 3}{\ln 2} & \frac{\ln 5 - \ln 3}{\ln 2}         \cr}$ \end{scriptsize}  \\  
                                                   
\hline
$\delta_w$  &   $0.9240$                                    & $\frac{\ln 5 - \ln 3}{\ln 2} \approx 0.7370$  \\ 
\hline
$\eta_w$  &  $1.1455$    & $0.9412$   \\ 
\hline    
\end{tabular}
\caption{\label{tab:localization table}Localization exponents}
\end{center}
\end{table}

\newpage

\subsection{The dimension reduction} \label{subsec:dimension reduction}

The matrix $g^{(n)}(z)$ contains $12\times12$ entries, with each one defined as in (2.3). In this subsection, we will exploit the symmetries of the graph and $G$ to reduce its dimensions when the transition function $\rho^{(0)}=z \phi$. The following lemma shows that it  requires only 6 free variables. 

\begin{lemma} \label{le:g reduction sierpinski}
There exist functions $u_1^{(n)}, u_2^{(n)}, u_3^{(n)}, u_4^{(n)}, u_5^{(n)}, u_6^{(n)}$ such that
$$
{g^{(n)}(\mathbf{0},\mathbf{a_n})} = \bordermatrix{~& 0 & 1 & 4              & 5           \cr
                                                       0 & 0 & 0 & u^{(n)}_4 & u^{(n)}_3       \cr
                                                       1 & 0 & 0 & -u^{(n)}_4 & -u^{(n)}_3   \cr
                                                       4 & 0 & 0 & u^{(n)}_1 & u^{(n)}_2    \cr
                                                       5 & 0 & 0 & u^{(n)}_1 & u^{(n)}_2       \cr},
$$
$$
{g^{(n)}(\mathbf{a_n},\mathbf{0})} = \bordermatrix{~& 0 & 1 & 4              & 5           \cr
                                                       0& u^{(n)}_2 & u^{(n)}_1 & 0 & 0       \cr
                                                       1& u^{(n)}_2 & u^{(n)}_1 & 0 & 0   \cr
                                                       4&-u^{(n)}_3 &-u^{(n)}_4 & 0 & 0    \cr
                                                       5&u^{(n)}_3 & u^{(n)}_4 & 0 & 0       \cr},
$$
$$
{g^{(n)}(\mathbf{0},\mathbf{b_n})} = \bordermatrix{~& 0 & 1 & 2              & 3           \cr
                                                       0 & 0 & 0 & -u^{(n)}_3 & -u^{(n)}_4       \cr
                                                       1 & 0 & 0 & u^{(n)}_3 & u^{(n)}_4   \cr
                                                       4 & 0 & 0 & u^{(n)}_2 & u^{(n)}_1    \cr
                                                       5 & 0 & 0 & u^{(n)}_2 & u^{(n)}_1       \cr},
$$
$$
{g^{(n)}(\mathbf{b_n},\mathbf{0})} = \bordermatrix{~& 0 & 1 & 4              & 5           \cr
                                                       0&u^{(n)}_1 &u^{(n)}_2 & 0 & 0       \cr
                                                       1& u^{(n)}_1 & u^{(n)}_2 & 0 & 0   \cr
                                                       2&u^{(n)}_4 &u^{(n)}_3 & 0 & 0    \cr
                                                       3&-u^{(n)}_4 & -u^{(n)}_3 & 0 & 0       \cr},
$$
$$
{g^{(n)}(\mathbf{a_n},\mathbf{b_n})} = \bordermatrix{~& 0 & 1 & 2              & 3           \cr
                                                       0 & 0 & 0 & u^{(n)}_1 & u^{(n)}_2       \cr
                                                       1 & 0 & 0 & u^{(n)}_1 & u^{(n)}_2   \cr
                                                       4 & 0 & 0 & u^{(n)}_4 & u^{(n)}_3    \cr
                                                       5 & 0 & 0 & -u^{(n)}_4 & -u^{(n)}_3      \cr},
$$
$$
{g^{(n)}(\mathbf{b_n},\mathbf{a_n})} = \bordermatrix{~& 0 & 1 & 4              & 5           \cr
                                                       0 & 0 & 0 & u^{(n)}_2 & u^{(n)}_1       \cr
                                                       1 & 0 & 0 & u^{(n)}_2 & u^{(n)}_1   \cr
                                                       2 & 0 & 0 & -u^{(n)}_3 & -u^{(n)}_4    \cr
                                                       3 & 0 & 0 & u^{(n)}_3 & u^{(n)}_4      \cr},
$$
$$
{g^{(n)}(\mathbf{0},\mathbf{0})}= \bordermatrix{~& 0 & 1 & 4              & 5           \cr
                                                       0 & u^{(n)}_5 & -u^{(n)}_5 & u^{(n)}_6 & u^{(n)}_6       \cr
                                                       1 & -u^{(n)}_5 & u^{(n)}_5 & u^{(n)}_6 & u^{(n)}_6   \cr
                                                       4 & u^{(n)}_6 & u^{(n)}_6 & u^{(n)}_5 & -u^{(n)}_5    \cr
                                                       5 & u^{(n)}_6 & u^{(n)}_6 & -u^{(n)}_5 & u^{(n)}_5       \cr},
$$
$$
{g^{(n)}(\mathbf{a_n},\mathbf{a_n})}= \bordermatrix{~& 0 & 1 & 4              & 5           \cr
                                                       0 & u^{(n)}_5 & -u^{(n)}_5 & u^{(n)}_6 & u^{(n)}_6       \cr
                                                       1 & -u^{(n)}_5 & u^{(n)}_5 & u^{(n)}_6 & u^{(n)}_6   \cr
                                                       4 & u^{(n)}_6 & u^{(n)}_6 & u^{(n)}_5 & -u^{(n)}_5    \cr
                                                       5 & u^{(n)}_6 & u^{(n)}_6 & -u^{(n)}_5 & u^{(n)}_5       \cr}, $$
$$
{g^{(n)}(\mathbf{b_n},\mathbf{b_n})} = \bordermatrix{~& 0 & 1 & 2              & 3          \cr
                                                       0 & u^{(n)}_5 & -u^{(n)}_5 & u^{(n)}_6 & u^{(n)}_6       \cr
                                                       1 & -u^{(n)}_5 & u^{(n)}_5 & u^{(n)}_6 & u^{(n)}_6   \cr
                                                       2 & u^{(n)}_6 & u^{(n)}_6 & u^{(n)}_5 & -u^{(n)}_5    \cr
                                                       3 & u^{(n)}_6 & u^{(n)}_6 & -u^{(n)}_5 & u^{(n)}_5       \cr}. $$

\end{lemma}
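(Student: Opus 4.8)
The plan is to establish all nine matrix identities simultaneously by exploiting two sources of symmetry: the geometric symmetries of $F^{(n)} \cup F^{(n)'}$ that permute the boundary points $\{\mathbf{0}, \mathbf{a_n}, \mathbf{b_n}, \mathbf{a'_n}, \mathbf{b'_n}\}$, and the fact that these symmetries, together with the symmetry of the Grover matrix $G$, act compatibly on the transition function $\phi$. First I would identify the relevant symmetry group acting on $F^{(n)} \cup F^{(n)'}$: it is generated by (i) the reflection $\sigma$ across the $x$-axis that swaps $F^{(n)}$ with $F^{(n)'}$, fixing $\mathbf{0}$ and swapping $\mathbf{a_n} \leftrightarrow \mathbf{a'_n}$, $\mathbf{b_n} \leftrightarrow \mathbf{b'_n}$, and (ii) the reflection $\mu$ of $F^{(n)}$ across the perpendicular bisector of the segment from $\mathbf{0}$ to $\mathbf{b_n}$ (equivalently, exploiting that $F^{(n)}$ is an equilateral-type gasket whose three outer corners $\mathbf{0}, \mathbf{a_n}, \mathbf{b_n}$ play symmetric roles). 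Each such graph automorphism $h$ induces a permutation on direction indices $\mathbf{e}_i \mapsto \mathbf{e}_{h(i)}$, and the key observation is that $\phi(\mathbf{x}^i,\mathbf{y}^j) = \phi((h\mathbf{x})^{h(i)},(h\mathbf{y})^{h(j)})$ because $\phi$ only depends on whether $j = i+3 \pmod 6$ and on the relative position, both of which are preserved by $h$. Consequently, for each path $w$ from $\mathbf{x}^i$ to $\mathbf{y}^j$ staying inside $F^{(n)} \cup F^{(n)'}$ until its exit, $\Psi(w) = \Psi(hw)$ and $hw$ is a path from $(h\mathbf{x})^{h(i)}$ to $(h\mathbf{y})^{h(j)}$ with the same exit time; since $h$ is a bijection on such path sets, we get $g^{(n)}(\mathbf{x},\mathbf{y})^i_j = g^{(n)}(h\mathbf{x},h\mathbf{y})^{h(i)}_{h(j)}$.

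Next I would use these equivariance relations to collapse the $12 \times 12$ matrix entries. Using the corner-permuting symmetries, all four "diagonal blocks" $g^{(n)}(\mathbf{0},\mathbf{0})$, $g^{(n)}(\mathbf{a_n},\mathbf{a_n})$, $g^{(n)}(\mathbf{b_n},\mathbf{b_n})$ are forced to be equal as matrices (after matching up the index labels $\{0,1,4,5\}$ versus $\{0,1,2,3\}$ according to which directions lie in $out$ of each corner), which immediately yields the last three displayed identities once we show each such block has the claimed internal structure. For the internal structure of $g^{(n)}(\mathbf{0},\mathbf{0})$: the stabilizer of $\mathbf{0}$ contains the reflection $\sigma$ across the $x$-axis, which swaps $\mathbf{e}_1 \leftrightarrow \mathbf{e}_5$ and $\mathbf{e}_0$ fixed... more carefully, I would work out that the stabilizer of the corner (acting on the four incident directions) is a group of order $2$ (or $4$) whose invariants force the entry pattern to depend on only two scalars $u_5^{(n)}$ (the "reflection-symmetric within a wing" entries, appearing with a sign from the $\phi$ sign rule) and $u_6^{(n)}$ (the "cross-wing" entries). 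The signs $\pm u_5^{(n)}$ arise precisely from the $-r$ versus $+r$ dichotomy in the definition of $\phi$ combined with the shift index bookkeeping $j = k+3 \pmod 6$. Similarly the off-diagonal blocks $g^{(n)}(\mathbf{0},\mathbf{a_n})$, $g^{(n)}(\mathbf{a_n},\mathbf{0})$, etc., are all related to one another by the corner-permuting symmetries, so it suffices to analyze one of them, say $g^{(n)}(\mathbf{0},\mathbf{a_n})$, introduce its four independent scalars $u_1^{(n)},\dots,u_4^{(n)}$, and then transport the pattern (with the appropriate index relabelings and induced signs) to the other five off-diagonal blocks.

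The bookkeeping of which scalar goes where, and with which sign, in each of the six off-diagonal blocks is the step I expect to be the main obstacle — not conceptually deep, but error-prone. The cleanest way to organize it is: (1) fix a reference symmetry for each pair $(\mathbf{0},\mathbf{a_n}) \to (\text{other corner pair})$, i.e. an explicit automorphism $h$ of $F^{(n)} \cup F^{(n)'}$ realizing the corner permutation, (2) record its action on the six direction labels $\{0,1,2,3,4,5\}$, (3) apply the equivariance $g^{(n)}(h\mathbf{x},h\mathbf{y})^{h(i)}_{h(j)} = g^{(n)}(\mathbf{x},\mathbf{y})^i_j$, being careful that when $h$ reverses orientation the sign in $\phi$ along a reflected edge can flip, contributing an overall sign to $\Psi(hw)$ relative to $\Psi(w)$ that must be tracked per block. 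I would present this as a short case analysis, one paragraph per block, citing the previously-established path-integral formula $\psi_t(\mathbf{y},j) = \Psi(w_0 = \mathbf{x}^i, w_t = \mathbf{y}^j)$ and the definition of $g^{(n)}$ in terms of exit-time-restricted path sums, and verifying that the zero entries (the $0$'s in columns/rows indexed by directions pointing "away" from the relevant corner, e.g. $\mathbf{e}_2,\mathbf{e}_3 \notin out(\mathbf{a_n})$ meaning a walk exiting at $\mathbf{a_n}$ cannot do so in direction $2$ or $3$, or rather the structural zeros forced by $\phi$) are consistent across all nine matrices. Finally I would remark that the same argument applies verbatim to the general amplitude Green functions $g_\rho^{(n)}$ whenever $\rho$ shares the symmetry of $\phi$, which is what is actually needed for the recursion in Theorem~\ref{th: g recursive}.
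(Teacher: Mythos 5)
Your symmetry-equivariance framework is fine for relating the \emph{different} blocks to one another (indeed the paper also disposes of $g^{(n)}(\mathbf{a_n},\mathbf{0})$, $g^{(n)}(\mathbf{0},\mathbf{b_n})$, the primed copies, etc.\ by symmetry), but there is a genuine gap where you expect only ``error-prone bookkeeping'': the \emph{within-block} structure cannot be obtained from automorphism equivariance at all. First, your proposed sign mechanism is wrong: for any graph automorphism $h$ of $F^{(n)}\cup F^{(n)'}$ acting consistently on the direction labels, $\phi(\mathbf{x}^i,\mathbf{y}^j)=\phi((h\mathbf{x})^{h(i)},(h\mathbf{y})^{h(j)})$ holds \emph{exactly}, with no sign flip, because the conditions ``$j=i+3$'' and ``$j=k+3$ where $\mathbf{y}=\mathbf{x}+\mathbf{e}_k$'' are preserved; so equivariance only ever yields sign-free equalities $g^{(n)}(\mathbf{x},\mathbf{y})^i_j=g^{(n)}(h\mathbf{x},h\mathbf{y})^{h(i)}_{h(j)}$ and can never produce the relations row$_1=-\,$row$_0$ in $g^{(n)}(\mathbf{0},\mathbf{a_n})$ or the $u^{(n)}_5$ versus $-u^{(n)}_5$ pattern in $g^{(n)}(\mathbf{0},\mathbf{0})$. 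Second, the stabilizer of the ordered pair $(\mathbf{0},\mathbf{a_n})$ acts as the identity on $F^{(n)}$ (it can only swap the two $F^{(n)'}$-side directions $\mathbf{e}_4\leftrightarrow\mathbf{e}_5$ at $\mathbf{0}$), so symmetry does give row$_4=$ row$_5$, but nothing whatsoever relates the coin-$0$ and coin-$1$ rows of the same block. Your plan therefore stalls precisely at the step that is the actual content of the lemma.

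The paper's missing ingredient is a factorization through the first coin flip rather than more symmetry. Define $\Theta^{(n)}(z)(\mathbf{x},\mathbf{y})^i_j$ as the same exit-time path sum but \emph{without applying the first rotation} $\tilde G$; then the dependence on the initial coin $i$ is only through which neighbor the first shift moves to, and the $S_3$ symmetry of $F^{(n)}$ (plus the reflection onto $F^{(n)'}$) reduces $\Theta^{(n)}$ to six scalars $\theta^{(n)}_1,\dots,\theta^{(n)}_6$ (first move along $\mathbf{e}_0$ or $\mathbf{e}_1$, final coin $4$ or $5$ at $\mathbf{a_n}$, or return to $\mathbf{0}$). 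Since $g^{(n)}(z)(\mathbf{0},\mathbf{a_n})^i_j=[G\,\Theta^{(n)}(z)(\mathbf{0},\mathbf{a_n})]_{ij}$, left-multiplying by the Grover matrix produces the stated pattern automatically, with
$u^{(n)}_1=\tfrac12(\theta^{(n)}_1+\theta^{(n)}_3)$, $u^{(n)}_2=\tfrac12(\theta^{(n)}_2+\theta^{(n)}_4)$, $u^{(n)}_3=\tfrac12(\theta^{(n)}_4-\theta^{(n)}_2)$, $u^{(n)}_4=\tfrac12(\theta^{(n)}_1-\theta^{(n)}_3)$, $u^{(n)}_5=\tfrac12(\theta^{(n)}_6-\theta^{(n)}_5)$, $u^{(n)}_6=\tfrac12(\theta^{(n)}_6+\theta^{(n)}_5)$;
the minus signs you tried to extract from reflections are exactly the $\pm\tfrac12$ entries of $G$ acting on the two nonzero rows of $\Theta^{(n)}$. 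Your observation about the structural zeros (a walk first reaching $\mathbf{a_n}$ must arrive with coin $4$ or $5$, and a walk leaving $\mathbf{0}$ into $F^{(n)'}$ cannot reach $\mathbf{a_n}$ before stopping) is correct and is also used implicitly in this computation.
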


\subsection{The recursive formula} 

 In this subsection, we will  find a recursive relation for $g^{(n)}$.

Given $g^{(n)} (z) (x, y)^i_j$, $x, y \in \{\mathbf{0}, \mathbf{a_n}, \mathbf{b_n}, \mathbf{a_n^{\prime}}, \mathbf{b_n^{\prime}} \}$, we let the $n$th level amplitudes 
\begin{eqnarray}\label{eq:rho n}
\rho^{(n)}(x^i, y^j)=g^{(n)}(z)(2^n x, 2^n y)^i_j,
\end{eqnarray}
for $x, y\in \{\mathbf{0}, \mathbf{a_0}, \mathbf{b_0}, \mathbf{a_0^{\prime}}, \mathbf{b_0^{\prime}} \}$, for all $n \ge 1$.   Then we extend  $\rho^{(n)}(x^i, y^j)$ by parallel translations to all $x, y \in F^{(\infty)} \cup F^{(\infty)'}$, with $||x-y||=0$ or $ 2$. So $g_{\rho ^{(n)}}^{(1)}(x, y)^i_j$ is well-defined for all $x, y \in F^{(1)} \cup F^{(1)'}$.

For convenience, we also write   $\rho^{(n)}(x^i, y^j)=\rho^{(n)}(x, y)^i_j$.
By the self-similar property of the Sierpinski gasket, we have the following recursive relations.
\begin{theorem} \label{th: g recursive}
(a) For $ n \ge 0$, we have 
\begin{eqnarray}
 g_{\rho^{(0)}}^{(n+1)}(x, y)^i_j = g^{(1)}_{\rho^{(n)}}(\frac{1}{2^n} x, \frac{1}{2^n} y)^i_j,
 \end{eqnarray}
  for all $x  \in 2^n(F^{(1)} \cup F^{(1)'})$, $y  \in 2^{n+1}(F^{(0)} \cup F^{(0)'})$.

(b) \begin{eqnarray}\label{eq:recursive relation rho} 
\rho ^{(n+1)} (x, y)=g^{(1)}_{\rho ^{(n)} }(2x, 2y),
\end{eqnarray}
for all $x, y  \in F^{(0)} \cup F^{(0)'}$.
\end{theorem}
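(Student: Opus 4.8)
\textbf{Proof proposal for Theorem \ref{th: g recursive}.}

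The plan is to establish the recursion by a decomposition of paths in $F^{(n+1)}\cup F^{(n+1)'}$ according to their visits to the ``intermediate boundary'' — the images under the similarity map of the level-$n$ corner vertices — and then to recognize the resulting sum as precisely the definition of $g^{(1)}$ with transition amplitudes given by the level-$n$ Green function $\rho^{(n)}$. Concretely, $F^{(n+1)}\cup F^{(n+1)'}$ is built from scaled copies of $F^{(n)}\cup F^{(n)'}$ (each scaled by $2^n$ and translated) glued at the vertices $2^n x$, $x\in F^{(1)}\cup F^{(1)'}$; any path $w=(w_0,\dots,w_{\tau^{(n+1)}})$ that starts at $\mathbf{x}^i$ and exits $F^{(n+1)}\cup F^{(n+1)'}$ can be cut at the successive times it hits one of these scaled-copy corner vertices. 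Between two consecutive such hitting times, the path stays inside a single scaled copy of $F^{(n)}\cup F^{(n)'}$, so that sub-path's amplitude, summed over all sub-paths with the given endpoints and with $z^t$ weights, is by definition exactly $g^{(n)}(z)(2^n a, 2^n b)^k_\ell = \rho^{(n)}(a,b)^k_\ell$ for the appropriate corner vertices $a,b$ of $F^{(0)}\cup F^{(0)'}$ (after rescaling by $2^n$).

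First I would make the geometric decomposition precise: I would define the coarse-grained path $\hat w$ obtained from $w$ by recording only the states $w_{t_0},w_{t_1},\dots$ at the times $t_0<t_1<\cdots$ where $w$ lies at a corner vertex of one of the constituent scaled copies, and check that $\hat w$ is a legitimate path for the level-$1$ walk on $\tfrac{1}{2^n}F^{(n+1)}=F^{(1)}$ with $\|\hat w_{s+1}-\hat w_s\|\le 2$ in the rescaled coordinates, and that it exits $F^{(1)}\cup F^{(1)'}$ exactly when $w$ exits $F^{(n+1)}\cup F^{(n+1)'}$. Second, I would invoke multiplicativity of the amplitude function $\Psi_\rho$ over concatenation of paths (immediate from Definition \ref{def1}, since $\Psi_\rho$ is a product over consecutive pairs) to factor $\Psi_{\rho^{(0)}}(w)$ as a product of the amplitudes of the sub-paths, and then sum: $\sum_w$ over all $w$ with fixed coarse-grained skeleton $\hat w$ factors as a product, over the edges of $\hat w$, of sums over sub-paths confined to one scaled copy — and each such inner sum is $\rho^{(n)}$ evaluated at that edge. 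This yields $g_{\rho^{(0)}}^{(n+1)}(x,y)^i_j = \sum_{\hat w}\Psi_{\rho^{(n)}}(\hat w) = g^{(1)}_{\rho^{(n)}}(\tfrac{1}{2^n}x,\tfrac{1}{2^n}y)^i_j$, which is part (a). Part (b) is then the special case $x,y\in F^{(0)}\cup F^{(0)'}$ (so $2^{n+1}x, 2^{n+1}y$ are corner vertices of $F^{(n+1)}$, i.e. genuine exit points), combined with the definition (\ref{eq:rho n}) of $\rho^{(n+1)}$ and a rescaling of the argument of $g^{(1)}$ by $2$.

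A few technical points need care. One must check that the confinement claim — between consecutive corner-hitting times the path really does stay within one scaled copy — holds; this uses that the scaled copies of $F^{(n)}\cup F^{(n)'}$ overlap only at the corner vertices and that a single step has $\|\mathbf{x}_{t+1}-\mathbf{x}_t\|\le 2$, so the walk cannot jump across a copy without landing on a shared corner. One must also confirm the edge cases at the endpoints: the initial segment from $w_0=\mathbf{x}^i$ to the first corner hit, and the treatment of the first-exit stopping time $\tau^{(n+1)}$ so that the last sub-path terminates exactly at an exit vertex of the coarse graph (here the convention $\partial F^{(n)}\cup \partial F^{(n)'}=\{\mathbf 0,\mathbf a_n,\mathbf b_n,\mathbf a_n',\mathbf b_n'\}$ is used, and the set of intermediate corner vertices of $F^{(1)}\cup F^{(1)'}$ is disjoint from its boundary in the right way). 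Finally, the coin-state bookkeeping has to be tracked through the similarity map: a direction vector $\mathbf e_k$ at a vertex of $F^{(n)}$ corresponds under the $2^n$-scaling to the same $\mathbf e_k$ at the corresponding vertex of $F^{(1)}$, so the superscripts $i,j,k,\ell$ transport correctly, and the definition of $\rho^{(n)}$ extended by parallel translation is consistent with the gluing. I expect the main obstacle to be precisely this last point together with verifying that the sum over $\hat w$ converges absolutely so the interchange of summations is justified — i.e., that $\rho^{(n)}$ has small enough norm (for $|z|$ small, or on the unit circle after the analytic continuation implicit in (\ref{eq:probability parseval})) that $g^{(1)}_{\rho^{(n)}}$ is well-defined; this is where the self-similar structure and a Neumann-series / geometric-decay estimate on $\|\rho^{(n)}\|$ must be brought in.
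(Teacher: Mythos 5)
Your proposal is correct and follows essentially the same route as the paper's proof: you cut each level-$(n+1)$ path at its successive hitting times of the scaled vertex set $2^n(F^{(1)}\cup F^{(1)'})$, use multiplicativity of the amplitude to factor the sum over paths with a fixed coarse-grained skeleton into a product of inner sums, identify each inner sum with $g^{(n)}=\rho^{(n)}$, and re-sum over skeletons to obtain $g^{(1)}_{\rho^{(n)}}$, with (b) following from (a) and (\ref{eq:rho n}). This is exactly the decomposition via the hitting times $\zeta^{(n)}_l$ and rescaled path $\tilde w$ used in the paper; your additional remarks on confinement, coin bookkeeping, and summability are sensible technical caveats that the paper treats only implicitly.
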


\subsection{The quantum Dirichlet problem and the solution of $g$} 

In view of Theorem \ref{th: g recursive} and (\ref{eq:recursive relation rho}), the iteration map that takes from $\rho^{(n)}$ to $\rho^{(n+1)}$ only depends on $g^{(1)}$ and a scaling factor by $2$. In classical random walk, the first exit distribution is the solution of the classical Dirichlet problem.  Its associated Green function is the solution of the corresponding Poisson equations. In quantum random walk, before we calculate the exit probability distributions, we first calculate the exit  amplitude functions. The exit amplitude function is the solution of quantum analog of Dirichlet problem and its associated Green function is the solution of the quantum analog of Poisson equations. 
In this subsection we will solve  for $g_{\rho}^{(1)}$ which is the solution of a quantum mechanical version of Poisson equation (Theorem \ref{th:Dirichlet g} below). 

To find $g_{\rho}^{(1)}$, for convenience of notations we consider the following graph on $F^{(1)}$ with vertices relabeled as 
{$\mathbf{0}=\mathbf{0}$, 
$\mathbf{a_0}=\mathbf{1}$, 
$\mathbf{b_0}=\mathbf{2}$, 
$\mathbf{a_0+b_0}=\mathbf{3}$, 
$\mathbf{a_1}=\mathbf{5}$, 
$\mathbf{b_1}=\mathbf{4}$, see Fig \ref{fig:relabel on F^{(1)}}.

By definition, 
$g_{\rho}^{(1)}(z)(\mathbf{x},\mathbf{y})_{j}^{i} =\sum_{t=1}^{\infty}\Psi _ \rho (w_0 = \mathbf{x}^i , w_{t} = \mathbf{y}^j ,\tau^{(1)} = t)$, where ${\tau^{(1)}} = \inf \{ t\geq 1; w_t=x_t^{i_t}, x_t  \in \{\mathbf{0},\mathbf{4},\mathbf{5}\}\}.$
\begin{figure}[htb]  
\begin{center}
\begin{tikzpicture}[scale = 0.7]
\begin{axis}[dashed, xmin=-1,ymin=-1,xticklabels = {,,}, yticklabels = {},xmax=5,ymax=3,]
\addplot[mark = none] coordinates{(-1,0) (5,0)};
\addplot[mark = none] coordinates{(-1,1) (5,1)};
\addplot[mark = none] coordinates{(-1,2) (5,2)};
\addplot[mark = none] coordinates{(0,-1) (0,3)};
\addplot[mark = none] coordinates{(1,-1) (1,3)};
\addplot[mark = none] coordinates{(2,-1) (2,3)};
\addplot[mark = none] coordinates{(3,-1) (3,3)};
\addplot[mark = none] coordinates{(4,-1) (4,3)};
\addplot[mark = *, solid, thick] coordinates {(0, 0) (2,0) (4,0) (3,1) (2,2) (1,1) (0,0)};
\addplot[mark = *, solid ,thick] coordinates {(1,1) (3,1) (2,0) (1,1)};
\addplot[mark=*] coordinates {(0,0)} node(1)[label={$\mathbf{0}$}]{} ;
\addplot[mark=*] coordinates {(1,1)} node(2)[label={$\mathbf{1}$}]{} ;
\addplot[mark=*] coordinates {(2,0)} node(3)[label={$\mathbf{2}$}]{} ;
\addplot[mark=*] coordinates {(3,1)} node(4)[label={$\mathbf{3}$}]{} ;
\addplot[mark=*] coordinates {(2,2)} node(5)[label={$\mathbf{5}$}]{} ;
\addplot[mark=*] coordinates {(4,0)} node(6)[label={$\mathbf{4}$}]{} ;
\addplot[mark=none] coordinates {(-1,-1)} node(-1)[]{} ;
\addplot[mark=none] coordinates {(1,-1)} node(-2)[]{} ;
\draw [->] (1) -- (3)node[xshift=-10pt,yshift=5pt] {$e_0$};
\draw [->] (1) -- (2)node[xshift=0pt,yshift=-10pt] {$e_1$};
\draw [->] (1) -- (-1)node[xshift=5pt,yshift=15pt] {$e_4$};
\draw [->] (1) -- (-2)node[xshift=-5pt,yshift=15pt] {$e_5$};
\addplot[mark=none] coordinates {(3,2)} node(32)[]{} ;
\addplot[mark=none] coordinates {(3,3)} node(33)[]{} ;
\draw [->] (5) -- (32)node[xshift=-10pt,yshift=5pt] {$e_0$};
\draw [->] (5) -- (33)node[xshift=0pt,yshift=-10pt] {$e_1$};
\draw [->] (5) -- (2)node[xshift=12pt,yshift=5] {$e_4$};
\draw [->] (5) -- (4)node[xshift=-12pt,yshift=5pt] {$e_5$};
\addplot[mark=none] coordinates {(5,0)} node(50)[]{} ;
\addplot[mark=none] coordinates {(5,1)} node(51)[]{} ;
\draw [->] (6) -- (50)node[xshift=-10pt,yshift=5pt] {$e_0$};
\draw [->] (6) -- (51)node[xshift=-5pt,yshift=-15pt] {$e_1$};
\draw [->] (6) -- (4)node[xshift=10pt,yshift=-5] {$e_2$};
\draw [->] (6) -- (3)node[xshift=12pt,yshift=5pt] {$e_3$};
\end{axis}        
\end{tikzpicture}
\caption{relabel on $F^{(1)}$}\label{fig:relabel on F^{(1)}}
\end{center}
\end{figure}
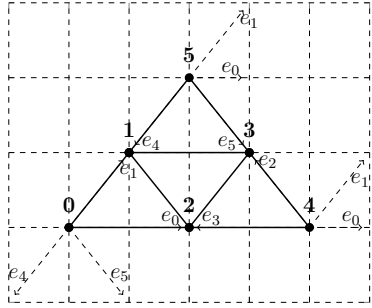

In the following theorem, Part (a)  is a quantum analog  of the classical Poisson equation and  Part (b) is its solution. 

\begin{theorem}\label{th:Dirichlet g}
(a) We have 
\begin{eqnarray}\label{eq:Poisson eq}
[I_{12}-\tilde {\rho}]g^{(1)}_{\rho}(\mathbf{x}, \mathbf{y})=\rho(\mathbf{x}, \mathbf{y}),
\end{eqnarray}
for all $\mathbf{x}=\mathbf{1}, \mathbf{2}, \mathbf{3}$, $\mathbf{y}=\mathbf{0}, \mathbf{4}, \mathbf{5}$, here $\tilde {\rho}$ is the $12 \times 12$ matrix obtained from $\rho $  with restriction  to components involving only $\mathbf{1}, \mathbf{2}, \mathbf{3}$ and all $k$ and $I_{12}$ is the $12 \times 12$ identity matrix.

(b) If $I_{12}-\tilde {\rho}$ is invertible, then for all $\mathbf{y}=\mathbf{0}, \mathbf{4}, \mathbf{5} $, we have 
\begin{equation}
 \begin{bmatrix}
g_{\rho}^{(1)}(\mathbf{1},\mathbf{y})  \\ g_{\rho}^{(1)}(\mathbf{2},\mathbf{y}) \\ g_{\rho}^{(1)}(\mathbf{3},\mathbf{y})
\end{bmatrix} = \frac{1}{I_{12}-\tilde {\rho}} \begin{bmatrix}
\rho(\mathbf{1},\mathbf{y})  \\ \rho(\mathbf{2},\mathbf{y})  \\ \rho(\mathbf{3},\mathbf{y})
\end{bmatrix}.
\end{equation}

\end{theorem}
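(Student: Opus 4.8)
The plan is to obtain part (a) from a first-step (renewal) decomposition of the amplitude Green function $g^{(1)}_\rho$, which is the amplitude-space analog of the elementary first-step analysis that yields the classical discrete Poisson equation, and then to deduce part (b) by a single matrix inversion. Throughout I keep the relabeling of $F^{(1)}$ in which $\{\mathbf{1},\mathbf{2},\mathbf{3}\}$ are the interior vertices and $\{\mathbf{0},\mathbf{4},\mathbf{5}\}$ is the absorbing boundary, so that $\tau^{(1)}=\inf\{t\ge 1:\,w_t\in\{\mathbf{0},\mathbf{4},\mathbf{5}\}\}$ and, by (\ref{eq:Green rho}), $g^{(1)}_\rho(\mathbf{x},\mathbf{y})^i_j=\sum_{t\ge 1}\Psi_\rho(w_0=\mathbf{x}^i,w_t=\mathbf{y}^j,\tau^{(1)}=t)$ with the multiplicative structure $\Psi_\rho(w)=\prod_{t}\rho(w_t,w_{t+1})$ of Definition \ref{def1}.

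Fixing $\mathbf{x}\in\{\mathbf{1},\mathbf{2},\mathbf{3}\}$, $\mathbf{y}\in\{\mathbf{0},\mathbf{4},\mathbf{5}\}$, and admissible coin labels $i,j$, I would classify each path contributing to $g^{(1)}_\rho(\mathbf{x},\mathbf{y})^i_j$ by its first move $w_1$. If $\tau^{(1)}=1$ then $w_1=w_{\tau^{(1)}}$ must equal $\mathbf{y}^j$, contributing exactly $\rho(\mathbf{x}^i,\mathbf{y}^j)=\rho(\mathbf{x},\mathbf{y})^i_j$ (the other $\tau^{(1)}=1$ moves land on a boundary vertex $\neq\mathbf{y}$ and hence are absent from this set). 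If $\tau^{(1)}\ge 2$ then $w_1=\mathbf{z}^k$ with $\mathbf{z}$ interior, i.e. $\mathbf{z}\in\{\mathbf{1},\mathbf{2},\mathbf{3}\}$ and $\mathbf{e}_k\in out(\mathbf{z})$; since $\Psi_\rho(w)=\rho(\mathbf{x}^i,\mathbf{z}^k)\,\Psi_\rho(w_1=\mathbf{z}^k,\dots,w_t=\mathbf{y}^j)$ and, after re-basing the time index by one, the tail $(w_1,\dots,w_t)$ ranges over exactly the paths from $\mathbf{z}^k$ to $\mathbf{y}^j$ with exit time $t-1$, summing over $t\ge 2$ and over the finitely many $(\mathbf{z},k)$ gives the scalar recursion $g^{(1)}_\rho(\mathbf{x},\mathbf{y})^i_j=\rho(\mathbf{x},\mathbf{y})^i_j+\sum_{\mathbf{z}\in\{\mathbf{1},\mathbf{2},\mathbf{3}\}}\sum_{k:\,\mathbf{e}_k\in out(\mathbf{z})}\rho(\mathbf{x},\mathbf{z})^i_k\,g^{(1)}_\rho(\mathbf{z},\mathbf{y})^k_j$.

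Reading $\tilde\rho$ as the $12\times 12$ matrix whose $(\mathbf{x}^i,\mathbf{z}^k)$ entry is $\rho(\mathbf{x}^i,\mathbf{z}^k)$ for interior $\mathbf{x},\mathbf{z}$, and stacking this scalar identity over the twelve pairs $(\mathbf{x},i)$ with $\mathbf{x}$ interior (for each fixed boundary state $\mathbf{y}^j$), the recursion becomes $g^{(1)}_\rho(\mathbf{x},\mathbf{y})=\rho(\mathbf{x},\mathbf{y})+\tilde\rho\,g^{(1)}_\rho(\mathbf{x},\mathbf{y})$, i.e. $[I_{12}-\tilde\rho]\,g^{(1)}_\rho(\mathbf{x},\mathbf{y})=\rho(\mathbf{x},\mathbf{y})$, which is (a). Part (b) is then immediate: under the hypothesis that $I_{12}-\tilde\rho$ is invertible, multiply on the left by $(I_{12}-\tilde\rho)^{-1}$ and read off the stated block formula for the stacked vector $\big(g^{(1)}_\rho(\mathbf{1},\mathbf{y}),g^{(1)}_\rho(\mathbf{2},\mathbf{y}),g^{(1)}_\rho(\mathbf{3},\mathbf{y})\big)^{\mathsf T}$.

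Before the rearrangements I would record a convergence remark to license the interchange of the $t$-sum with the finite sum over first moves and the time re-basing: because $F^{(1)}$ is finite and $\{\mathbf{0},\mathbf{4},\mathbf{5}\}$ is absorbing, the not-yet-exited amplitudes are governed by powers of $\tilde\rho$, so the defining series converges absolutely whenever the relevant $\rho$ (namely $\rho^{(0)}=z\phi$ with $|z|\le 1$, or more generally $\rho^{(n)}$) has spectral radius of $\tilde\rho$ strictly below $1$ — the same standing condition under which $g^{(1)}_\rho$ is defined and under which $I_{12}-\tilde\rho$ is automatically invertible. The work here is organizational rather than deep; the only genuinely delicate points are (i) that the first-move partition is exhaustive and disjoint — in particular that with $\mathbf{x}$ interior one always has $\tau^{(1)}\ge 1$, and that the distance-$0$ self-transitions of $\rho$ at interior vertices, which really do occur for $\rho^{(n)}$, are correctly swept into the diagonal blocks of $\tilde\rho$ — and (ii) keeping the row/column conventions of $\tilde\rho$ consistent so the twelve scalar recursions assemble into the single $12\times 12$ identity. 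I expect (ii), the index bookkeeping, to be the main practical obstacle.
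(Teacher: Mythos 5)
Your proposal is correct and follows essentially the same route as the paper: a first-step decomposition of the amplitude Green function over the first move $w_1$ (boundary hit contributing $\rho(\mathbf{x},\mathbf{y})^i_j$, interior move contributing $\rho(\mathbf{x},\mathbf{z})^i_k\,g^{(1)}_\rho(\mathbf{z},\mathbf{y})^k_j$), assembled into the $12\times 12$ identity and inverted for part (b). The paper merely packages the two cases via an auxiliary Green function $\bar g^{(1)}_\rho$ with exit time $\bar\tau^{(1)}=\inf\{t\ge 0;\dots\}$, which equals a Kronecker delta on the boundary, but this is only a bookkeeping variant of your case split.
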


\begin{remark} (Poisson equation)
If we let 
 $$\rho (\mathbf{x}^i, \mathbf{y}^j) = 
\left\{
        \begin{array}{ll}
            \frac{1}{4} & if \ \Vert \mathbf{x}-\mathbf{y} \Vert = 2, j = i+3\ (mod\ 6) \\
            \frac{1}{4} & if \ \Vert \mathbf{x}-\mathbf{y} \Vert = 2, j \neq i+3\ (mod\ 6), \mathbf{y} = \mathbf{x} + \mathbf{e}_k, j = k+3\ (mod\ 6) \\
            0 & otherwise
        \end{array}
    \right.$$
 then (\ref{eq:Poisson eq}) becomes the classical Poisson equation
\begin{eqnarray}\label{eq:classical Poisson eq}
\Delta g^{(1)}_{\rho}(\mathbf{x^i}, \mathbf{y^j})=-\rho(\mathbf{x^i}, \mathbf{y^j}),
\end{eqnarray}
for all $\mathbf{x}=\mathbf{1}, \mathbf{2}, \mathbf{3}$, $\mathbf{y}=\mathbf{0}, \mathbf{4}, \mathbf{5}$,  $i \in out(\mathbf{x}), j \in out(\mathbf{y})$, and  $\Delta $ is the discrete Laplace operator on  $\{\mathbf{x}^i; \mathbf{x} \in F^{(1)} \cup F^{(1)'}, i \in out(\mathbf{x})\}$  with $0$ boundary conditions on $\{\mathbf{x}^i; \mathbf{x} \in \partial F^{(1)} \cup \partial F^{(1)'}, i \in out(\mathbf{x})\}$.  

\end{remark}

The above Theorem gives the solution of $g_{\rho}^{(1)}(\mathbf{x},\mathbf{y})$ for interior points $\mathbf{x}$ of $F^{(1)} \cup F^{(1)'}$. For the boundary points $\mathbf{x}$, we use a similar argument as in the proof of Part (a) Theorem \ref{th:Dirichlet g} to obtain
\begin{eqnarray}\label{eq:g1 boundary 1}
g_\rho^{(1)}(\mathbf{0},\mathbf{5})
= \rho (\mathbf{0},\mathbf{1})g_\rho^{(1)}(\mathbf{1},\mathbf{5}) +\rho (\mathbf{0},\mathbf{2})g_\rho^{(1)}(\mathbf{2},\mathbf{5}).
\end{eqnarray}
For $i = 0,1,4,5$, $j = 0,1$,
\begin{eqnarray}\label{eq:g1 boundary 2}
g_\rho^{(1)}(\mathbf{0},\mathbf{0})^i_j 
 = \rho(\mathbf{0},\mathbf{0})^i_j + [\rho(\mathbf{0},\mathbf{1})g_\rho^{(1)}(\mathbf{1},\mathbf{0})]^i_j + [\rho(\mathbf{0},\mathbf{2})g_\rho^{(1)}(\mathbf{2},\mathbf{0})]^i_j.
\end{eqnarray}
Using the reflection $F^{(1)'}$ from $F^{(1)}$, we also have 
for $i = 0,1,4,5$, $j = 4,5$,
\begin{eqnarray}\label{eq:g1 boundary 3}
g_\rho^{(1)}(\mathbf{0},\mathbf{0})^i_j = \rho(\mathbf{5},\mathbf{5})^i_j + [\rho(\mathbf{5},\mathbf{1})g_\rho^{(1)}(\mathbf{1},\mathbf{5})]^i_j + [\rho(\mathbf{5},\mathbf{3})g_\rho^{(1)}(\mathbf{3},\mathbf{5})]^i_j.
\end{eqnarray} 
Other boundary points can be treated similarly. However, (\ref{eq:g1 boundary 1}) - (\ref{eq:g1 boundary 3}) will be  sufficient for our recursive procedure.

\subsection{The inversion of $I_{12}-\tilde {\rho}$} 

We now consider the initial transition function  $\rho^{(0)}=z\phi$. By Lemma \ref{le:g reduction sierpinski} and (\ref{eq:rho n}), $\rho^{(n)}=\rho^{(n)}(u^{(n)}_1,u^{(n)}_2,u^{(n)}_3,u^{(n)}_4,u^{(n)}_5,u^{(n)}_6)$ is a function of $u^{(n)}_1$ - $u^{(n)}_6$.
By Theorem \ref{th: g recursive} and (\ref{eq:recursive relation rho}), 
there is an iteration map   $T:\mathbb{C}^6 \rightarrow \mathbb{C}^6$, such that 
\begin{eqnarray}\label{eq:iteration map T}
(u^{(n+1)}_1,u^{(n+1)}_2,u^{(n+1)}_3,u^{(n+1)}_4,u^{(n+1)}_5,u^{(n+1)}_6) = T(u^{(n)}_1,u^{(n)}_2,u^{(n)}_3,u^{(n)}_4,u^{(n)}_5,u^{(n)}_6).
\end{eqnarray}
 By Theorem \ref{th: g recursive}, (\ref{eq:recursive relation rho}) and (\ref{eq:rho tilde n formula}) below, existence of $T$ and iteration of $\rho^{(n)}$ are equivalent.  Our goal is to find the map $T$. In view of Theorem \ref{th:Dirichlet g}, $I_{12}-\tilde{ \rho} $ is a $12\times 12$ matrix,  so we will need to find an efficient method  to compute its inverse for the first step  iteration. In this subsection we will  find an efficient method to compute the inverse of $I_{12}-\tilde{ \rho}^{(0)} $.

\begin{lemma}\label{le:fast inversion}
Suppose we reorder the matrix by switching $\mathbf{2}^0, \mathbf{2}^1$ with $\mathbf{2}^2, \mathbf{2}^3$ and $\mathbf{3}^2, \mathbf{3}^3$ with $\mathbf{3}^4,\mathbf{3}^5$ for both rows and columns. Then  

(a) $[\tilde {\rho}^{(n)}] $ has the following form $[\tilde {\rho}^{(n)}] = $
$$
\left[
\begin{array}{c|c|c}
A & B & C \\
\hline
C & A& B\\
\hline
B & C & A\\
\end{array}
\right].$$

(b) $-[I_{12} - \tilde {\rho}^{(n)}]^{-1}=$
$$
\left[
\begin{array}{c|c|c}
X & Y & Z\\
\hline
Z & X & Y\\
\hline
Y & Z & X\\
\end{array}
\right],$$
where 
\begin{eqnarray}
X&=&\bar{A}^{-1}(I_4-BZ-C Y)\\
Y&=&-(\bar{A}-B\bar{A}^{-1}C)^{-1}(C-B\bar{A}^{-1}B)Z-(\bar{A}-B\bar{A}^{-1}C)^{-1}B\bar{A}^{-1}\\
Z&=&H^{-1}D\\
H&=&\bar{A}-C\bar{A}^{-1}B-(B-C\bar{A}^{-1}C)(\bar{A}-B\bar{A}^{-1}C)^{-1}(C-B\bar{A}^{-1}B)\\
D&=&(B-C\bar{A}^{-1}C)(\bar{A}-B\bar{A}^{-1}C)^{-1}B\bar{A}^{-1}-C\bar{A}^{-1}\\
\bar{A}&=&A-I_4.
\end{eqnarray}

\end{lemma}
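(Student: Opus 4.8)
\emph{Proof plan.} The lemma splits into a structural claim (a) about $\tilde\rho^{(n)}$ and a mechanical block inversion (b); I would prove them in that order.

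For (a), the plan is to exploit the threefold symmetry of the inner triangle $\{\mathbf 1,\mathbf 2,\mathbf 3\}$ of $F^{(1)}$. As a graph, $F^{(1)}$ is the ``triforce'': $\mathbf 0,\mathbf 4,\mathbf 5$ are the outer corners and $\mathbf 1,\mathbf 2,\mathbf 3$ are the midpoints of the three outer edges, so $F^{(1)}$ carries a $\mathbb Z_3$ rotational automorphism $R$ that cyclically permutes the corners, correspondingly permutes $\mathbf 1,\mathbf 2,\mathbf 3$, and rotates coin directions by $R\,\mathbf e_i=\mathbf e_{i+2\ (\mathrm{mod}\ 6)}$. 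First I would check that $R$ commutes with the Grover coin $G$ and with the shift $S$; combined with (\ref{eq:rho n}) and the $R$-equivariant shape of the blocks recorded in Lemma \ref{le:g reduction sierpinski} (and, for $n=0$, the explicit form of $\rho^{(0)}=z\phi$), this shows $\rho^{(n)}$ is $R$-invariant for every $n$. Next I would observe that the reordering prescribed in the statement --- replacing the natural coin orderings $(0,1,4,5)$ at $\mathbf 1$, $(0,1,2,3)$ at $\mathbf 2$, $(2,3,4,5)$ at $\mathbf 3$ by $(0,1,4,5),(2,3,0,1),(4,5,2,3)$ respectively --- is exactly the ``advance the coin labels by $+2$ as you go around the triangle'' alignment that turns the $12\times12$ permutation realizing $R$ into the pure cyclic shift of the three size-$4$ blocks, with no permutation inside a block. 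With this alignment, $R$-invariance of $\rho^{(n)}$ says precisely that the $(a,b)$ block of $\tilde\rho^{(n)}$ depends only on $b-a\ (\mathrm{mod}\ 3)$; naming these blocks $A$ (for $b=a$, the common diagonal self-interaction block), $B$ (for $b-a\equiv 1$) and $C$ (for $b-a\equiv 2$) gives the asserted form, and the fact that all nine blocks actually occur is immediate from $\|\mathbf x-\mathbf y\|\in\{0,2\}$ for all $\mathbf x,\mathbf y\in\{\mathbf 1,\mathbf 2,\mathbf 3\}$, checked from the coordinates.

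For (b), set $M=-[I_{12}-\tilde\rho^{(n)}]=\begin{bmatrix}\bar A&B&C\\ C&\bar A&B\\ B&C&\bar A\end{bmatrix}$ with $\bar A=A-I_4$, so that (since $(-X)^{-1}=-X^{-1}$) the claim is $M^{-1}=\begin{bmatrix}X&Y&Z\\ Z&X&Y\\ Y&Z&X\end{bmatrix}$. Writing $\Pi$ for the $3\times3$ cyclic permutation matrix, $M=I_3\otimes\bar A+\Pi\otimes B+\Pi^2\otimes C$ commutes with $\Pi\otimes I_4$, hence so does $M^{-1}$; and any matrix commuting with $\Pi\otimes I_4$ is block-circulant of this same type, so $M^{-1}$ automatically has the stated shape for some $4\times4$ matrices $X,Y,Z$ and it only remains to pin them down. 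Reading off the first block-column of $MM^{-1}=I_{12}$ gives
\begin{eqnarray*}
\bar A X+BZ+CY&=&I_4,\\
CX+\bar A Z+BY&=&0,\\
BX+CZ+\bar A Y&=&0.
\end{eqnarray*}
I would then run block Gaussian elimination (all the inverses below are taken to exist, which is what invertibility of $I_{12}-\tilde\rho^{(n)}$ amounts to in this structure): the first equation gives $X=\bar A^{-1}(I_4-BZ-CY)$; substituting into the other two leaves
\[
(\bar A-C\bar A^{-1}B)Z+(B-C\bar A^{-1}C)Y=-C\bar A^{-1},\qquad
(C-B\bar A^{-1}B)Z+(\bar A-B\bar A^{-1}C)Y=-B\bar A^{-1};
\]
solving the second of these for $Y$ yields exactly the stated $Y=-(\bar A-B\bar A^{-1}C)^{-1}(C-B\bar A^{-1}B)Z-(\bar A-B\bar A^{-1}C)^{-1}B\bar A^{-1}$, and inserting this into the first collapses, after collecting the coefficient of $Z$, to precisely $HZ=D$ with $H$ and $D$ as in the statement, whence $Z=H^{-1}D$; back-substitution then returns $Y$ and $X$ in the stated forms.

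The genuinely delicate part is (a): tracking the coin-direction bookkeeping to confirm that the prescribed swap of coin pairs is exactly the alignment making $R$ act as a clean cyclic block shift, and verifying $R$-invariance of $\rho^{(n)}$ against the explicit blocks of Lemma \ref{le:g reduction sierpinski}. Once (a) is in hand, step (b) is routine block algebra, and --- thanks to the commutation observation --- one does not even need to guess the form of $M^{-1}$ in advance.
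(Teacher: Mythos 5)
Your proposal is correct in substance, but the two halves compare differently with the paper. For part (b) you take a genuinely different route: the paper simply exhibits the block-circulant candidate and checks by direct multiplication that $\bigl(\tilde\rho^{(n)}-I_{12}\bigr)$ times the proposed circulant of $X,Y,Z$ equals $I_{12}$, whereas you first argue a priori that the inverse must itself be block circulant (it commutes with $\Pi\otimes I_4$, and over $\mathbb{C}$ the commutant of $\Pi\otimes I_4$ is exactly the block circulants), and then obtain $X,Y,Z$ by block Gaussian elimination on the first block column of $MM^{-1}=I_{12}$. Your elimination does reproduce exactly the stated $X$, $Y$, $Z$, $H$, $D$, so this is a valid proof, and it has the advantage of deriving the formulas rather than merely verifying them. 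For part (a), by contrast, your symmetry narrative ultimately collapses to the paper's argument: the paper proves (a) by extending $\rho^{(n)}$ by parallel translation, performing the prescribed reordering, and writing out the full $12\times12$ matrix from Lemma \ref{le:g reduction sierpinski} and (\ref{eq:rho n}), from which the circulant pattern and the explicit blocks $A^{(n)},B^{(n)},C^{(n)}$ are read off; your ``verify $R$-invariance against the explicit blocks of Lemma \ref{le:g reduction sierpinski}'' is that same finite check, dressed in symmetry language.

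Two cautions. First, the assertion that the rotation $R$ ``commutes with the Grover coin and the shift'' cannot by itself yield $R$-invariance of $\rho^{(n)}$: the entries of $\tilde\rho^{(n)}$ are Green functions computed on $F^{(n)}\cup F^{(n)'}$, and neither the reflected copy nor the corner coin sets (e.g.\ $out(\mathbf{a_n})=\{\mathbf e_0,\mathbf e_1,\mathbf e_4,\mathbf e_5\}$ versus $out(\mathbf{b_n})=\{\mathbf e_0,\mathbf e_1,\mathbf e_2,\mathbf e_3\}$) are preserved by the threefold rotation, so $R$ is not an automorphism of the walk that defines $g^{(n)}$; the invariance of the restriction to $\mathbf 1,\mathbf 2,\mathbf 3$ really does have to be extracted from the block identities of Lemma \ref{le:g reduction sierpinski} together with the translation rule, which is what the paper's explicit write-out does. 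Second, invertibility of $I_{12}-\tilde\rho^{(n)}$ does not by itself guarantee invertibility of $\bar A$, $\bar A-B\bar A^{-1}C$ and $H$; these are additional (implicit) hypotheses needed for the displayed formulas — in your derivation just as in the paper's verification — so your parenthetical ``which is what invertibility of $I_{12}-\tilde\rho^{(n)}$ amounts to'' should be softened.
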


Note that in the above Lemma, Part (b), we only need to compute the inverse of a $4\times 4$ matrix instead of a $12\times 12 $ matrix.

\subsection{Beginning  iterations and local behavior}

 In this subsection, we will calculate the first two iterations and obtain exact solutions for the exit distributions for  n=1 and 2. These results show the local behavior of quantum random walks in a neighborhood of the origin.

 As an initial step, we find $u^{(0)}$ as follows. 
For $F^{(0)}$, 
$$
\rho^{(0)}(\mathbf{0},\mathbf{a_0})={g^{(0)}(\mathbf{0},\mathbf{a_0})} = \bordermatrix{~& 0 & 1 & 4              & 5           \cr
                                                       0 & 0 & 0 & rz & 0       \cr
                                                       1 & 0 & 0 & -rz & 0   \cr
                                                       4 & 0 & 0 & rz & 0    \cr
                                                       5 & 0 & 0 & rz & 0       \cr},
$$ 
$$
{\rho^{(0)}(\mathbf{0},\mathbf{0})}={g^{(0)}(\mathbf{0},\mathbf{0})} = \bordermatrix{~& 0 & 1 & 4              & 5           \cr
                                                       0 & 0 & 0 & 0 & 0       \cr
                                                       1 & 0 & 0 & 0 & 0   \cr
                                                       4 & 0 & 0 & 0& 0    \cr
                                                       5 & 0 & 0 & 0 & 0       \cr}.
$$
By Lemma \ref{le:g reduction sierpinski} and comparing  with ${g^{(n)}(\mathbf{0},\mathbf{a_0})} $ and ${g^{(n)}(\mathbf{0},\mathbf{0})}$ with $n=0$, we have

\begin{eqnarray}\label{eq:u0}
u^{(0)}_1 = rz,
 u^{(0)}_2 =0,
 u^{(0)}_3 = 0,
 u^{(0)}_4 = rz,
 u^{(0)}_5 = 0, 
 u^{(0)}_6 = 0.
\end{eqnarray}
The following theorem shows the results of the first iteration. 
\begin{theorem}\label{th:Sierpinski step 1}
Given initial $u^{(0)}_1 = rz,u^{(0)}_2 =0,u^{(0)}_3=0,u^{(0)}_4=rz,u^{(0)}_5=0,u^{(0)}_6=0$, we have
\begin{eqnarray}
u^{(1)}_1(z) &=& r^3z^3+r^2z^2-\frac{r^4z^4}{1+rz},\\
 u^{(1)}_2(z) &= & 2r^3z^3+\frac{2r^4z^4}{1+rz},\\
u^{(1)}_3(z) &=& 0,\\
u^{(1)}_4(z) &=& -r^3z^3+r^2z^2-\frac{3r^4z^4}{1+rz},\\
u^{(1)}_5(z) &=& r^3z^3+r^2z^2+\frac{3r^4z^4}{1+rz},\\
u^{(1)}_6(z) &=& r^3z^3-r^2z^2-\frac{r^4z^4}{1+rz}.
\end{eqnarray}

\end{theorem}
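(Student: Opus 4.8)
The plan is to apply Theorem \ref{th:Dirichlet g} and the boundary relations (\ref{eq:g1 boundary 1})--(\ref{eq:g1 boundary 3}) with the specific transition function $\rho^{(0)}=z\phi$ restricted to $F^{(1)}$ (with vertices relabeled $\mathbf{0},\mathbf{1},\mathbf{2},\mathbf{3},\mathbf{4},\mathbf{5}$ as in Fig.~\ref{fig:relabel on F^{(1)}}), and then read off $u_1^{(1)},\dots,u_6^{(1)}$ by matching with the reduced forms in Lemma \ref{le:g reduction sierpinski} at level $n=1$. First I would write down $\tilde\rho^{(0)}$ explicitly: for $\rho^{(0)}=z\phi$ every nonzero entry is $\pm rz$, determined by Definition \ref{def1}, and the support of $\tilde\rho^{(0)}$ involves only the interior vertices $\mathbf{1},\mathbf{2},\mathbf{3}$ and their outgoing coin directions. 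I would then use the block-circulant structure from Lemma \ref{le:fast inversion}(a): after the prescribed reordering, $[\tilde\rho^{(0)}]$ has the form with blocks $A,B,C$, so $[I_{12}-\tilde\rho^{(0)}]^{-1}$ is given by the closed-form expressions for $X,Y,Z$ in Lemma \ref{le:fast inversion}(b), reducing everything to inverting $4\times4$ matrices (here $\bar A=A-I_4$, and the blocks are sparse, so the inverses are cheap).

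Next I would assemble the right-hand sides $\rho^{(0)}(\mathbf{1},\mathbf{y}),\rho^{(0)}(\mathbf{2},\mathbf{y}),\rho^{(0)}(\mathbf{3},\mathbf{y})$ for $\mathbf{y}=\mathbf{0},\mathbf{4},\mathbf{5}$ from Definition \ref{def1}, apply Theorem \ref{th:Dirichlet g}(b) to get the interior Green entries $g_{\rho^{(0)}}^{(1)}(\mathbf{x},\mathbf{y})$, and then feed these into (\ref{eq:g1 boundary 1})--(\ref{eq:g1 boundary 3}) to obtain the boundary entries $g_{\rho^{(0)}}^{(1)}(\mathbf{0},\mathbf{5})$ and $g_{\rho^{(0)}}^{(1)}(\mathbf{0},\mathbf{0})$. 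By construction $g_{\rho^{(0)}}^{(1)}=g^{(1)}$ via (\ref{en:general Green function}), and by (\ref{eq:recursive relation rho}) with $n=0$ these are exactly $\rho^{(1)}$ on $F^{(0)}\cup F^{(0)'}$ up to the scaling $x\mapsto 2x$. Comparing the computed matrices $g^{(1)}(\mathbf{0},\mathbf{a_1}),g^{(1)}(\mathbf{0},\mathbf{0})$ with the template forms in Lemma \ref{le:g reduction sierpinski} identifies $u_1^{(1)},\dots,u_6^{(1)}$ termwise; the denominators $1+rz$ arise precisely from the single scalar inverse $(1+rz)^{-1}$ that survives after the block reduction (the geometric series from excursions that bounce at a degree-adjacent interior vertex before exiting).

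The main obstacle I expect is purely computational bookkeeping rather than conceptual: one must be scrupulous about the sign conventions in $\phi$ (the $-r$ versus $+r$ dichotomy according to whether $j=i+3 \pmod 6$), about which coin directions lie in $out(\mathbf{x})$ for each of the three interior vertices $\mathbf{1},\mathbf{2},\mathbf{3}$ (these differ — e.g.\ $out(\mathbf{a_0})$ vs.\ $out(\mathbf{b_0})$ vs.\ $out(\mathbf{a_0+b_0})$), and about the relabeling/reordering that makes the block-circulant structure appear. A secondary subtlety is verifying the boundary formulas (\ref{eq:g1 boundary 2})--(\ref{eq:g1 boundary 3}) correctly incorporate the reflected copy $F^{(1)'}$ so that the $j=4,5$ columns of $g^{(1)}(\mathbf{0},\mathbf{0})$ come out with the right entries $u_6^{(1)}$. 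Once the $4\times4$ inverse is in hand, extracting the six closed forms and checking they match Lemma \ref{le:g reduction sierpinski} is routine algebra; I would double-check the final expressions by confirming $u_3^{(1)}=0$ directly (no path from $\mathbf{0}$ can exit at $\mathbf{a_1}$ in coin direction producing the $u_3$ entry, consistent with the stated vanishing) and by a sanity check at $z=0$, where all $u_i^{(1)}$ must vanish.
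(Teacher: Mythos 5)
Your plan is correct and follows essentially the same route as the paper's proof: reorder $\tilde\rho^{(0)}$ to expose the block structure, invert via Lemma \ref{le:fast inversion}(b), obtain the interior entries from Theorem \ref{th:Dirichlet g}(b), pass to the boundary via (\ref{eq:g1 boundary 1})--(\ref{eq:g1 boundary 3}), and read off $u^{(1)}_1,\dots,u^{(1)}_6$ by matching against the templates of Lemma \ref{le:g reduction sierpinski}. No gaps beyond the bookkeeping you already flag.
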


Now we compute the first hitting  probability distribution for quantum random walks at $\partial F^{(1)}\cup \partial F^{(1)'}$.
By Theorem \ref{th:Sierpinski step 1} and (\ref{eq:probability parseval}), we have the exact exit probability distributions:
\begin{align*}
P^{(1)}(\mathbf{0},\mathbf{a_1}) &=\frac{1}{2\pi}\int_0^{2\pi} \bordermatrix{~& 0 & 1 & 4              & 5           \cr
                                                       0  &0&0& |u^{(1)}_4(e^{i\theta})|^2 &|u^{(1)}_3(e^{i\theta})|^2    \cr
                                                       1  &0&0& |u^{(1)}_4(e^{i\theta})|^2 &|u^{(1)}_3(e^{i\theta})|^2  \cr
                                                       4  &0&0& |u^{(1)}_1(e^{i\theta})|^2 & |u^{(1)}_2(e^{i\theta})|^2\cr
                                                       5  &0&0& |u^{(1)}_1(e^{i\theta})|^2 & |u^{(1)}_2(e^{i\theta})|^2\cr} d\theta \\
&= \frac{1}{2\pi}\int_0^{2\pi} \bordermatrix{~& 0 & 1 & 4              & 5           \cr
                                                       0  &0&0& \frac{\sin^2\theta}{5+4\cos\theta} &0    \cr
                                                       1  &0&0& \frac{\sin^2\theta}{5+4\cos\theta} &0  \cr
                                                       4  &0&0& \frac{1}{2}\frac{1+\cos\theta}{5+4\cos\theta} & \frac{1}{2}\frac{1+\cos\theta}{5+4\cos\theta}\cr
                                                       5  &0&0& \frac{1}{2}\frac{1+\cos\theta}{5+4\cos\theta} & \frac{1}{2}\frac{1+\cos\theta}{5+4\cos\theta}\cr} d\theta \\
 &=\bordermatrix{~& 0 & 1 & 4              & 5           \cr
                                                       0  &0&0& \frac{1}{8} &0    \cr
                                                       1  &0&0& \frac{1}{8} &0  \cr
                                                       4  &0&0& \frac{1}{12} & \frac{1}{12}\cr
                                                       5  &0&0& \frac{1}{12} & \frac{1}{12}\cr},
\end{align*}
and
\begin{align*}
P^{(1)}(\mathbf{0},\mathbf{0}) &=\frac{1}{2\pi}\int_0^{2\pi} \bordermatrix{~& 0 & 1 & 4              & 5           \cr
                                                       0  &|u^{(1)}_5(e^{i\theta})|^2&|u^{(1)}_5(e^{i\theta})|^2& |u^{(1)}_6(e^{i\theta})|^2 &|u^{(1)}_6(e^{i\theta})|^2    \cr
                                                       1  &|u^{(1)}_5(e^{i\theta})|^2&|u^{(1)}_5(e^{i\theta})|^2& |u^{(1)}_6(e^{i\theta})|^2 &|u^{(1)}_6(e^{i\theta})|^2  \cr
                                                       4  &|u^{(1)}_6(e^{i\theta})|^2&|u^{(1)}_6(e^{i\theta})|^2& |u^{(1)}_5(e^{i\theta})|^2 & |u^{(1)}_5(e^{i\theta})|^2\cr
                                                       5  &|u^{(1)}_6(e^{i\theta})|^2&|u^{(1)}_6(e^{i\theta})|^2& |u^{(1)}_5(e^{i\theta})|^2 & |u^{(1)}_5(e^{i\theta})|^2\cr} d\theta \\
&= \frac{1}{2\pi}\int_0^{2\pi} \bordermatrix{~& 0 & 1 & 4              & 5           \cr
                                                       0  &\frac{1}{4}-\frac{\sin^2\theta}{5+4\cos\theta}&\frac{1}{4}-\frac{\sin^2\theta}{5+4\cos\theta}&\frac{1}{4}\frac{1}{5+4\cos\theta}&\frac{1}{4}\frac{1}{5+4\cos\theta}    \cr
                                                       1  &\frac{1}{4}-\frac{\sin^2\theta}{5+4\cos\theta}&\frac{1}{4}-\frac{\sin^2\theta}{5+4\cos\theta}& \frac{1}{4}\frac{1}{5+4\cos\theta} &\frac{1}{4}\frac{1}{5+4\cos\theta} \cr
                                                       4  &\frac{1}{4}\frac{1}{5+4\cos\theta}&\frac{1}{4}\frac{1}{5+4\cos\theta}& \frac{1}{4}-\frac{\sin^2\theta}{5+4\cos\theta} & \frac{1}{4}-\frac{\sin^2\theta}{5+4\cos\theta}\cr
                                                       5  &\frac{1}{4}\frac{1}{5+4\cos\theta}&\frac{1}{4}\frac{1}{5+4\cos\theta}& \frac{1}{4}-\frac{\sin^2\theta}{5+4\cos\theta} & \frac{1}{4}-\frac{\sin^2\theta}{5+4\cos\theta}\cr} d\theta \\
 &=\bordermatrix{~& 0 & 1 & 4              & 5           \cr
                                                       0  &\frac{1}{8}&\frac{1}{8}& \frac{1}{12}&\frac{1}{12}    \cr
                                                       1  &\frac{1}{8}&\frac{1}{8}& \frac{1}{12}&\frac{1}{12}  \cr
                                                       4  &\frac{1}{12}&\frac{1}{12}& \frac{1}{8} & \frac{1}{8}\cr
                                                       5  &\frac{1}{12}&\frac{1}{12}& \frac{1}{8} & \frac{1}{8}\cr}.
\end{align*}
Suppose $\psi_0 = |\mathbf{0}^0\rangle $, then distribution of the position at first exit from $F^{(1)}\cup F^{(1)'}$, $(\mathbf{x}^i)_{\tau^{(1)}}$ are  shown in Figure \ref{fig:exit amplitude distributions 1} and Figure \ref{fig:exit distributions 1}.   
\begin{figure}[htb]
\begin{minipage}{0.45\textwidth}
\begin{tikzpicture}[scale = 0.7]
\begin{axis}[dashed, xmin=-3,ymin=-3,xticklabels = {,,}, yticklabels = {},xmax=5,ymax=3,]
\addplot[mark = none] coordinates{(-5,0) (5,0)};
\addplot[mark = none] coordinates{(-5,1) (5,1)};
\addplot[mark = none] coordinates{(-5,2) (5,2)};
\addplot[mark = none] coordinates{(-5,-1) (5,-1)};
\addplot[mark = none] coordinates{(-5,-2) (5,-2)};
\addplot[mark = none] coordinates{(0,-3) (0,3)};
\addplot[mark = none] coordinates{(1,-3) (1,3)};
\addplot[mark = none] coordinates{(2,-3) (2,3)};
\addplot[mark = none] coordinates{(3,-3) (3,3)};
\addplot[mark = none] coordinates{(4,-3) (4,3)};
\addplot[mark = none] coordinates{(-1,-3) (-1,3)};
\addplot[mark = none] coordinates{(-2,-3) (-2,3)};
\addplot[mark = none] coordinates{(-3,-3) (-3,3)};
\addplot[mark = none] coordinates{(-4,-3) (-4,3)};
\addplot[mark = *, solid, thick] coordinates {(0, 0) (2,0) (4,0) (3,1) (2,2) (1,1) (0,0)};
\addplot[mark = *, solid ,thick] coordinates {(1,1) (3,1) (2,0) (1,1)};
\addplot[mark = *, solid, thick] coordinates {(0, 0) (-1,-1) (-2,-2) (0,-2) (2,-2) (1,-1) (0,0)};
\addplot[mark = *, solid ,thick] coordinates {(-1,-1) (0,-2) (1,-1) (-1,-1)};
\addplot[mark=*] coordinates {(0,0)} node(1)[]{} ;
\addplot[mark=] coordinates {(-0.5,0)}[xshift=0pt,yshift=-10pt] node[label={$\mathbf{0}$}]{} ;
\addplot[mark=*] coordinates {(1,1)} node(2)[]{} ;
\addplot[mark=*] coordinates {(2,0)} node(3)[xshift=0pt,yshift=-20pt][]{} ;
\addplot[mark=*] coordinates {(3,1)} node(4)[]{} ;
\addplot[mark=*] coordinates {(2,2)} node(5)[label={$\mathbf{a_1}$}]{} ;
\addplot[mark=*] coordinates {(4,0)} node(6)[]{} ;
\addplot[mark=] coordinates {(4.5,0)} node[xshift=0pt,yshift=-10pt][label={$\mathbf{b_1}$}]{} ;
\addplot[mark=*] coordinates {(-1,-1)} node(-2)[]{} ;
\addplot[mark=*] coordinates {(1,-1)} node(-3)[]{} ;
\addplot[mark=*] coordinates {(0,-2)} node(-4)[]{} ;
\addplot[mark=*] coordinates {(-2,-2)} node(-5)[]{} ;
\addplot[mark=] coordinates {(-2.5,-2)} node[xshift=2pt,yshift=-12pt][label={$\mathbf{a'_1}$}]{} ;
\addplot[mark=*] coordinates {(2,-2)} node(-6)[]{} ;
\addplot[mark=] coordinates {(2.5,-2)} node[xshift=0pt,yshift=-12pt][label={$\mathbf{b'_1}$}]{} ;
\addplot[mark=none] coordinates {(-1,-1)} node(-1)[]{} ;
\addplot[mark=none] coordinates {(1,-1)} node(-2)[]{} ;
\draw [->,thick ,solid] (1) -- (5)node[xshift=-13pt,yshift=-5pt] {$u^{(1)}_4$};
\draw [->,thick ,solid] (6) -- (5)node[xshift=17pt,yshift=-5pt]{$u^{(1)}_3$};
\draw [->,thick ,solid] (1) -- (6)node[xshift=-12pt,yshift=-9pt] {-$u^{(1)}_4$};
\draw [->,thick ,solid] (5) -- (6)node[xshift=1pt,yshift=15pt]{-$u^{(1)}_3$};
\draw [->,thick ,solid] (5) -- (1)node[xshift=2pt,yshift=16pt] {-$u^{(1)}_5$};
\draw [->,thick ,solid] (6) -- (1)node[xshift=20pt,yshift=8pt]{$u^{(1)}_5$};

\draw [->,thick ,solid] (1) -- (-5)node[xshift=6pt,yshift=18pt] {$u^{(1)}_1$};
\draw [->,thick ,solid] (-6) -- (-5)node[xshift=15pt,yshift=-8pt]{$u^{(1)}_2$};
\draw [->,thick ,solid] (1) -- (-6)node[xshift=-2pt,yshift=18pt] {$u^{(1)}_1$};
\draw [->,thick ,solid] (-5) -- (-6)node[xshift=-15pt,yshift=-8pt]{$u^{(1)}_2$};
\draw [->,thick ,solid] (-5) -- (1)node[xshift=-2pt,yshift=-13pt] {$u^{(1)}_6$};
\draw [->,thick ,solid] (-6) -- (1)node[xshift=18pt,yshift=-8pt]{$u^{(1)}_6$};
\end{axis}        
\end{tikzpicture}
\caption{First exit amplitude distribution from $F^{(1)} \cup F^{(1)'}$ starting $|\mathbf{0}^0\rangle $}\label{fig:exit amplitude distributions 1}
\end{minipage}
\hfill
\begin{minipage}{0.45\textwidth}
\begin{tikzpicture}[scale = 0.7]
\begin{axis}[dashed, xmin=-3,ymin=-3,xticklabels = {,,}, yticklabels = {},xmax=5,ymax=3,]
\addplot[mark = none] coordinates{(-5,0) (5,0)};
\addplot[mark = none] coordinates{(-5,1) (5,1)};
\addplot[mark = none] coordinates{(-5,2) (5,2)};
\addplot[mark = none] coordinates{(-5,-1) (5,-1)};
\addplot[mark = none] coordinates{(-5,-2) (5,-2)};
\addplot[mark = none] coordinates{(0,-3) (0,3)};
\addplot[mark = none] coordinates{(1,-3) (1,3)};
\addplot[mark = none] coordinates{(2,-3) (2,3)};
\addplot[mark = none] coordinates{(3,-3) (3,3)};
\addplot[mark = none] coordinates{(4,-3) (4,3)};
\addplot[mark = none] coordinates{(-1,-3) (-1,3)};
\addplot[mark = none] coordinates{(-2,-3) (-2,3)};
\addplot[mark = none] coordinates{(-3,-3) (-3,3)};
\addplot[mark = none] coordinates{(-4,-3) (-4,3)};
\addplot[mark = *, solid, thick] coordinates {(0, 0) (2,0) (4,0) (3,1) (2,2) (1,1) (0,0)};
\addplot[mark = *, solid ,thick] coordinates {(1,1) (3,1) (2,0) (1,1)};
\addplot[mark = *, solid, thick] coordinates {(0, 0) (-1,-1) (-2,-2) (0,-2) (2,-2) (1,-1) (0,0)};
\addplot[mark = *, solid ,thick] coordinates {(-1,-1) (0,-2) (1,-1) (-1,-1)};
\addplot[mark=*] coordinates {(0,0)} node(1)[]{} ;
\addplot[mark=] coordinates {(-0.5,0)}[xshift=0pt,yshift=-10pt] node[label={$\mathbf{0}$}]{} ;
\addplot[mark=*] coordinates {(1,1)} node(2)[]{} ;
\addplot[mark=*] coordinates {(2,0)} node(3)[xshift=0pt,yshift=-20pt][]{} ;
\addplot[mark=*] coordinates {(3,1)} node(4)[]{} ;
\addplot[mark=*] coordinates {(2,2)} node(5)[label={$\mathbf{a_1}$}]{} ;
\addplot[mark=*] coordinates {(4,0)} node(6)[]{} ;
\addplot[mark=] coordinates {(4.5,0)} node[xshift=0pt,yshift=-10pt][label={$\mathbf{b_1}$}]{} ;
\addplot[mark=*] coordinates {(-1,-1)} node(-2)[]{} ;
\addplot[mark=*] coordinates {(1,-1)} node(-3)[]{} ;
\addplot[mark=*] coordinates {(0,-2)} node(-4)[]{} ;
\addplot[mark=*] coordinates {(-2,-2)} node(-5)[]{} ;
\addplot[mark=] coordinates {(-2.5,-2)} node[xshift=2pt,yshift=-12pt][label={$\mathbf{a'_1}$}]{} ;
\addplot[mark=*] coordinates {(2,-2)} node(-6)[]{} ;
\addplot[mark=] coordinates {(2.5,-2)} node[xshift=0pt,yshift=-12pt][label={$\mathbf{b'_1}$}]{} ;
\addplot[mark=none] coordinates {(-1,-1)} node(-1)[]{} ;
\addplot[mark=none] coordinates {(1,-1)} node(-2)[]{} ;
\draw [->,thick ,solid] (1) -- (5)node[xshift=-13pt,yshift=-5pt] {$\frac{1}{8}$};
\draw [->,thick ,solid] (6) -- (5)node[xshift=14pt,yshift=-5pt]{$0$};
\draw [->,thick ,solid] (1) -- (6)node[xshift=-12pt,yshift=-9pt] {$\frac{1}{8}$};
\draw [->,thick ,solid] (5) -- (6)node[xshift=-3pt,yshift=15pt]{$0$};
\draw [->,thick ,solid] (5) -- (1)node[xshift=2pt,yshift=16pt] {$\frac{1}{8}$};
\draw [->,thick ,solid] (6) -- (1)node[xshift=20pt,yshift=8pt]{$\frac{1}{8}$};

\draw [->,thick ,solid] (1) -- (-5)node[xshift=6pt,yshift=18pt] {$\frac{1}{12}$};
\draw [->,thick ,solid] (-6) -- (-5)node[xshift=15pt,yshift=-8pt]{$\frac{1}{12}$};
\draw [->,thick ,solid] (1) -- (-6)node[xshift=-2pt,yshift=18pt] {$\frac{1}{12}$};
\draw [->,thick ,solid] (-5) -- (-6)node[xshift=-15pt,yshift=-8pt]{$\frac{1}{12}$};
\draw [->,thick ,solid] (-5) -- (1)node[xshift=-2pt,yshift=-13pt] {$\frac{1}{12}$};
\draw [->,thick ,solid] (-6) -- (1)node[xshift=18pt,yshift=-8pt]{$\frac{1}{12}$};
\end{axis}        
\end{tikzpicture}
\caption{First exit probability distribution  from  $F^{(1)} \cup F^{(1)'}$ starting $|\mathbf{0}^0\rangle $}\label{fig:exit distributions 1}
\end{minipage}
\end{figure}

\begin{figure}[htb]
\begin{minipage}{0.45\textwidth}
\begin{tikzpicture}[scale = 0.7]
\begin{axis}[dashed, xmin=-5,ymin=-5,xticklabels = {,,}, yticklabels = {},xmax=9,ymax=5,]
\addplot[mark = none] coordinates{(-5,0) (9,0)};
\addplot[mark = none] coordinates{(-5,1) (9,1)};
\addplot[mark = none] coordinates{(-5,2) (9,2)};
\addplot[mark = none] coordinates{(-5,3) (9,3)};
\addplot[mark = none] coordinates{(-5,4) (9,4)};
\addplot[mark = none] coordinates{(-5,-1) (9,-1)};
\addplot[mark = none] coordinates{(-5,-2) (9,-2)};
\addplot[mark = none] coordinates{(-5,-3) (9,-3)};
\addplot[mark = none] coordinates{(-5,-4) (9,-4)};
\addplot[mark = none] coordinates{(-5,-1) (5,-1)};
\addplot[mark = none] coordinates{(-5,-2) (5,-2)};
\addplot[mark = none] coordinates{(0,-5) (0,9)};
\addplot[mark = none] coordinates{(1,-5) (1,9)};
\addplot[mark = none] coordinates{(2,-5) (2,9)};
\addplot[mark = none] coordinates{(3,-5) (3,9)};
\addplot[mark = none] coordinates{(4,-5) (4,9)};
\addplot[mark = none] coordinates{(5,-5) (5,9)};
\addplot[mark = none] coordinates{(6,-5) (6,9)};
\addplot[mark = none] coordinates{(7,-5) (7,9)};
\addplot[mark = none] coordinates{(8,-5) (8,9)};
\addplot[mark = none] coordinates{(-1,-5) (-1,9)};
\addplot[mark = none] coordinates{(-2,-5) (-2,9)};
\addplot[mark = none] coordinates{(-3,-5) (-3,9)};
\addplot[mark = none] coordinates{(-4,-5) (-4,9)};
\addplot[mark = *, solid, thick] coordinates {(0, 0) (2,0) (4,0) (6,0) (8,0) (7,1) (6,2) (5,3) (4,4) (3,3) (2,2) (1,1) (0,0)};
\addplot[mark = *, solid ,thick] coordinates {(1,1) (3,1) (2,0) (1,1)};
\addplot[mark = *, solid, thick] coordinates {(2,2) (4,2) (6,2) (5,1) (4,0) (3,1) (2,2)};
\addplot[mark = *, solid ,thick] coordinates {(3,3) (5,3) (4,2) (3,3)};
\addplot[mark = *, solid, thick] coordinates {(5,1) (7,1) (6,0) (5,1)};

\addplot[mark=*] coordinates {(4,0)} node[label={$b_1$}]{} ;
\addplot[mark=*] coordinates {(2,2)} node[label={$a_1$}]{} ;

\addplot[mark=] coordinates {(-0.5,0)}[xshift=0pt,yshift=-10pt] node[label={$\mathbf{0}$}]{} ;
\addplot[mark=*] coordinates {(4,4)} node[xshift =0pt, yshift = -2pt][label={$\mathbf{a_2}$}]{} ;

\addplot[mark=] coordinates {(8.5,0)} node[xshift=2pt,yshift=-14pt][label={$\mathbf{b_2}$}]{} ;

\addplot[mark=] coordinates {(-4.5,-4)} node[xshift=-2pt,yshift=-14pt][label={$\mathbf{a'_2}$}]{} ;
\addplot[mark=] coordinates {(4.5,-4)} node[xshift=4pt,yshift=-14pt][label={$\mathbf{b'_2}$}]{} ;
\addplot[mark=*] coordinates {(0,0)} node(1)[]{} ;
\addplot[mark=*] coordinates {(4,4)} node(5)[]{} ;
\addplot[mark=*] coordinates {(8,0)} node(6)[]{} ;
\addplot[mark=*] coordinates {(-4,-4)} node(-5)[]{} ;
\addplot[mark=*] coordinates {(4,-4)} node(-6)[]{} ;
\draw [->,thick ,solid] (1) -- (5)node[xshift=-13pt,yshift=-5pt] {$u^{(2)}_4$};
\draw [->,thick ,solid] (6) -- (5)node[xshift=17pt,yshift=-5pt]{$u^{(2)}_3$};
\draw [->,thick ,solid] (1) -- (6)node[xshift=-12pt,yshift=-9pt] {-$u^{(2)}_4$};
\draw [->,thick ,solid] (5) -- (6)node[xshift=1pt,yshift=15pt]{-$u^{(2)}_3$};
\draw [->,thick ,solid] (5) -- (1)node[xshift=2pt,yshift=16pt] {-$u^{(2)}_5$};
\draw [->,thick ,solid] (6) -- (1)node[xshift=20pt,yshift=8pt]{$u^{(2)}_5$};
\draw [->,thick ,solid] (1) -- (-5)node[xshift=6pt,yshift=18pt] {$u^{(2)}_1$};
\draw [->,thick ,solid] (-6) -- (-5)node[xshift=15pt,yshift=-8pt]{$u^{(2)}_2$};
\draw [->,thick ,solid] (1) -- (-6)node[xshift=-2pt,yshift=18pt] {$u^{(2)}_1$};
\draw [->,thick ,solid] (-5) -- (-6)node[xshift=-15pt,yshift=-8pt]{$u^{(2)}_2$};
\draw [->,thick ,solid] (-5) -- (1)node[xshift=-2pt,yshift=-13pt] {$u^{(2)}_6$};
\draw [->,thick ,solid] (-6) -- (1)node[xshift=18pt,yshift=-8pt]{$u^{(2)}_6$};
\addplot[mark = *, solid, thick] coordinates {(0, 0) (-1,-1) (-2,-2) (-3,-3) (-4,-4) (-2,-4) (0,-4) (2,-4) (4,-4) (3,-3) (2,-2) (1,-1) (0,0)};
\addplot[mark = *, solid ,thick] coordinates {(-1,-1) (1,-1) (0,-2) (-1,-1)};
\addplot[mark = *, solid, thick] coordinates {(-2,-2) (0,-2) (2,-2) (1,-3) (0,-4) (-1,-3) (-2,-2)};
\addplot[mark = *, solid ,thick] coordinates {(-3,-3) (-1,-3) (-2,-4) (-3,-3)};
\addplot[mark = *, solid, thick] coordinates {(1,-3) (3,-3) (2,-4) (1,-3)};
\addplot[mark=*] coordinates {(2,-2)} node[xshift=0pt,yshift=-4pt][label={$b'_1$}]{} ;
\addplot[mark=*] coordinates {(-2,-2)} node[xshift=0pt,yshift=-4pt][label={$a'_1$}]{} ;
\end{axis}        
\end{tikzpicture}
\caption{First exit amplitude distribution  from  $F^{(2)} \cup F^{(2)'}$ starting $|\mathbf{0}^0\rangle $}\label{fig:exit amplitude distributions 2}
\end{minipage}
\hfill
\begin{minipage}{0.45\textwidth}
\begin{tikzpicture}[scale = 0.7]
\begin{axis}[dashed, xmin=-5,ymin=-5,xticklabels = {,,}, yticklabels = {},xmax=9,ymax=5,]
\addplot[mark = none] coordinates{(-5,0) (9,0)};
\addplot[mark = none] coordinates{(-5,1) (9,1)};
\addplot[mark = none] coordinates{(-5,2) (9,2)};
\addplot[mark = none] coordinates{(-5,3) (9,3)};
\addplot[mark = none] coordinates{(-5,4) (9,4)};
\addplot[mark = none] coordinates{(-5,-1) (9,-1)};
\addplot[mark = none] coordinates{(-5,-2) (9,-2)};
\addplot[mark = none] coordinates{(-5,-3) (9,-3)};
\addplot[mark = none] coordinates{(-5,-4) (9,-4)};
\addplot[mark = none] coordinates{(-5,-1) (5,-1)};
\addplot[mark = none] coordinates{(-5,-2) (5,-2)};
\addplot[mark = none] coordinates{(0,-5) (0,9)};
\addplot[mark = none] coordinates{(1,-5) (1,9)};
\addplot[mark = none] coordinates{(2,-5) (2,9)};
\addplot[mark = none] coordinates{(3,-5) (3,9)};
\addplot[mark = none] coordinates{(4,-5) (4,9)};
\addplot[mark = none] coordinates{(5,-5) (5,9)};
\addplot[mark = none] coordinates{(6,-5) (6,9)};
\addplot[mark = none] coordinates{(7,-5) (7,9)};
\addplot[mark = none] coordinates{(8,-5) (8,9)};
\addplot[mark = none] coordinates{(-1,-5) (-1,9)};
\addplot[mark = none] coordinates{(-2,-5) (-2,9)};
\addplot[mark = none] coordinates{(-3,-5) (-3,9)};
\addplot[mark = none] coordinates{(-4,-5) (-4,9)};
\addplot[mark = *, solid, thick] coordinates {(0, 0) (2,0) (4,0) (6,0) (8,0) (7,1) (6,2) (5,3) (4,4) (3,3) (2,2) (1,1) (0,0)};
\addplot[mark = *, solid ,thick] coordinates {(1,1) (3,1) (2,0) (1,1)};
\addplot[mark = *, solid, thick] coordinates {(2,2) (4,2) (6,2) (5,1) (4,0) (3,1) (2,2)};
\addplot[mark = *, solid ,thick] coordinates {(3,3) (5,3) (4,2) (3,3)};
\addplot[mark = *, solid, thick] coordinates {(5,1) (7,1) (6,0) (5,1)};

\addplot[mark=*] coordinates {(4,0)} node[label={$b_1$}]{} ;
\addplot[mark=*] coordinates {(2,2)} node[label={$a_1$}]{} ;

\addplot[mark=] coordinates {(-0.5,0)}[xshift=0pt,yshift=-10pt] node[label={$\mathbf{0}$}]{} ;
\addplot[mark=*] coordinates {(4,4)} node[xshift =0pt, yshift = -2pt][label={$\mathbf{a_2}$}]{} ;

\addplot[mark=] coordinates {(8.5,0)} node[xshift=2pt,yshift=-14pt][label={$\mathbf{b_2}$}]{} ;

\addplot[mark=] coordinates {(-4.5,-4)} node[xshift=-2pt,yshift=-14pt][label={$\mathbf{a'_2}$}]{} ;
\addplot[mark=] coordinates {(4.5,-4)} node[xshift=4pt,yshift=-14pt][label={$\mathbf{b'_2}$}]{} ;
\addplot[mark=*] coordinates {(0,0)} node(1)[]{} ;
\addplot[mark=*] coordinates {(4,4)} node(5)[]{} ;
\addplot[mark=*] coordinates {(8,0)} node(6)[]{} ;
\addplot[mark=*] coordinates {(-4,-4)} node(-5)[]{} ;
\addplot[mark=*] coordinates {(4,-4)} node(-6)[]{} ;
\draw [->,thick ,solid] (1) -- (5)node[xshift=-23pt,yshift=-5pt] {$0.0183$};
\draw [->,thick ,solid] (6) -- (5)node[xshift=23pt,yshift=-5pt]{$0.0486$};
\draw [->,thick ,solid] (1) -- (6)node[xshift=-12pt,yshift=-9pt] {$0.0183$};
\draw [->,thick ,solid] (5) -- (6)node[xshift=-2pt,yshift=20pt]{$0.0486$};
\draw [->,thick ,solid] (5) -- (1)node[xshift=-10pt,yshift=10pt] {$0.1831$};
\draw [->,thick ,solid] (6) -- (1)node[xshift=25pt,yshift=6pt]{$0.1831$};
\draw [->,thick ,solid] (1) -- (-5)node[xshift=3pt,yshift=28pt] {$0.0455$};
\draw [->,thick ,solid] (-6) -- (-5)node[xshift=15pt,yshift=-8pt]{$0.0503$};
\draw [->,thick ,solid] (1) -- (-6)node[xshift=10pt,yshift=15pt] {$0.0455$};
\draw [->,thick ,solid] (-5) -- (-6)node[xshift=-15pt,yshift=-8pt]{$0.0503$};
\draw [->,thick ,solid] (-5) -- (1)node[xshift=-25pt,yshift=-6pt] {$0.1542$};
\draw [->,thick ,solid] (-6) -- (1)node[xshift=25pt,yshift=-6pt]{$0.1542$};
\addplot[mark = *, solid, thick] coordinates {(0, 0) (-1,-1) (-2,-2) (-3,-3) (-4,-4) (-2,-4) (0,-4) (2,-4) (4,-4) (3,-3) (2,-2) (1,-1) (0,0)};
\addplot[mark = *, solid ,thick] coordinates {(-1,-1) (1,-1) (0,-2) (-1,-1)};
\addplot[mark = *, solid, thick] coordinates {(-2,-2) (0,-2) (2,-2) (1,-3) (0,-4) (-1,-3) (-2,-2)};
\addplot[mark = *, solid ,thick] coordinates {(-3,-3) (-1,-3) (-2,-4) (-3,-3)};
\addplot[mark = *, solid, thick] coordinates {(1,-3) (3,-3) (2,-4) (1,-3)};
\addplot[mark=*] coordinates {(2,-2)} node[xshift=0pt,yshift=-4pt][label={$b'_1$}]{} ;
\addplot[mark=*] coordinates {(-2,-2)} node[xshift=0pt,yshift=-4pt][label={$a'_1$}]{} ;
\end{axis}        
\end{tikzpicture}
\caption{First exit probability distribution  from  $F^{(2)} \cup F^{(2)'}$ starting $|\mathbf{0}^0\rangle $}\label{fig:exit distributions 2}
\end{minipage}
\end{figure}

Similar method applied to $u^{(2)}$, suppose $\psi_0 = |\mathbf{0}^0\rangle $, we obtain the exact first exit distributions of $(\mathbf{x}^i)_{\tau^{(2)}}$ for quantum random walks from $F^{(2)}\cup F^{(2)'}$. The results are summarized in Figure \ref{fig:exit amplitude distributions 2} and Figure \ref{fig:exit distributions 2}.

\subsection{The iteration map
} \label{subsec:iteration map}

The following theorem gives a formula for the iteration  map $T:\mathbb{C}^6 \rightarrow \mathbb{C}^6$ such that, 
$$(u^{(n+1)}_1,u^{(n+1)}_2,u^{(n+1)}_3,u^{(n+1)}_4,u^{(n+1)}_5,u^{(n+1)}_6) = T(u^{(n)}_1,u^{(n)}_2,u^{(n)}_3,u^{(n)}_4,u^{(n)}_5,u^{(n)}_6).$$

\begin{theorem} \label{th:T formula}
For all $n \ge 0$, we have 
\begin{eqnarray*}
u^{(n+1)}_1 &=& u^{(n)}_1(g_{\rho^{(n)}}^{(1)}(\mathbf{1'},\mathbf{5'})^0_1 + g_{\rho^{(n)}}^{(1)}(\mathbf{2'},\mathbf{5'})^2_1) + u^{(n)}_2 (g_{\rho^{(n)}}^{(1)}(\mathbf{1'},\mathbf{5'})^0_1 + g_{\rho^{(n)}}^{(1)}(\mathbf{2'},\mathbf{5'})^3_1)\\
u^{(n+1)}_2 &=& u^{(n)}_1(g_{\rho^{(n)}}^{(1)}(\mathbf{1'},\mathbf{5'})^0_0 + g_{\rho^{(n)}}^{(1)}(\mathbf{2'},\mathbf{5'})^2_0) + u^{(n)}_2 (g_{\rho^{(n)}}^{(1)}(\mathbf{1'},\mathbf{5'})^0_0 + g_{\rho^{(n)}}^{(1)}(\mathbf{2'},\mathbf{5'})^3_0)\\
u^{(n+1)}_3 &=& u^{(n)}_4(g_{\rho^{(n)}}^{(1)}(\mathbf{1},\mathbf{5})^4_5 - g_{\rho^{(n)}}^{(1)}(\mathbf{2},\mathbf{5})^3_5) + u^{(n)}_3 (g_{\rho^{(n)}}^{(1)}(\mathbf{1},\mathbf{5})^5_5 - g_{\rho^{(n)}}^{(1)}(\mathbf{2},\mathbf{5})^2_5)\\
u^{(n+1)}_4 & =& u^{(n)}_4(g_{\rho^{(n)}}^{(1)}(\mathbf{1},\mathbf{5})^4_4 - g_{\rho^{(n)}}^{(1)}(\mathbf{2},\mathbf{5})^3_4) + u^{(n)}_3 (g_{\rho^{(n)}}^{(1)}(\mathbf{1},\mathbf{5})^5_4 - g_{\rho^{(n)}}^{(1)}(\mathbf{2},\mathbf{5})^2_4)\\
u^{(n+1)}_5 &=& u^{(n)}_5 + u^{(n)}_4(g_{\rho^{(n)}}^{(1)}(\mathbf{1},\mathbf{0})^4_0 - g_{\rho^{(n)}}^{(1)}(\mathbf{2},\mathbf{0})^3_0) + u^{(n)}_3 (g_{\rho^{(n)}}^{(1)}(\mathbf{1},\mathbf{0})^5_0 - g_{\rho^{(n)}}^{(1)}(\mathbf{2},\mathbf{0})^2_0)\\
u^{(n+1)}_6 &=& u^{(n)}_6 + u^{(n)}_1(g_{\rho^{(n)}}^{(1)}(\mathbf{1'},\mathbf{0})^0_4 + g_{\rho^{(n)}}^{(1)}(\mathbf{2'},\mathbf{0})^2_4) + u^{(n)}_2 (g_{\rho^{(n)}}^{(1)}(\mathbf{1'},\mathbf{0})^0_4 + g_{\rho^{(n)}}^{(1)}(\mathbf{2'},\mathbf{0})^3_4),
\end{eqnarray*}
where $g_{\rho^{(n)}}^{(1)}(\mathbf{x},\mathbf{y})$ is given by Theorem \ref{th:Dirichlet g} for interior points $\mathbf{x}$ and (\ref{eq:g1 boundary 1})- (\ref{eq:g1 boundary 3}) for boundary points $\mathbf{x}$. 
\end{theorem}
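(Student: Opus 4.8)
The plan is to express each $u^{(n+1)}_k$ as a single entry of one of the level-$1$ amplitude Green matrices $g^{(1)}_{\rho^{(n)}}$, and then reduce that entry — via the boundary relations and the Dirichlet solution of Theorem \ref{th:Dirichlet g} — to a linear combination of $u^{(n)}_1,\dots,u^{(n)}_6$. First, combining (\ref{eq:rho n}) with Lemma \ref{le:g reduction sierpinski} applied at level $n+1$: since $2^{n+1}\mathbf{a_0}=\mathbf{a_{n+1}}$, the entries of the $4\times4$ blocks $\rho^{(n+1)}(\mathbf{0},\mathbf{a_0})=g^{(n+1)}(z)(\mathbf{0},\mathbf{a_{n+1}})$ and $\rho^{(n+1)}(\mathbf{0},\mathbf{0})=g^{(n+1)}(z)(\mathbf{0},\mathbf{0})$ are exactly $\pm u^{(n+1)}_1,\dots,\pm u^{(n+1)}_6$ in the positions prescribed by the lemma; in particular $u^{(n+1)}_1=\rho^{(n+1)}(\mathbf{0},\mathbf{a_0})^4_4$, $u^{(n+1)}_2=\rho^{(n+1)}(\mathbf{0},\mathbf{a_0})^4_5$, $u^{(n+1)}_4=\rho^{(n+1)}(\mathbf{0},\mathbf{a_0})^0_4$, $u^{(n+1)}_3=\rho^{(n+1)}(\mathbf{0},\mathbf{a_0})^0_5$, $u^{(n+1)}_5=\rho^{(n+1)}(\mathbf{0},\mathbf{0})^0_0$, $u^{(n+1)}_6=\rho^{(n+1)}(\mathbf{0},\mathbf{0})^0_4$. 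By Theorem \ref{th: g recursive}(b) (equation (\ref{eq:recursive relation rho})) these equal $g^{(1)}_{\rho^{(n)}}(\mathbf{0},\mathbf{a_1})$ and $g^{(1)}_{\rho^{(n)}}(\mathbf{0},\mathbf{0})$ respectively, where $\mathbf{a_1}=\mathbf{5}$ in the relabelling of $F^{(1)}$; it thus suffices to compute these two level-$1$ Green matrices with transition amplitude $\rho^{(n)}$.

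Since $\mathbf{0}$ is a boundary vertex of $F^{(1)}\cup F^{(1)'}$, I would next decompose each of these Green functions according to the first step of the path, exactly as in the derivation of (\ref{eq:g1 boundary 1})--(\ref{eq:g1 boundary 3}) and their mirror images under the symmetry interchanging $F^{(1)}$ with $F^{(1)'}$. This writes $g^{(1)}_{\rho^{(n)}}(\mathbf{0},\mathbf{y})$ as $\rho^{(n)}(\mathbf{0},\mathbf{p})\,g^{(1)}_{\rho^{(n)}}(\mathbf{p},\mathbf{y})+\rho^{(n)}(\mathbf{0},\mathbf{q})\,g^{(1)}_{\rho^{(n)}}(\mathbf{q},\mathbf{y})$ — plus the extra summand $\rho^{(n)}(\mathbf{0},\mathbf{0})$ when $\mathbf{y}=\mathbf{0}$ — where $\{\mathbf{p},\mathbf{q}\}$ is the pair of neighbours of $\mathbf{0}$ on the relevant side: $\{\mathbf{1},\mathbf{2}\}$ when the coin direction at $\mathbf{0}$ points into $F^{(1)}$ and $\{\mathbf{1'},\mathbf{2'}\}$ when it points into $F^{(1)'}$, with $\mathbf{y}$ correspondingly $\mathbf{5}$ or $\mathbf{5'}$, $\mathbf{0}$. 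The Green functions $g^{(1)}_{\rho^{(n)}}(\mathbf{p},\mathbf{y})$, $g^{(1)}_{\rho^{(n)}}(\mathbf{q},\mathbf{y})$ are now at the interior vertices $\mathbf{1},\mathbf{2},\mathbf{3}$ (or their primed copies) and are given by Theorem \ref{th:Dirichlet g}(b); they are well defined because $\rho^{(n)}$ is a power series in $z$ with $\rho^{(n)}(z{=}0)=0$, so $I_{12}-\tilde\rho^{(n)}$ is invertible for $|z|$ small, and the resulting identities of power series then persist throughout the common domain of convergence.

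It then remains to substitute the explicit coefficient matrices. By (\ref{eq:rho n}) and Lemma \ref{le:g reduction sierpinski}, $\rho^{(n)}(\mathbf{0},\mathbf{1})=g^{(n)}(z)(\mathbf{0},\mathbf{a_n})$, $\rho^{(n)}(\mathbf{0},\mathbf{2})=g^{(n)}(z)(\mathbf{0},\mathbf{b_n})$, their primed analogues, and $\rho^{(n)}(\mathbf{0},\mathbf{0})=g^{(n)}(z)(\mathbf{0},\mathbf{0})$ all have entries in $\{0,\pm u^{(n)}_1,\pm u^{(n)}_2,\pm u^{(n)}_3,\pm u^{(n)}_4\}$, respectively $\{0,\pm u^{(n)}_5,\pm u^{(n)}_6\}$. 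Carrying out the matrix products above with these explicit entries and collecting like terms produces the six displayed formulas: the $\mathbf{0}^4$-row of $g^{(n)}(\mathbf{0},\mathbf{a_n})$ and $g^{(n)}(\mathbf{0},\mathbf{b_n})$ supplies the weights $u^{(n)}_1,u^{(n)}_2$ and feeds the primed Green functions $g^{(1)}_{\rho^{(n)}}(\mathbf{1'},\mathbf{5'})$, $g^{(1)}_{\rho^{(n)}}(\mathbf{2'},\mathbf{5'})$, $g^{(1)}_{\rho^{(n)}}(\mathbf{1'},\mathbf{0})$, $g^{(1)}_{\rho^{(n)}}(\mathbf{2'},\mathbf{0})$ (lines for $u^{(n+1)}_1,u^{(n+1)}_2,u^{(n+1)}_6$); the $\mathbf{0}^0$-row supplies the weights $u^{(n)}_3,u^{(n)}_4$ and the unprimed Green functions (lines for $u^{(n+1)}_3,u^{(n+1)}_4,u^{(n+1)}_5$); the sign pattern of the $u_3,u_4$ entries of $g^{(n)}(\mathbf{0},\mathbf{b_n})$ yields the minus signs; and the diagonal block $g^{(n)}(\mathbf{0},\mathbf{0})$ contributes the isolated terms $u^{(n)}_5$ and $u^{(n)}_6$ in the last two lines. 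Matching entry by entry completes the proof.

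The step I expect to be the main obstacle is the reflection bookkeeping: one must fix the permutation of the six coin directions induced by the symmetry swapping $F^{(1)}$ and $F^{(1)'}$, together with the accompanying sign changes, and verify it is consistent with the sign conventions of Lemma \ref{le:g reduction sierpinski}. In particular, writing two distinct coin-indexed contributions with a single Green-function symbol (as when a $\mathbf{1'}^0$- and a $\mathbf{1'}^1$-term both appear as $g^{(1)}_{\rho^{(n)}}(\mathbf{1'},\mathbf{5'})^0_1$) relies on the ``equal-row'' symmetries already visible inside Lemma \ref{le:g reduction sierpinski}, which should be traced back to the graph automorphisms fixing the relevant corners. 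Once this direction-and-sign dictionary is pinned down, the remainder is routine linear algebra carried out six times.
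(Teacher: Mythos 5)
Your proposal follows essentially the same route as the paper's proof: identify the $u^{(n+1)}_k$ with entries of $g^{(1)}_{\rho^{(n)}}(\mathbf{0},\mathbf{5})$, $g^{(1)}_{\rho^{(n)}}(\mathbf{0},\mathbf{5'})$ and $g^{(1)}_{\rho^{(n)}}(\mathbf{0},\mathbf{0})$ via (\ref{eq:rho n}), Lemma \ref{le:g reduction sierpinski} and Theorem \ref{th: g recursive}(b), expand by the first coarse step exactly as in (\ref{eq:g1 boundary 1})--(\ref{eq:g1 boundary 3}) and their primed analogues, and substitute the $u^{(n)}$-entries of $\rho^{(n)}(\mathbf{0},\mathbf{1})$, $\rho^{(n)}(\mathbf{0},\mathbf{2})$ (resp.\ $\mathbf{1'},\mathbf{2'}$) together with the interior Green functions from Theorem \ref{th:Dirichlet g}(b). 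The reflection and equal-row bookkeeping you flag as the delicate point is precisely how the paper proceeds (deriving $u_3,u_4,u_5$ on $F^{(n+1)}$ and $u_1,u_2,u_6$ on $F^{(n+1)'}$), so the plan is correct as written.
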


\section{Proofs of Lemma \ref{le:g reduction sierpinski} - Theorem \ref{th:T formula}} \label{sec:proofs of tools}

In this section, we give the proofs of lemmas and theorems in Sections \ref{subsec:dimension reduction}-\ref{subsec:iteration map}. Proofs of the main results listed on the table in Section \ref{sec:main results} will be given in Section \ref{sec:proofs of main theorems}.

\begin{proof}[Proof of Lemma \ref{le:g reduction sierpinski}]
Let ${\Theta^{(n)}(z)(\mathbf{x},\mathbf{y})}_{j}^{i}$ be the same path amplitude Green function as ${g^{(n)}(z)(\mathbf{x},\mathbf{y})}_{j}^{i}$ without applying the first rotation $\tilde{G}$. 
\begin{figure}[htb]  
\begin{minipage}{0.25\textwidth}
\begin{tikzpicture}[scale = 0.7]
\begin{axis}[dashed, xmin=-1,ymin=-1,xticklabels = {,,}, yticklabels = {},xmax=5,ymax=3,]
\addplot[mark = none] coordinates{(-1,0) (5,0)};
\addplot[mark = none] coordinates{(-1,1) (5,1)};
\addplot[mark = none] coordinates{(-1,2) (5,2)};
\addplot[mark = none] coordinates{(0,-1) (0,3)};
\addplot[mark = none] coordinates{(1,-1) (1,3)};
\addplot[mark = none] coordinates{(2,-1) (2,3)};
\addplot[mark = none] coordinates{(3,-1) (3,3)};
\addplot[mark = none] coordinates{(4,-1) (4,3)};
\addplot[mark = *, solid, thick] coordinates {(0, 0) (2,0) (4,0) (3,1) (2,2) (1,1) (0,0)};
\addplot[mark = *, solid ,thick] coordinates {(1,1) (3,1) (2,0) (1,1)};
\addplot[mark=*] coordinates {(0,0)} node(1)[label={$\mathbf{0}$}]{} ;
\addplot[mark=*] coordinates {(1,1)} node(2)[]{} ;
\addplot[mark=*] coordinates {(2,0)} node(3)[]{} ;
\addplot[mark=*] coordinates {(3,1)} node(4)[]{} ;
\addplot[mark=*] coordinates {(2,2)} node(5)[label={$\mathbf{a_n}$}]{} ;
\addplot[mark=*] coordinates {(4,0)} node(6)[label={$\mathbf{b_n}$}]{} ;
\addplot[mark=none] coordinates {(-1,-1)} node(-1)[]{} ;
\addplot[mark=none] coordinates {(1,-1)} node(-2)[]{} ;
\draw [->,red,thick] (1) -- (2)node[xshift=0pt,yshift=-10pt] {$e_1$};
\draw [->,red,thick] (5) -- (2)node[xshift=12pt,yshift=5,red] {$e_4$};
\end{axis}        
\end{tikzpicture}
\caption{$\theta^{(n)}_1$}
\end{minipage}
\hfill
\begin{minipage}{0.25\textwidth}
\begin{tikzpicture}[scale = 0.7]
\begin{axis}[dashed, xmin=-1,ymin=-1,xticklabels = {,,}, yticklabels = {},xmax=5,ymax=3,]
\addplot[mark = none] coordinates{(-1,0) (5,0)};
\addplot[mark = none] coordinates{(-1,1) (5,1)};
\addplot[mark = none] coordinates{(-1,2) (5,2)};
\addplot[mark = none] coordinates{(0,-1) (0,3)};
\addplot[mark = none] coordinates{(1,-1) (1,3)};
\addplot[mark = none] coordinates{(2,-1) (2,3)};
\addplot[mark = none] coordinates{(3,-1) (3,3)};
\addplot[mark = none] coordinates{(4,-1) (4,3)};
\addplot[mark = *, solid, thick] coordinates {(0, 0) (2,0) (4,0) (3,1) (2,2) (1,1) (0,0)};
\addplot[mark = *, solid ,thick] coordinates {(1,1) (3,1) (2,0) (1,1)};
\addplot[mark=*] coordinates {(0,0)} node(1)[label={$\mathbf{0}$}]{} ;
\addplot[mark=*] coordinates {(1,1)} node(2)[]{} ;
\addplot[mark=*] coordinates {(2,0)} node(3)[]{} ;
\addplot[mark=*] coordinates {(3,1)} node(4)[]{} ;
\addplot[mark=*] coordinates {(2,2)} node(5)[label={$\mathbf{a_n}$}]{} ;
\addplot[mark=*] coordinates {(4,0)} node(6)[label={$\mathbf{b_n}$}]{} ;
\addplot[mark=none] coordinates {(-1,-1)} node(-1)[]{} ;
\addplot[mark=none] coordinates {(1,-1)} node(-2)[]{} ;
\draw [->,red,thick] (1) -- (3)node[xshift=-10pt,yshift=5pt] {$e_0$};
\addplot[mark=none] coordinates {(3,2)} node(32)[]{} ;
\addplot[mark=none] coordinates {(3,3)} node(33)[]{} ;
\draw [->,red,thick] (5) -- (4)node[xshift=-12pt,yshift=5pt] {$e_5$};
\end{axis}        
\end{tikzpicture}
\caption{$\theta^{(n)}_2$}
\end{minipage}
\hfill
\begin{minipage}{0.25\textwidth}
\begin{tikzpicture}[scale = 0.7]
\begin{axis}[dashed, xmin=-1,ymin=-1,xticklabels = {,,}, yticklabels = {},xmax=5,ymax=3,]
\addplot[mark = none] coordinates{(-1,0) (5,0)};
\addplot[mark = none] coordinates{(-1,1) (5,1)};
\addplot[mark = none] coordinates{(-1,2) (5,2)};
\addplot[mark = none] coordinates{(0,-1) (0,3)};
\addplot[mark = none] coordinates{(1,-1) (1,3)};
\addplot[mark = none] coordinates{(2,-1) (2,3)};
\addplot[mark = none] coordinates{(3,-1) (3,3)};
\addplot[mark = none] coordinates{(4,-1) (4,3)};
\addplot[mark = *, solid, thick] coordinates {(0, 0) (2,0) (4,0) (3,1) (2,2) (1,1) (0,0)};
\addplot[mark = *, solid ,thick] coordinates {(1,1) (3,1) (2,0) (1,1)};
\addplot[mark=*] coordinates {(0,0)} node(1)[label={$\mathbf{0}$}]{} ;
\addplot[mark=*] coordinates {(1,1)} node(2)[]{} ;
\addplot[mark=*] coordinates {(2,0)} node(3)[]{} ;
\addplot[mark=*] coordinates {(3,1)} node(4)[]{} ;
\addplot[mark=*] coordinates {(2,2)} node(5)[label={$\mathbf{a_n}$}]{} ;
\addplot[mark=*] coordinates {(4,0)} node(6)[label={$\mathbf{b_n}$}]{} ;
\addplot[mark=none] coordinates {(-1,-1)} node(-1)[]{} ;
\addplot[mark=none] coordinates {(1,-1)} node(-2)[]{} ;
\draw [->,red,thick] (1) -- (3)node[xshift=-10pt,yshift=5pt] {$e_0$};
\draw [->,red,thick] (5) -- (2)node[xshift=12pt,yshift=5] {$e_4$};
\end{axis}        
\end{tikzpicture}
\caption{$\theta^{(n)}_3$}
\end{minipage}

\begin{minipage}{0.25\textwidth}
\begin{tikzpicture}[scale = 0.7]
\begin{axis}[dashed, xmin=-1,ymin=-1,xticklabels = {,,}, yticklabels = {},xmax=5,ymax=3,]
\addplot[mark = none] coordinates{(-1,0) (5,0)};
\addplot[mark = none] coordinates{(-1,1) (5,1)};
\addplot[mark = none] coordinates{(-1,2) (5,2)};
\addplot[mark = none] coordinates{(0,-1) (0,3)};
\addplot[mark = none] coordinates{(1,-1) (1,3)};
\addplot[mark = none] coordinates{(2,-1) (2,3)};
\addplot[mark = none] coordinates{(3,-1) (3,3)};
\addplot[mark = none] coordinates{(4,-1) (4,3)};
\addplot[mark = *, solid, thick] coordinates {(0, 0) (2,0) (4,0) (3,1) (2,2) (1,1) (0,0)};
\addplot[mark = *, solid ,thick] coordinates {(1,1) (3,1) (2,0) (1,1)};
\addplot[mark=*] coordinates {(0,0)} node(1)[label={$\mathbf{0}$}]{} ;
\addplot[mark=*] coordinates {(1,1)} node(2)[]{} ;
\addplot[mark=*] coordinates {(2,0)} node(3)[]{} ;
\addplot[mark=*] coordinates {(3,1)} node(4)[]{} ;
\addplot[mark=*] coordinates {(2,2)} node(5)[label={$\mathbf{a_n}$}]{} ;
\addplot[mark=*] coordinates {(4,0)} node(6)[label={$\mathbf{b_n}$}]{} ;
\addplot[mark=none] coordinates {(-1,-1)} node(-1)[]{} ;
\addplot[mark=none] coordinates {(1,-1)} node(-2)[]{} ;
\draw [->,red,thick] (1) -- (2)node[xshift=0pt,yshift=-10pt] {$e_1$};
\draw [->,red,thick] (5) -- (4)node[xshift=-12pt,yshift=5pt] {$e_5$};
\end{axis}        
\end{tikzpicture}
\caption{$\theta^{(n)}_4$}
\end{minipage}
\hfill
\begin{minipage}{0.25\textwidth}
\begin{tikzpicture}[scale = 0.7]
\begin{axis}[dashed, xmin=-1,ymin=-1,xticklabels = {,,}, yticklabels = {},xmax=5,ymax=3,]
\addplot[mark = none] coordinates{(-1,0) (5,0)};
\addplot[mark = none] coordinates{(-1,1) (5,1)};
\addplot[mark = none] coordinates{(-1,2) (5,2)};
\addplot[mark = none] coordinates{(0,-1) (0,3)};
\addplot[mark = none] coordinates{(1,-1) (1,3)};
\addplot[mark = none] coordinates{(2,-1) (2,3)};
\addplot[mark = none] coordinates{(3,-1) (3,3)};
\addplot[mark = none] coordinates{(4,-1) (4,3)};
\addplot[mark = *, solid, thick] coordinates {(0, 0) (2,0) (4,0) (3,1) (2,2) (1,1) (0,0)};
\addplot[mark = *, solid ,thick] coordinates {(1,1) (3,1) (2,0) (1,1)};
\addplot[mark=*] coordinates {(0,0)} node(1)[label={$\mathbf{0}$}]{} ;
\addplot[mark=*] coordinates {(1,1)} node(2)[]{} ;
\addplot[mark=*] coordinates {(2,0)} node(3)[]{} ;
\addplot[mark=*] coordinates {(3,1)} node(4)[]{} ;
\addplot[mark=*] coordinates {(2,2)} node(5)[label={$\mathbf{a_n}$}]{} ;
\addplot[mark=*] coordinates {(4,0)} node(6)[label={$\mathbf{b_n}$}]{} ;
\addplot[mark=none] coordinates {(-1,-1)} node(-1)[]{} ;
\addplot[mark=none] coordinates {(1,-1)} node(-2)[]{} ;
\draw [->,red,thick] (1) -- (3)node[xshift=-10pt,yshift=5pt] {$e_0$};
\draw [->,red,thick] (1) -- (3)node[xshift=-10pt,yshift=5pt] {$e_0$};
\end{axis}        
\end{tikzpicture}
\caption{$\theta^{(n)}_5$}
\end{minipage}
\hfill
\begin{minipage}{0.25\textwidth}
\begin{tikzpicture}[scale = 0.7]
\begin{axis}[dashed, xmin=-1,ymin=-1,xticklabels = {,,}, yticklabels = {},xmax=5,ymax=3,]
\addplot[mark = none] coordinates{(-1,0) (5,0)};
\addplot[mark = none] coordinates{(-1,1) (5,1)};
\addplot[mark = none] coordinates{(-1,2) (5,2)};
\addplot[mark = none] coordinates{(0,-1) (0,3)};
\addplot[mark = none] coordinates{(1,-1) (1,3)};
\addplot[mark = none] coordinates{(2,-1) (2,3)};
\addplot[mark = none] coordinates{(3,-1) (3,3)};
\addplot[mark = none] coordinates{(4,-1) (4,3)};
\addplot[mark = *, solid, thick] coordinates {(0, 0) (2,0) (4,0) (3,1) (2,2) (1,1) (0,0)};
\addplot[mark = *, solid ,thick] coordinates {(1,1) (3,1) (2,0) (1,1)};
\addplot[mark=*] coordinates {(0,0)} node(1)[label={$\mathbf{0}$}]{} ;
\addplot[mark=*] coordinates {(1,1)} node(2)[]{} ;
\addplot[mark=*] coordinates {(2,0)} node(3)[]{} ;
\addplot[mark=*] coordinates {(3,1)} node(4)[]{} ;
\addplot[mark=*] coordinates {(2,2)} node(5)[label={$\mathbf{a_n}$}]{} ;
\addplot[mark=*] coordinates {(4,0)} node(6)[label={$\mathbf{b_n}$}]{} ;
\addplot[mark=none] coordinates {(-1,-1)} node(-1)[]{} ;
\addplot[mark=none] coordinates {(1,-1)} node(-2)[]{} ;
\draw [->,red,thick] (1) -- (3)node[xshift=-10pt,yshift=5pt] {$e_0$};
\draw [->,red,thick] (1) -- (2)node[xshift=0pt,yshift=-10pt] {$e_1$};
\end{axis}        
\end{tikzpicture}
\caption{$\theta^{(n)}_6$}
\end{minipage}
\end{figure}

Then $\Theta^{(n)}(z)$ has only six different variables which are corresponding to the graph from Figure 8 to Figure 13. The rest will be one of these six variables by the symmetry of $F^{(n)}$. For example, ${\Theta^{(n)}(z)(\mathbf{a_n},\mathbf{b_n})}_{2}^{5} = {\Theta^{(n)}(z)(\mathbf{0},\mathbf{a_n})}_{4}^{1}$ and 
${\Theta^{(n)}(z)(\mathbf{b_n},\mathbf{0})}_{0}^{2} = {\Theta^{(n)}(z)(\mathbf{0},\mathbf{a_n})}_{4}^{0}$.

Let
 $\theta^{(n)}_1(z) ={\Theta^{(n)}(z)(\mathbf{0},\mathbf{a_n})}_{4}^{1}$,
 $\theta^{(n)}_2(z) ={\Theta^{(n)}(z)(\mathbf{0},\mathbf{a_n})}_{5}^{0}$,
 $\theta^{(n)}_3 (z)={\Theta^{(n)}(z)(\mathbf{0},\mathbf{a_n})}_{4}^{0}$,
 $\theta^{(n)}_4(z) ={\Theta^{(n)}(z)(\mathbf{0},\mathbf{a_n})}_{5}^{1}$,
 $\theta^{(n)}_5(z) ={\Theta^{(n)}(z)(\mathbf{0},\mathbf{0})}_{4}^{4}$, and 
 $\theta^{(n)}_6(z) ={\Theta^{(n)}(z)(\mathbf{0},\mathbf{0})}_{4}^{5}$.

Then we have 
$${\Theta^{(n)}(z)(\mathbf{0},\mathbf{a_n})} = \bordermatrix{~& 0 & 1 & 4              & 5           \cr
                                                       0 & 0 & 0 & \theta^{(n)}_3 & \theta^{(n)}_2       \cr
                                                       1 & 0 & 0 & \theta^{(n)}_1 & \theta^{(n)}_4    \cr
                                                       4 & 0 & 0 & 0 & 0     \cr
                                                       5 & 0 & 0 & 0 & 0        \cr}, $$
and 
$${\Theta^{(n)}(z)(\mathbf{0},\mathbf{0})} = \bordermatrix{~& 0 & 1 & 4              & 5           \cr
                                                       0 & \theta^{(n)}_5 & \theta^{(n)}_6 & 0 & 0 \cr
                                                       1 & \theta^{(n)}_6 & \theta^{(n)}_5 & 0 & 0    \cr
                                                       4 & 0 & 0 & \theta^{(n)}_5 & \theta^{(n)}_6   \cr
                                                       5 & 0 & 0 & \theta^{(n)}_6 & \theta^{(n)}_5        \cr}.$$
Notice that ${g^{(n)}(z)(\mathbf{0},\mathbf{a_n})}_{j}^{i} = [G{\Theta^{(n)}(z)(\mathbf{0},\mathbf{a_n})}]_{ij}$, so we have
\begin{align*}
{g^{(n)}(\mathbf{0},\mathbf{a_n})} &= \tilde{G}{\Theta^{(n)}(z)(\mathbf{0},\mathbf{a_n})}\\
 &= \frac{1}{2}\bordermatrix{~& 0 & 1 & 4              & 5           \cr
                            0 & 0 & 0 &\theta^{(n)}_1-\theta^{(n)}_3 & \theta^{(n)}_4-\theta^{(n)}_2       \cr
                            1 & 0 & 0 & \theta^{(n)}_3-\theta^{(n)}_1 & \theta^{(n)}_2-\theta^{(n)}_4    \cr
                            4 & 0 & 0 & \theta^{(n)}_1+\theta^{(n)}_3 & \theta^{(n)}_2+\theta^{(n)}_4    \cr
                            5 & 0 & 0 & \theta^{(n)}_1+\theta^{(n)}_3 & \theta^{(n)}_2+\theta^{(n)}_4        \cr},                                                 
\end{align*}
and
\begin{align*}
{g^{(n)}(\mathbf{0},\mathbf{0})} &= \tilde{G}{\Theta^{(n)}(z)(\mathbf{0},\mathbf{0})}  \\
 &= \frac{1}{2}\bordermatrix{~& 0 & 1 & 4              & 5           \cr
                            0 & \theta^{(n)}_6-\theta^{(n)}_5 & \theta^{(n)}_5-\theta^{(n)}_6 &\theta^{(n)}_5+\theta^{(n)}_6 & \theta^{(n)}_5+\theta^{(n)}_6       \cr
                            1 & \theta^{(n)}_5-\theta^{(n)}_6 & \theta^{(n)}_6-\theta^{(n)}_5 & \theta^{(n)}_5+\theta^{(n)}_6 & \theta^{(n)}_5+\theta^{(n)}_6    \cr
                            4 & \theta^{(n)}_5+\theta^{(n)}_6 & \theta^{(n)}_5+\theta^{(n)}_6 & \theta^{(n)}_6-\theta^{(n)}_5 & \theta^{(n)}_5-\theta^{(n)}_6    \cr
                            5 & \theta^{(n)}_5+\theta^{(n)}_6 & \theta^{(n)}_5+\theta^{(n)}_6 & \theta^{(n)}_5-\theta^{(n)}_6 & \theta^{(n)}_6-\theta^{(n)}_5        \cr}.                                                 
\end{align*}
Let $u^{(n)}_1 = \frac{1}{2}(\theta^{(n)}_1+\theta^{(n)}_3)$, $u^{(n)}_2 = \frac{1}{2}(\theta^{(n)}_2+\theta^{(n)}_4)$, $u^{(n)}_3 = \frac{1}{2}(\theta^{(n)}_4-\theta^{(n)}_2)$, $u^{(n)}_4 = \frac{1}{2}(\theta^{(n)}_1-\theta^{(n)}_3)$, $u^{(n)}_5 = \frac{1}{2}(\theta^{(n)}_6-\theta^{(n)}_5)$, and $u^{(n)}_6 = \frac{1}{2}(\theta^{(n)}_6+\theta^{(n)}_5)$. We then  get all of the blocks of $g^{(n)}(z)$ with variables $u^{(n)}_i(z)$ for $i =1,...,6$ as stated in the lemma. 

As for $F^{(n)'}$, we obtain $g^{(n)'}$ by using symmetries of $g^{(n)}$.
\end{proof}

\begin{proof} [Proof of Theorem \ref{th: g recursive}] (a) Let $w$ be a path in $ F^{(n+1)} \cup  F^{(n+1)'}$ with starting point at one of the boundary points in 
$\partial F^{(n+1)} \cup \partial F^{(n+1)'} = \{ \mathbf{0}, \mathbf{a_{n+1}}, \mathbf{b_{n+1}},\mathbf{a'_{n+1}}, \mathbf{b'_{n+1}}\}$. 
Let $\zeta ^{(n)}_l$, $l=0, 1, 2, ...$ be the successive hitting times by  $w$ at  $2^{n} (F^{(1)} \cup F^{(1)'})$, i.e., $\zeta _0^{(n)}=0$
and 
$${\zeta_l^{(n)}} = \inf \{ t\ge \zeta_{l-1}^{(n)}+1; w_t=x_t^{k_t}, x_t  \in 2^n(F^{(1)} \cup F^{(1)'})\},$$
 $l \ge 1$. We put ${\zeta_l^{(n)}} =\infty$ if the above set is empty. 
 
 Let  $\tilde{w}_l=\frac{1}{2^n}w_{\zeta _l}$. Here $\zeta_l=\zeta_l ^{(n)}$ and if $w=x^i$, then we define $cw=(cx)^i$, for any constant $c$.
 Then $\tilde{w}$ is a path in $F^{(1)} \cup F^{(1)'}$. Let $\Gamma ^{(n+1)}  (x^i, y^j)$ be the set of all path $w$ in $ F^{(n+1)} \cup  F^{(n+1)'}$ such that $w_0=x^i $ and $ w_{\tau ^{(n+1)}}=y^j$.
 
 By definition, 
 \begin{eqnarray}
 g^{(n+1)}(x, y)^i_j\\
 &=& \sum_{w \in \Gamma ^{(n+1)}(x^i, y^j)}\prod _{t=0}^{\tau ^{(n+1)} -1}\rho ^{(0)}(w_t, w_{t+1})\\
 &=& \sum _{\tilde {w} \in \Gamma ^{(1)} (\frac{1}{2^n}x^i, \frac{1}{2^n}y^j)} \sum_{w} \prod _{t=0}^{\tau^{(n+1)} -1}\rho^{(0)}(w_t, w_{t+1}),
 \end{eqnarray}
 here the second sum of the above equation is summing over all path $w \in \Gamma ^{(n+1)} (x^i, y^j)$ where  $w$ is compatible with $2^n \tilde {w} $. The above equals to 
 \begin{eqnarray}
& = &\sum _{\tilde {w} \in \Gamma ^{(1)} (\frac{1}{2^n}x^i, \frac{1}{2^n}y^j)} \prod_{l=0}^{\tau^{(1)}(\tilde w)-1}\sum_{w \in \Gamma ^{(n)} (2^n \tilde {w_l}, 2^n \tilde {w}_{l+1})} \prod _{t=0}^{\tau^{(n)} -1}\rho^{(0)}(w_t, w_{t+1})\\
 &=& \sum _{\tilde {w} \in \Gamma ^{(1)} (\frac{1}{2^n}x^i, \frac{1}{2^n}y^j)} \prod_{l=0}^{\tau^{(1)}(\tilde w)-1} g^{(n)}(2^n \tilde {w_l}, 2^n \tilde {w}_{l+1})\\
 &=&\sum _{\tilde {w} \in \Gamma ^{(1)} (\frac{1}{2^n}x^i, \frac{1}{2^n}y^j)} \prod_{l=0}^{\tau^{(1)}(\tilde w)-1}
 \rho ^{(n)}( \tilde {w_l}, \tilde {w}_{l+1})\\
 &=& g^{(1)}_{\rho^{(n)}}(\frac{1}{2^n} x, \frac{1}{2^n} y)^i_j.
 \end{eqnarray}
 
 (b) follows from Part (a) and (\ref{eq:rho n}).
 \end{proof}
 
 \begin{proof}[Proof of Theorem \ref{th:Dirichlet g}]

For Part (a), 
by (\ref{eq:Green rho}) with $n=1$,
\begin{equation}
g_{\rho}^{(1)}(\mathbf{x},\mathbf{y})_{j}^{i} =\sum_{t=1}^{\infty} \Psi_\rho (w_0 = \mathbf{x}^i , w_{t} = \mathbf{y}^j ,\tau^{(1)} = t) .
\end{equation}

Let 
\begin{equation}
\bar{g}_{\rho}^{(1)}(\mathbf{x},\mathbf{y})_{j}^{i} =\sum_{t=0}^{\infty} \Psi_\rho (w_0 = \mathbf{x}^i , w_{t} = \mathbf{y}^j ,\bar{\tau}^{(1)} = t) ,
\end{equation}
where 
$\bar{\tau}^{(1)}  = \inf \{ t\geq 0; w_t=x_t^{i_t}, x_t  \in \{\mathbf{0},\mathbf{4},\mathbf{5}\}\}$.

By definitions, we have 
$\bar{g}_{\rho}^{(1)}(\mathbf{x},\mathbf{y})_{j}^{i}=g_{\rho}^{(1)}(\mathbf{x},\mathbf{y}))_{j}^{i}$, if $\mathbf{x}=\mathbf{1}, \mathbf{2}, \mathbf{3}$ and $\bar{g}_{\rho}^{(1)}(\mathbf{x},\mathbf{y})_{j}^{i}=\delta(\mathbf{x}, \mathbf{y})\delta(i, j)$, if $\mathbf{x}=\mathbf{0}, \mathbf{4}, \mathbf{5}$.

Let $\mathbf{x}=\mathbf{1}, \mathbf{2}, \mathbf{3}$.  Let $\Gamma$ be the set of all paths $w$ such that  $w_0 = \mathbf{x}^i , w_{t} = \mathbf{y}^j ,\tau^{(1)} = t$. Let $\Gamma_{\mathbf{u}^k}=\{ w \in \Gamma ; w_1=\mathbf{u}^k\}.$ Then
\begin{eqnarray}
g_{\rho}^{(1)}(\mathbf{x},\mathbf{y})_{j}^{i} =\sum_{t=1}^{\infty} \Psi_\rho (\Gamma) \\
=\sum_{\mathbf{u}^k}\sum_{t=1}^{\infty} \Psi_\rho (\Gamma_{\mathbf{u}^k}) \\
=\sum_{\mathbf{u} } \sum _{k} \rho (\mathbf{x}, \mathbf{u})^i_k \bar{g}^{(1)}_{\rho} (\mathbf{u}, \mathbf{y})^k_j\\
=\sum_{\mathbf{u}=\mathbf{1}, \mathbf{2}, \mathbf{3}}\sum _k \rho(\mathbf{x}, \mathbf{u})^i_k g^{(1)}_\rho(\mathbf{u}, \mathbf{y})^k_j+\rho(\mathbf{x}, \mathbf{y})^i_j.
\end{eqnarray}
This implies that 
\begin{eqnarray}
g^{(1)}_{\rho}(\mathbf{x}, \mathbf{y})=(\tilde {\rho} {g}^{(1)}_{\rho})(\mathbf{x}, \mathbf{y})+\rho(\mathbf{x}, \mathbf{y}),
\end{eqnarray}
for all $\mathbf{x}=\mathbf{1}, \mathbf{2}, \mathbf{3}$, $\mathbf{y}=\mathbf{0}, \mathbf{4}, \mathbf{5}$, here $\tilde {\rho}$ is the $12 \times 12$ matrix obtained from $\rho $  with restriction  to components involving only $\mathbf{1}, \mathbf{2}, \mathbf{3}$ and all $k$.
Therefore, we have 
\begin{eqnarray}
[I_{12}-\tilde {\rho}]g^{(1)}_{\rho}(\mathbf{x}, \mathbf{y})=\rho(\mathbf{x}, \mathbf{y}),
\end{eqnarray}
for all $\mathbf{x}=\mathbf{1}, \mathbf{2}, \mathbf{3}$, $\mathbf{y}=\mathbf{0}, \mathbf{4}, \mathbf{5}$.

For Part (b),  if $I_{12}-\tilde {\rho}$ is invertible, then by Part (a), we have 
for all $\mathbf{y}=\mathbf{0}, \mathbf{4}, \mathbf{5} $, 
\begin{equation}
 \begin{bmatrix}
g_{\rho}^{(1)}(\mathbf{1},\mathbf{y})  \\ g_{\rho}^{(1)}(\mathbf{2},\mathbf{y}) \\ g_{\rho}^{(1)}(\mathbf{3},\mathbf{y})
\end{bmatrix} = \frac{1}{I_{12}-\tilde {\rho}} \begin{bmatrix}
\rho(\mathbf{1},\mathbf{y})  \\ \rho(\mathbf{2},\mathbf{y})  \\ \rho(\mathbf{3},\mathbf{y})
\end{bmatrix},
\end{equation}
where $I_{12}$ is the $12 \times 12$ identity matrix.
\end{proof}

\begin{proof} [Proof of Lemma \ref{le:fast inversion}] (a) We extend $\rho^{(n)}$ by parallel translation  to become a transition function on $F^{(1)}$. We also reorder the matrix by switching $\mathbf{2}^0, \mathbf{2}^1$ with $\mathbf{2}^2, \mathbf{2}^3$ and $\mathbf{3}^2, \mathbf{3}^3$ with $\mathbf{3}^4,\mathbf{3}^5$ for rows and columns. After switching,  by Lemma \ref{le:g reduction sierpinski} and  (\ref{eq:rho n}), we have $[\tilde {\rho}^{(n)}] =$
\begin{scriptsize}
\begin{eqnarray}\label{eq:rho tilde n formula}
\bordermatrix{~& \mathbf{1}^0& \mathbf{1}^1 & \mathbf{1}^4 & \mathbf{1}^5& \mathbf{2}^2&\mathbf{2}^3&\mathbf{2}^0&\mathbf{2}^1 & \mathbf{3}^4&\mathbf{3}^5&\mathbf{3}^2&\mathbf{3}^3       \cr
   \mathbf{1}^0 &u^{(n)}_5&-u^{(n)}_5&u^{(n)}_6&u^{(n)}_6  &u^{(n)}_1 &u^{(n)}_2&0&0 &  0&0&-u^{(n)}_3&-u^{(n)}_4\cr 
   \mathbf{1}^1 &-u^{(n)}_5&u^{(n)}_5&u^{(n)}_6&u^{(n)}_6  &u^{(n)}_1 &u^{(n)}_2&0&0 &  0&0&u^{(n)}_3&u^{(n)}_4 \cr
   \mathbf{1}^4 &u^{(n)}_6&u^{(n)}_6&u^{(n)}_5&-u^{(n)}_5  &u^{(n)}_4 &u^{(n)}_3&0&0 &  0&0&u^{(n)}_2&u^{(n)}_1 \cr
   \mathbf{1}^5 &u^{(n)}_6&u^{(n)}_6&-u^{(n)}_5&u^{(n)}_5  &-u^{(n)}_4&-u^{(n)}_3&0&0&  0&0& u^{(n)}_2&u^{(n)}_1 \cr
   \mathbf{2}^2 &0&0&-u^{(n)}_3&-u^{(n)}_4   &u^{(n)}_5&-u^{(n)}_5&u^{(n)}_6&u^{(n)}_6& u^{(n)}_1&u^{(n)}_2&0&0\cr   
   \mathbf{2}^3 &0&0&u^{(n)}_3&u^{(n)}_4   &-u^{(n)}_5&u^{(n)}_5&u^{(n)}_6&u^{(n)}_6&   u^{(n)}_1&u^{(n)}_2&0&0\cr   
   \mathbf{2}^0 &0&0&u^{(n)}_2 &u^{(n)}_1  &u^{(n)}_6&u^{(n)}_6&u^{(n)}_5&-u^{(n)}_5&   u^{(n)}_4&u^{(n)}_3&0&0\cr
   \mathbf{2}^1 &0&0&u^{(n)}_2 &u^{(n)}_1  &u^{(n)}_6&u^{(n)}_6&-u^{(n)}_5&u^{(n)}_5&  -u^{(n)}_4&-u^{(n)}_3&0&0\cr
   \mathbf{3}^4 &u^{(n)}_1&u^{(n)}_2&0&0   &0&0&-u^{(n)}_3&-u^{(n)}_4  &u^{(n)}_5&-u^{(n)}_5&u^{(n)}_6&u^{(n)}_6 \cr
   \mathbf{3}^5 &u^{(n)}_1&u^{(n)}_2&0&0   &0&0&u^{(n)}_3&u^{(n)}_4   &-u^{(n)}_5&u^{(n)}_5&u^{(n)}_6&u^{(n)}_6 \cr         
   \mathbf{3}^2 &u^{(n)}_4&u^{(n)}_3&0&0   &0&0&u^{(n)}_2&u^{(n)}_1 &u^{(n)}_6&u^{(n)}_6&u^{(n)}_5&-u^{(n)}_5 \cr          
   \mathbf{3}^3 &-u^{(n)}_4&-u^{(n)}_3&0&0 &0&0&u^{(n)}_2&u^{(n)}_1 &u^{(n)}_6&u^{(n)}_6&-u^{(n)}_5&u^{(n)}_5\cr}.
\end{eqnarray}
\end{scriptsize}
Notice we have  switched $\mathbf{2}^0, \mathbf{2}^1$ with $\mathbf{2}^2, \mathbf{2}^3$ and $\mathbf{3}^2, \mathbf{3}^3$ with $\mathbf{3}^4,\mathbf{3}^5$ for rows and columns. This is an important step since  after switching, this matrix has the following block structure, which  is  easier to find its inverse than for a general $12 \times 12 $ matrix,
$$
\left[
\begin{array}{c|c|c}
 A& B &C\\
\hline
C & A& B\\
\hline
B & C& A\\
\end{array}
\right],$$
where 
\begin{eqnarray}\label{eq: An}
A=A^{(n)}=\left[
\begin{array}{cccc}
u^{(n)}_5&-u^{(n)}_5&u^{(n)}_6&u^{(n)}_6\\
-u^{(n)}_5&u^{(n)}_5&u^{(n)}_6&u^{(n)}_6 \\
u^{(n)}_6&u^{(n)}_6&u^{(n)}_5&-u^{(n)}_5 \\
u^{(n)}_6&u^{(n)}_6&-u^{(n)}_5&u^{(n)}_5\\
\end{array}
\right],
\end{eqnarray}
\begin{eqnarray}\label{eq: Bn}
B=B^{(n)}=\left[
\begin{array}{cccc}
u^{(n)}_1 &u^{(n)}_2&0&0 \\
u^{(n)}_1 &u^{(n)}_2&0&0 \\
u^{(n)}_4 &u^{(n)}_3&0&0 \\
-u^{(n)}_4&-u^{(n)}_3&0&0\\
\end{array}
\right],
\end{eqnarray}
\begin{eqnarray}\label{eq: Cn}
C=C^{(n)}=\left[
\begin{array}{cccc}
 0&0&-u^{(n)}_3&-u^{(n)}_4 \\
0&0&u^{(n)}_3&u^{(n)}_4 \\
0&0&u^{(n)}_2&u^{(n)}_1  \\
0&0& u^{(n)}_2&u^{(n)}_1\\
\end{array}
\right].
\end{eqnarray}
Proof (b). By (a),
$$-[I_{12} - \tilde {\rho}^{(n)}]=\left[
\begin{array}{c|c|c}
\bar{A} & B& C \\
\hline
C & \bar{A} & B\\
\hline
B & C& \bar{A}\\
\end{array}
\right].$$
By direct computation, we verify that 
$$
\left[
\begin{array}{c|c|c}
\bar{A} & B & C \\
\hline
C & \bar{A} & B\\
\hline
B & C& \bar{A}\\
\end{array}
\right]
\left[
\begin{array}{c|c|c}
X & Y & Z\\
\hline
Z & X & Y\\
\hline
Y & Z & X\\
\end{array}
\right]=\left[
\begin{array}{c|c|c}
I_4 & 0 & 0\\
\hline
0 & I_4 & 0\\
\hline
0 & 0 & I_4\\
\end{array}
\right].
$$
\end{proof}

\begin{proof}[Proof of Theorem \ref{th:Sierpinski step 1}]

To apply  Lemma \ref{le:fast inversion}, we reorder the matrix $[\tilde {\rho}^{(0)}] $ by switching $\mathbf{2}^0, \mathbf{2}^1$ with $\mathbf{2}^2, \mathbf{2}^3$ and $\mathbf{3}^2, \mathbf{3}^3$ with $\mathbf{3}^4,\mathbf{3}^5$ for rows and columns. After switching, 
by (\ref{eq: An}) - (\ref{eq: Cn}),
we obtain $A^{(0)}, B^{(0)}$, and $C^{(0)}$. 

By Lemma \ref{le:fast inversion}(b), Theorem \ref{th:Dirichlet g}(b), and substitute  the values of $u^{(0)}_1,..., u^{(0)}_6$, we have
$$
{g_{\rho^{(0)}}^{(1)}(\mathbf{1},\mathbf{5})} = \bordermatrix{~& 0 & 1 & 4              & 5           \cr
                                                       0  &0&0& rz&-\frac{r^2z^2}{1+rz}     \cr
                                                       1  &0&0&-\frac{2r^3z^3}{1+rz}-rz&r^2z^2+\frac{r^3z^3}{1+rz} \cr
                                                       4  &0&0&-\frac{2r^3z^3}{1+rz}+rz&r^2z^2+\frac{r^3z^3}{1+rz}    \cr
                                                       5  &0&0& rz&\frac{r^2z^2}{1+rz}       \cr} 
,$$
$$
{g_{\rho^{(0)}}^{(1)}(\mathbf{2},\mathbf{5})} = \bordermatrix{~& 0 & 1 & 4              & 5           \cr
                                                       2  &0&0& -\frac{r^2z^2}{1+rz} &\frac{r^2z^2}{1+rz}    \cr
                                                       3  &0&0&r^2z^2+\frac{r^3z^3}{1+rz}&r^2z^2+\frac{r^3z^3}{1+rz} \cr
                                                       0  &0&0&r^2z^2+\frac{r^3z^3}{1+rz}&r^2z^2+\frac{r^3z^3}{1+rz}    \cr
                                                       1  &0&0& \frac{r^2z^2}{1+rz}&-\frac{r^2z^2}{1+rz}       \cr} 
,$$
$$
{g_{\rho^{(0)}}^{(1)}(\mathbf{3},\mathbf{5})} = \bordermatrix{~& 0 & 1 & 4              & 5           \cr
                                                       4  &0&0&\frac{r^2z^2}{1+rz}& rz     \cr
                                                       5  &0&0&r^2z^2+\frac{r^3z^3}{1+rz}&-\frac{2r^3z^3}{1+rz}+rz \cr
                                                       2  &0&0&r^2z^2+\frac{r^3z^3}{1+rz}&-\frac{2r^3z^3}{1+rz}-rz    \cr
                                                       3  &0&0&-\frac{r^2z^2}{1+rz} &rz      \cr} 
.$$
According to this reordering, 
$$ 
{\rho^{(0)}(\mathbf{1},\mathbf{5})} = \bordermatrix{~& 0 & 1 & 4              & 5           \cr
                                                       0  &0&0& u^{(0)}_4&u^{(0)}_3      \cr
                                                       1  &0&0&- u^{(0)}_4&-u^{(0)}_3   \cr
                                                       4  &0&0& u^{(0)}_1&u^{(0)}_2    \cr
                                                       5  &0&0& u^{(0)}_1&u^{(0)}_2       \cr} ,
$$ 
$$
{\rho^{(0)}(\mathbf{2},\mathbf{5})} = O_{4 \times 4},
$$
$$
{\rho^{(0)}(\mathbf{3},\mathbf{5})} = \bordermatrix{~& 0 & 1 & 4              & 5           \cr
                                                       4  &0&0& u^{(0)}_2&u^{(0)}_1      \cr
                                                       5  &0&0& u^{(0)}_2&u^{(0)}_1   \cr
                                                       2  &0&0& -u^{(0)}_3&-u^{(0)}_4   \cr
                                                       3  &0&0& u^{(0)}_3&u^{(0)}_4       \cr}, 
$$
$$
{\rho^{(0)}(\mathbf{0},\mathbf{2})} = \bordermatrix{~& 2 & 3 & 0              & 1           \cr
                                                       0 & -u^{(0)}_3 & -u^{(0)}_4&0&0       \cr
                                                       1 & u^{(0)}_3 & u^{(0)}_4&0&0   \cr
                                                       4 & u^{(0)}_2 & u^{(0)}_1&0&0    \cr
                                                       5 & u^{(0)}_2 & u^{(0)}_1&0&0       \cr},
$$
$$
{\rho^{(0)}(\mathbf{0},\mathbf{1})} = \bordermatrix{~& 0 & 1 & 4              & 5           \cr
                                                       0  &0&0& u^{(0)}_4&u^{(0)}_3      \cr
                                                       1  &0&0&-u^{(0)}_4&-u^{(0)}_3   \cr
                                                       4  &0&0& u^{(0)}_1&u^{(0)}_2    \cr
                                                       5  &0&0& u^{(0)}_1&u^{(0)}_2       \cr}.
$$
Then
\begin{align}
g_{\rho^{(0)}}^{(1)}(\mathbf{0},\mathbf{a_1}) = g_{\rho^{(0)}}^{(1)}(\mathbf{0},\mathbf{5})&= \rho^{(0)}(\mathbf{0},\mathbf{1})g_{\rho^{(0)}}^{(1)}(\mathbf{1},\mathbf{5})+\rho^{(0)}(\mathbf{0},\mathbf{2})g_{\rho^{(0)}}^{(1)}(\mathbf{2},\mathbf{5})\\
&= \bordermatrix{~& 0 & 1 & 4              & 5           \cr
                                                       0  &0&0& -r^3z^3+r^2z^2-\frac{3r^4z^4}{1+rz} &0    \cr
                                                       1  &0&0& r^3z^3-r^2z^2+\frac{3r^4z^4}{1+rz} &0  \cr
                                                       4  &0&0& r^3z^3+r^2z^2-\frac{r^4z^4}{1+rz} & 2r^3z^3+\frac{2r^4z^4}{1+rz}\cr
                                                       5  &0&0& r^3z^3+r^2z^2-\frac{r^4z^4}{1+rz} & 2r^3z^3+\frac{2r^4z^4}{1+rz}\cr}.
\end{align}
By comparing  the expressions of $g_{\rho^{(0)}}^{(1)}(\mathbf{0},\mathbf{a_1})$  with the one in Lemma \ref{le:g reduction sierpinski},  for $n=1$, we have obtained the formulas for $u^{(1)}_1$ - $u^{(1)}_4$.
Next, for $u^{(1)}_5$ and  $u^{(1)}_6$, we have
$$
{\rho^{(0)}(\mathbf{0},\mathbf{0})} = \bordermatrix{~& 0 & 1 & 4              & 5           \cr
                                                       0  &u^{(0)}_5&-u^{(0)}_5& u^{(0)}_6&u^{(0)}_6      \cr
                                                       1  &-u^{(0)}_5&u^{(0)}_5&u^{(0)}_6&u^{(0)}_6  \cr
                                                       4  &u^{(0)}_6&u^{(0)}_6& u^{(0)}_5&-u^{(0)}_5    \cr
                                                       5  &u^{(0)}_6&u^{(0)}_6& -u^{(0)}_5&u^{(0)}_5       \cr},
$$
$$
{\rho^{(0)}(\mathbf{5},\mathbf{5})} = \bordermatrix{~& 0 & 1 & 4              & 5           \cr
                                                       0  &u^{(0)}_5&-u^{(0)}_5& u^{(0)}_6&u^{(0)}_6      \cr
                                                       1  &-u^{(0)}_5&u^{(0)}_5&u^{(0)}_6&u^{(0)}_6  \cr
                                                       4  &u^{(0)}_6&u^{(0)}_6& u^{(0)}_5&-u^{(0)}_5    \cr
                                                       5  &u^{(0)}_6&u^{(0)}_6& -u^{(0)}_5&u^{(0)}_5       \cr},
$$
$$
{\rho^{(0)}(\mathbf{1},\mathbf{0})} = \bordermatrix{~& 0 & 1 & 4              & 5           \cr
                                                       0  &u^{(0)}_2&u^{(0)}_1&0&0      \cr
                                                       1  &u^{(0)}_2&u^{(0)}_1&0&0   \cr
                                                       4  &-u^{(0)}_3&-u^{(0)}_4&0&0    \cr
                                                       5  &u^{(0)}_3&u^{(0)}_4&0&0       \cr},
$$
$$
{\rho^{(0)}(\mathbf{2},\mathbf{0})} = \bordermatrix{~& 0 & 1 & 4              & 5           \cr
                                                       2  &u^{(0)}_4&u^{(0)}_3&0&0      \cr
                                                       3  &-u^{(0)}_4&-u^{(0)}_3&0&0   \cr
                                                       0  &u^{(0)}_1&u^{(0)}_2&0&0  \cr
                                                       1  &u^{(0)}_1&u^{(0)}_2&0&0       \cr},
$$
$$
{\rho^{(0)}(\mathbf{3},\mathbf{0})} = O_{4 \times 4},
$$
$${\rho^{(0)}(\mathbf{5},\mathbf{1})} = \bordermatrix{~& 0 & 1 & 4              & 5           \cr
                                                       0  &u^{(0)}_2&u^{(0)}_1&0&0      \cr
                                                       1  &u^{(0)}_2&u^{(0)}_1&0&0   \cr
                                                       4  &-u^{(0)}_3&-u^{(0)}_4&0&0    \cr
                                                       5  &u^{(0)}_3&u^{(0)}_4&0&0       \cr},
$$
$${\rho^{(0)}(\mathbf{5},\mathbf{3})} = \bordermatrix{~& 4 & 5 & 2             & 3           \cr
                                                       0  &0&0&u^{(0)}_1&u^{(0)}_2      \cr
                                                       1  &0&0&u^{(0)}_1&u^{(0)}_2   \cr
                                                       4  &0&0&u^{(0)}_4&u^{(0)}_3    \cr
                                                       5  &0&0&-u^{(0)}_4&-u^{(0)}_3       \cr}.
$$                                                 
Plugging in the values of $u^{(0)}_1,...,u^{(0)}_6$ and,  by Theorem \ref{th:Dirichlet g}(b), we have 
$$
{g_{\rho^{(0)}}^{(1)}(\mathbf{1},\mathbf{0})} = \bordermatrix{~& 0 & 1 & 4              & 5           \cr
                                                       0  &\frac{r^2z^2}{1+rz}&rz& 0&0     \cr
                                                       1  &r^2z^2+\frac{r^3z^3}{1+rz}&-\frac{2r^3z^3}{1+rz}+rz&0&0 \cr
                                                       4  &r^2z^2+\frac{r^3z^3}{1+rz}&-\frac{2r^3z^3}{1+rz}-rz&0&0 \cr
                                                       5  &-\frac{r^2z^2}{1+rz}&rz& 0&0     \cr},
$$
$$
{g_{\rho^{(0)}}^{(1)}(\mathbf{2},\mathbf{0})} = \bordermatrix{~& 0 & 1 & 4              & 5           \cr
                                                       2  &rz&-\frac{r^2z^2}{1+rz}&0 &0    \cr
                                                       3  &-\frac{2r^3z^3}{1+rz}-rz&r^2z^2+\frac{r^3z^3}{1+rz}&0&0 \cr
                                                       0  &-\frac{2r^3z^3}{1+rz}+rz&r^2z^2+\frac{r^3z^3}{1+rz}&0&0    \cr
                                                       1  &rz&\frac{r^2z^2}{1+rz}&0 &0       \cr},
$$
$$
{g_{\rho^{(0)}}^{(1)}(\mathbf{3},\mathbf{0})} = \bordermatrix{~& 0 & 1 & 4              & 5           \cr
                                                       4  &-\frac{r^2z^2}{1+rz}&\frac{r^2z^2}{1+rz}&0& 0     \cr
                                                       5  &r^2z^2+\frac{r^3z^3}{1+rz}&r^2z^2+\frac{r^3z^3}{1+rz}&0&0 \cr
                                                       2  &r^2z^2+\frac{r^3z^3}{1+rz}&r^2z^2+\frac{r^3z^3}{1+rz}&0&0   \cr
                                                       3  &\frac{r^2z^2}{1+rz}&-\frac{r^2z^2}{1+rz}&0& 0      \cr}.
$$
Therefore, by (\ref{eq:g1 boundary 1}),
\begin{scriptsize}
\begin{eqnarray}
&&g_{\rho^{(0)}}^{(1)}(\mathbf{0},\mathbf{0}) =  \\
&&\bordermatrix{~& 0 & 1 & 4              & 5           \cr
                                                       0  &r^3z^3+r^2z^2+\frac{3r^4z^4}{1+rz} &-r^3z^3-r^2z^2-\frac{3r^4z^4}{1+rz}& r^3z^3-r^2z^2-\frac{r^4z^4}{1+rz}& r^3z^3-r^2z^2-\frac{r^4z^4}{1+rz}   \cr
                                                       1  &-r^3z^3-r^2z^2-\frac{3r^4z^4}{1+rz}&r^3z^3+r^2z^2+\frac{3r^4z^4}{1+rz}& r^3z^3-r^2z^2-\frac{r^4z^4}{1+rz} &r^3z^3-r^2z^2-\frac{r^4z^4}{1+rz}  \cr
                                                       4  &r^3z^3-r^2z^2-\frac{r^4z^4}{1+rz}&r^3z^3-r^2z^2-\frac{r^4z^4}{1+rz}& r^3z^3+r^2z^2+\frac{3r^4z^4}{1+rz} & -r^3z^3-r^2z^2-\frac{3r^4z^4}{1+rz}\cr
                                                       5  &r^3z^3-r^2z^2-\frac{r^4z^4}{1+rz}&r^3z^3-r^2z^2-\frac{r^4z^4}{1+rz}& -r^3z^3-r^2z^2-\frac{3r^4z^4}{1+rz} & r^3z^3+r^2z^2+\frac{3r^4z^4}{1+rz}\cr}. \nonumber
\end{eqnarray}
\end{scriptsize}
By comparing  the expressions of  $g_{\rho^{(0)}}^{(1)}(\mathbf{0},\mathbf{0})$ with the one in Lemma \ref{le:g reduction sierpinski},  for $n=1$, we 
have obtained  the formulas for $u^{(1)}_5$ and  $u^{(1)}_6$.
\end{proof}

\begin{proof}[Proof of Theorem \ref{th:T formula}] By Lemma \ref{le:fast inversion}(b) and  Theorem \ref{th:Dirichlet g}(b), on $F^{(n+1)}$ we have
\begin{align*}
u^{(n+1)}_3 &= [\rho^{(n)}(\mathbf{0},\mathbf{1})g_{\rho^{(n)}}^{(1)}(\mathbf{1},\mathbf{5})]^0_5+[\rho^{(n)}(\mathbf{0},\mathbf{2})g_{\rho^{(n)}}^{(1)}(\mathbf{2},\mathbf{5})]^0_5\\
&= u^{(n)}_4 g_{\rho^{(n)}}^{(1)}(\mathbf{1},\mathbf{5})^4_5 + u^{(n)}_3 g_{\rho^{(n)}}^{(1)}(\mathbf{1},\mathbf{5})^5_5 - u^{(n)}_4 g_{\rho^{(n)}}^{(1)}(\mathbf{2},\mathbf{5})^3_5 - u^{(n)}_3 g_{\rho^{(n)}}^{(1)}(\mathbf{2},\mathbf{5})^2_5 \\
& = u^{(n)}_4(g_{\rho^{(n)}}^{(1)}(\mathbf{1},\mathbf{5})^4_5 - g_{\rho^{(n)}}^{(1)}(\mathbf{2},\mathbf{5})^3_5) + u^{(n)}_3 (g_{\rho^{(n)}}^{(1)}(\mathbf{1},\mathbf{5})^5_5 - g_{\rho^{(n)}}^{(1)}(\mathbf{2},\mathbf{5})^2_5)
\end{align*}

\begin{align*}
u^{(n+1)}_4 &= [\rho^{(n)}(\mathbf{0},\mathbf{1})g_{\rho^{(n)}}^{(1)}(\mathbf{1},\mathbf{5})]^0_4+[\rho^{(n)}(\mathbf{0},\mathbf{2})g_{\rho^{(n)}}^{(1)}(\mathbf{2},\mathbf{5})]^0_4\\
&= u^{(n)}_4 g_{\rho^{(n)}}^{(1)}(\mathbf{1},\mathbf{5})^4_4 + u^{(n)}_3 g_{\rho^{(n)}}^{(1)}(\mathbf{1},\mathbf{5})^5_4 - u^{(n)}_4 g_{\rho^{(n)}}^{(1)}(\mathbf{2},\mathbf{5})^3_4 - u^{(n)}_3 g_{\rho^{(n)}}^{(1)}(\mathbf{2},\mathbf{5})^2_4 \\
& = u^{(n)}_4(g_{\rho^{(n)}}^{(1)}(\mathbf{1},\mathbf{5})^4_4 - g_{\rho^{(n)}}^{(1)}(\mathbf{2},\mathbf{5})^3_4) + u^{(n)}_3 (g_{\rho^{(n)}}^{(1)}(\mathbf{1},\mathbf{5})^5_4 - g_{\rho^{(n)}}^{(1)}(\mathbf{2},\mathbf{5})^2_4)
\end{align*}

\begin{align*}
u^{(n+1)}_5 &= \rho^{(n)}(\mathbf{0},\mathbf{0})^0_0+[\rho^{(n)}(\mathbf{0},\mathbf{1})g_{\rho^{(n)}}^{(1)}(\mathbf{1},\mathbf{0})]^0_0+[\rho^{(n)}(\mathbf{0},\mathbf{2})g_{\rho^{(n)}}^{(1)}(\mathbf{2},\mathbf{0})]^0_0\\
&= u^{(n)}_5 + u^{(n)}_4 g_{\rho^{(n)}}^{(1)}(\mathbf{1},\mathbf{0})^4_0 + u^{(n)}_3 g_{\rho^{(n)}}^{(1)}(\mathbf{1},\mathbf{0})^5_0 - u^{(n)}_4 g_{\rho^{(n)}}^{(1)}(\mathbf{2},\mathbf{0})^3_0 - u^{(n)}_3 g_{\rho^{(n)}}^{(1)}(\mathbf{2},\mathbf{0})^2_0 \\
& = u^{(n)}_5 + u^{(n)}_4(g_{\rho^{(n)}}^{(1)}(\mathbf{1},\mathbf{0})^4_0 - g_{\rho^{(n)}}^{(1)}(\mathbf{2},\mathbf{0})^3_0) + u^{(n)}_3 (g_{\rho^{(n)}}^{(1)}(\mathbf{1},\mathbf{0})^5_0 - g_{\rho^{(n)}}^{(1)}(\mathbf{2},\mathbf{0})^2_0)
\end{align*}

Similarly, on $F^{(n+1)'}$ we have

\begin{align*}
u^{(n+1)}_1 &= [\rho^{(n)}(\mathbf{0},\mathbf{1'})g_{\rho^{(n)}}^{(1)}(\mathbf{1'},\mathbf{5'})]^0_1+[\rho^{(n)}(\mathbf{0},\mathbf{2'})g_{\rho^{(n)}}^{(1)}(\mathbf{2'},\mathbf{5'})]^0_1\\
&= u^{(n)}_1 g_{\rho^{(n)}}^{(1)}(\mathbf{1'},\mathbf{5'})^1_1 + u^{(n)}_2 g_{\rho^{(n)}}^{(1)}(\mathbf{1'},\mathbf{5'})^0_1 + u^{(n)}_1 g_{\rho^{(n)}}^{(1)}(\mathbf{2'},\mathbf{5'})^2_1 + u^{(n)}_2 g_{\rho^{(n)}}^{(1)}(\mathbf{2'},\mathbf{5'})^3_1 \\
& = u^{(n)}_1(g_{\rho^{(n)}}^{(1)}(\mathbf{1'},\mathbf{5'})^0_1 + g_{\rho^{(n)}}^{(1)}(\mathbf{2'},\mathbf{5'})^2_1) + u^{(n)}_2 (g_{\rho^{(n)}}^{(1)}(\mathbf{1'},\mathbf{5'})^0_1 + g_{\rho^{(n)}}^{(1)}(\mathbf{2'},\mathbf{5'})^3_1)
\end{align*}

\begin{align*}
u^{(n+1)}_2 &= [\rho^{(n)}(\mathbf{0},\mathbf{1'})g_{\rho^{(n)}}^{(1)}(\mathbf{1'},\mathbf{5'})]^0_0+[\rho^{(n)}(\mathbf{0},\mathbf{2'})g_{\rho^{(n)}}^{(1)}(\mathbf{2'},\mathbf{5'})]^0_0\\
&= u^{(n)}_1 g_{\rho^{(n)}}^{(1)}(\mathbf{1'},\mathbf{5'})^1_0 + u^{(n)}_2 g_{\rho^{(n)}}^{(1)}(\mathbf{1'},\mathbf{5'})^0_0 + u^{(n)}_1 g_{\rho^{(n)}}^{(1)}(\mathbf{2'},\mathbf{5'})^2_0 + u^{(n)}_2 g_{\rho^{(n)}}^{(1)}(\mathbf{2'},\mathbf{5'})^3_0 \\
& = u^{(n)}_1(g_{\rho^{(n)}}^{(1)}(\mathbf{1'},\mathbf{5'})^0_0 + g_{\rho^{(n)}}^{(1)}(\mathbf{2'},\mathbf{5'})^2_0) + u^{(n)}_2 (g_{\rho^{(n)}}^{(1)}(\mathbf{1'},\mathbf{5'})^0_0 + g_{\rho^{(n)}}^{(1)}(\mathbf{2'},\mathbf{5'})^3_0)
\end{align*}

\begin{align*}
u^{(n+1)}_6 &= \rho^{(n)}(\mathbf{0},\mathbf{0})^0_4 + [\rho^{(n)}(\mathbf{0},\mathbf{1'})g_{\rho^{(n)}}^{(1)}(\mathbf{1'},\mathbf{0})]^0_4+[\rho^{(n)}(\mathbf{0},\mathbf{2'})g_{\rho^{(n)}}^{(1)}(\mathbf{2'},\mathbf{0})]^0_4\\
&= u^{(n)}_6 + u^{(n)}_1 g_{\rho^{(n)}}^{(1)}(\mathbf{1'},\mathbf{0})^1_4 + u^{(n)}_2 g_{\rho^{(n)}}^{(1)}(\mathbf{1'},\mathbf{0})^0_4 + u^{(n)}_1 g_{\rho^{(n)}}^{(1)}(\mathbf{2'},\mathbf{0})^2_4 + u^{(n)}_2 g_{\rho^{(n)}}^{(1)}(\mathbf{2'},\mathbf{0})^3_4 \\
& = u^{(n)}_6 + u^{(n)}_1(g_{\rho^{(n)}}^{(1)}(\mathbf{1'},\mathbf{0})^0_4 + g_{\rho^{(n)}}^{(1)}(\mathbf{2'},\mathbf{0})^2_4) + u^{(n)}_2 (g_{\rho^{(n)}}^{(1)}(\mathbf{1'},\mathbf{0})^0_4 + g_{\rho^{(n)}}^{(1)}(\mathbf{2'},\mathbf{0})^3_4)
\end{align*}
\end{proof}

\section{Proofs of Main Results}\label{sec:proofs of main theorems}

\subsection{Recurrence}

In this subsection,  we shall apply Theorem \ref{th:T formula} and Theorem \ref{th:Dirichlet g}, with the aid of Monte Carlo integration to obtain the recurrence of quantum random walks on Sierpinski gasket.

\begin{theorem} \label{th:recurrent}
With initial state $|0^0\rangle$, the quantum random walk on Sierpinski gasket is recurrent. 
\end{theorem}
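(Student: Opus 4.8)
The plan is to reduce recurrence to a statement about the rescaled amplitudes $u^{(n)}_1,\dots,u^{(n)}_4$ of Lemma~\ref{le:g reduction sierpinski} and then to control the iteration map of Theorem~\ref{th:T formula}. Since $\mathbf{a_n},\mathbf{b_n},\mathbf{a'_n},\mathbf{b'_n}$ are the only cut points separating $F^{(n)}\cup F^{(n)'}$ from the rest of $F^{(\infty)}\cup F^{(\infty)'}$, any trajectory that leaves every $F^{(n)}$ without first hitting $\mathbf{0}$ must pass through one of these four points for every $n$; hence the probability of such an escape equals the decreasing limit of
\[
q^{(n)}:=\sum_{y\in\{\mathbf{a_n},\mathbf{b_n},\mathbf{a'_n},\mathbf{b'_n}\}}\sum_{j}P^{(n)}(\mathbf{0},y)^0_j ,
\]
and on the complementary event the walk stays in a fixed bounded neighbourhood of $\mathbf{0}$, from which it returns to $\mathbf{0}$. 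So recurrence with initial state $|\mathbf{0}^0\rangle$ will follow once $q^{(n)}\to 0$. By Lemma~\ref{le:g reduction sierpinski} the blocks $g^{(n)}(\mathbf{0},\mathbf{a_n})$, $g^{(n)}(\mathbf{0},\mathbf{b_n})$, $g^{(n)}(\mathbf{0},\mathbf{a'_n})$, $g^{(n)}(\mathbf{0},\mathbf{b'_n})$ have entries $\pm u^{(n)}_1,\dots,\pm u^{(n)}_4$ only, so Parseval's identity (\ref{eq:probability parseval}) gives $q^{(n)}=c_0\int_0^{2\pi}\big(|u^{(n)}_1(e^{i\theta})|^2+|u^{(n)}_2(e^{i\theta})|^2+|u^{(n)}_3(e^{i\theta})|^2+|u^{(n)}_4(e^{i\theta})|^2\big)\,d\theta$ for an absolute constant $c_0$. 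It therefore suffices to prove $\int_0^{2\pi}\sum_{i=1}^4|u^{(n)}_i(e^{i\theta})|^2\,d\theta\to 0$.

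Next I would study the iteration of Theorem~\ref{th:T formula} at each fixed $z=e^{i\theta}$, started from $u^{(0)}=(rz,0,0,rz,0,0)$ as in (\ref{eq:u0}). Writing $v^{(n)}=(u^{(n)}_1,u^{(n)}_2,u^{(n)}_3,u^{(n)}_4)$, the six formulas of Theorem~\ref{th:T formula} show that $v^{(n+1)}=M^{(n)}(z)\,v^{(n)}$ for a $4\times 4$ matrix $M^{(n)}(z)$ whose entries are the interior Green functions $g^{(1)}_{\rho^{(n)}}(\mathbf{1},\mathbf{5})$, $g^{(1)}_{\rho^{(n)}}(\mathbf{2},\mathbf{5})$, $g^{(1)}_{\rho^{(n)}}(\mathbf{1'},\mathbf{5'})$, $g^{(1)}_{\rho^{(n)}}(\mathbf{2'},\mathbf{5'})$; by Theorem~\ref{th:Dirichlet g}(b) and Lemma~\ref{le:fast inversion} these are computed from the inverse of $I_{12}-\tilde\rho^{(n)}$, so $M^{(n)}(z)$ depends on all of $u^{(n)}_1,\dots,u^{(n)}_6$ — the iteration is genuinely nonlinear. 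Moreover $u^{(n+1)}_5=u^{(n)}_5+(\text{bilinear in }v^{(n)})$ and $u^{(n+1)}_6=u^{(n)}_6+(\text{bilinear in }v^{(n)})$, so once $\|v^{(n)}\|$ decays geometrically the sequences $u^{(n)}_5,u^{(n)}_6$ converge and stay bounded. The plan is to exhibit an invariant region $\mathcal R=\{\|v\|\le\varepsilon\}\times\{|u_5|\le R,\ |u_6|\le R\}$ on which, for a.e.\ $z$ on the unit circle, $\|M^{(n)}(z)\|\le c(z)$ with $\int_0^{2\pi}c(e^{i\theta})^2\,d\theta<1$, and to check — using the explicit formulas of Theorem~\ref{th:Sierpinski step 1}, and if necessary one further step computed as in Section~\ref{subsec:iteration map} with Monte Carlo integration verifying the finitely many numerical inequalities needed — that $v^{(1)}$ already lies in $\mathcal R$. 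Iterating the resulting $L^2$-contraction and applying dominated convergence then yields $\int_0^{2\pi}\|v^{(n)}(e^{i\theta})\|^2\,d\theta\to 0$, which is exactly what is required.

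The mechanism that makes $M^{(n)}(z)$ uniformly bounded is a no-resonance property. As $n\to\infty$ the off-diagonal blocks $B^{(n)},C^{(n)}$ of $\tilde\rho^{(n)}$ (which carry $u^{(n)}_1,\dots,u^{(n)}_4$) tend to $0$, so $\tilde\rho^{(n)}\to\mathrm{diag}(A^\ast,A^\ast,A^\ast)$ with $A^\ast$ built from the limits $u^\ast_5,u^\ast_6$, and $I_{12}-\tilde\rho^{(n)}$ is uniformly invertible as soon as $\det(I_4-A^\ast)\ne 0$; a direct computation gives the eigenvalues of $A^\ast$ as $2u^\ast_5,\,2u^\ast_5,\,2u^\ast_6,\,-2u^\ast_6$, so the condition is $u^\ast_5(e^{i\theta})\ne\tfrac12$ and $u^\ast_6(e^{i\theta})\ne\pm\tfrac12$ for a.e.\ $\theta$. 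I would verify this from the iteration of $u_5,u_6$ together with the initial data (\ref{eq:u0}) and the Monte Carlo evaluation of the limiting functions, and then handle the remaining near-resonant set of angles by showing it is Lebesgue-null and that $\sum_{i\le 4}|u^{(n)}_i(e^{i\theta})|^2$ remains dominated, uniformly in $n$, by an integrable function (of $|1+re^{i\theta}|^{-2}$ type, inherited from the denominators in Theorem~\ref{th:Sierpinski step 1}), so dominated convergence still applies.

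The hard part will be precisely this last combination: trapping the nonlinear iteration in an invariant region while ruling out resonances on the unit circle, where the Green functions $g^{(n)}(e^{i\theta})$ can blow up pointwise. The resolution is to work with $\int_0^{2\pi}\|v^{(n)}(e^{i\theta})\|^2\,d\theta$ rather than pointwise bounds — Parseval's identity turns the desired escape estimate into exactly this integral, and the measure-zero analysis of the resonant set keeps the integral finite — while Monte Carlo integration is used only to certify the base-case inequalities that launch the induction. Once $q^{(n)}\to 0$ is established, the walk does not escape to infinity and, being confined to a finite neighbourhood of the origin on the complementary event, returns to $\mathbf{0}$; hence it is recurrent.
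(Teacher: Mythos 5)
There is a genuine gap, and it sits at the very first step of your reduction. You argue classically: ``any trajectory that escapes must pass through the four corner points, and on the complementary event the walk stays in a bounded neighbourhood of $\mathbf{0}$, from which it returns.'' For the quantum walk this trajectory/complementary-event reasoning is not available: the quantities $P^{(n)}(\mathbf{0},\mathbf{y})^i_j$ are squared sums of path amplitudes, they are not additive over disjoint path events, and the first-hitting probabilities at an absorbing boundary need not sum to $1$. The paper itself makes this point explicitly: for the passage time $T^{(1)}$ one has $P_1(T^{(1)}<\infty)=\tfrac{319}{528}<1$, even though the domain is finite, so ``does not exit'' cannot be converted into ``returns to $\mathbf{0}$'' without further argument. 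Consequently, showing $q^{(n)}=c_0\int\sum_{i\le 4}|u^{(n)}_i(e^{i\theta})|^2\,d\theta\to 0$ does not by itself prove recurrence; you must show that the return probability $\sum_j P^{(n)}(\mathbf{0},\mathbf{0})^0_j=\frac{1}{\pi}\int\bigl(|u^{(n)}_5|^2+|u^{(n)}_6|^2\bigr)\,d\theta$ tends to $1$ (equivalently, that the total absorption probability at $\{\mathbf{0},\mathbf{a_n},\mathbf{b_n},\mathbf{a'_n},\mathbf{b'_n}\}$ is, or tends to, $1$), which is exactly what the paper's proof of Theorem \ref{th:recurrent} does: it uses Theorem \ref{th:T formula} and Theorem \ref{th:Dirichlet g} to iterate $u^{(n)}_1,\dots,u^{(n)}_6$ and evaluates, via Monte Carlo integration, that the four exit integrals tend to $0$ while the two return integrals $\frac{1}{2\pi}\int|u^{(n)}_5|^2$ and $\frac{1}{2\pi}\int|u^{(n)}_6|^2$ each tend to $\tfrac14$. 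Your ancillary claim that $q^{(n)}$ is monotone decreasing is likewise unsupported in the quantum setting.

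The second difficulty is that the analytic core of your plan is not carried out. The statements you would need --- an invariant region for the nonlinear iteration $(u_1,\dots,u_6)\mapsto T(u_1,\dots,u_6)$, a bound $\|M^{(n)}(e^{i\theta})\|\le c(\theta)$ with $\int_0^{2\pi}c^2\,d\theta<1$, the no-resonance condition $u^\ast_5\neq\tfrac12$, $u^\ast_6\neq\pm\tfrac12$ a.e., and a uniform integrable dominating function near the resonant angles --- are all deferred (``I would verify\dots'', ``if necessary\dots''), and none is a routine consequence of Lemma \ref{le:g reduction sierpinski}, Lemma \ref{le:fast inversion} or Theorem \ref{th:Sierpinski step 1}; the Green functions $g^{(1)}_{\rho^{(n)}}(e^{i\theta})$ can blow up pointwise, and nothing in the paper gives the claimed $L^2$-contraction. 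Note that the paper does not prove these decay statements analytically either: its proof is numerically assisted, iterating the exact recursion and evaluating the limiting integrals by Monte Carlo. So if your contraction scheme were completed it would actually strengthen the paper's argument for the $u_1,\dots,u_4$ part; but as written, both the decay estimate and, more importantly, the passage from ``no escape'' to ``return with probability one'' are missing, and the latter cannot be repaired by classical complementary-event reasoning.
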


\begin{proof} [Proof of Theorem \ref{th:recurrent}]

By Theorem \ref{th:T formula}, Theorem \ref{th:Dirichlet g} and Monte Carlo integration, we have 
$$\lim_{n \rightarrow \infty} P^{(n)}(\mathbf{0}, \mathbf{a_n'})^0_1 = \lim_{n \rightarrow \infty}\frac{1}{2\pi} \int_0^{2\pi} |u^{(n)}_1(e^{i\theta})|^2 d\theta = 0,$$
$$\lim_{n \rightarrow \infty} P^{(n)}(\mathbf{0}, \mathbf{a_n'})^0_0 = \lim_{n \rightarrow \infty}\frac{1}{2\pi} \int_0^{2\pi} |u^{(n)}_2(e^{i\theta})|^2 d\theta = 0,$$
$$\lim_{n \rightarrow \infty} P^{(n)}(\mathbf{0}, \mathbf{a_n})^0_4 =\lim_{n \rightarrow \infty} \frac{1}{2\pi} \int_0^{2\pi} |u^{(n)}_3(e^{i\theta})|^2 d\theta = 0,$$
$$\lim_{n \rightarrow \infty} P^{(n)}(\mathbf{0}, \mathbf{a_n})^0_5 = \lim_{n \rightarrow \infty}\frac{1}{2\pi} \int_0^{2\pi} |u^{(n)}_4(e^{i\theta})|^2 d\theta = 0,$$
$$\lim_{n \rightarrow \infty} P^{(n)}(\mathbf{0}, \mathbf{0})^0_0 =\lim_{n \rightarrow \infty} \frac{1}{2\pi} \int_0^{2\pi} |u^{(n)}_5(e^{i\theta})|^2 d\theta = \frac{1}{4},$$
$$\lim_{n \rightarrow \infty} P^{(n)}(\mathbf{0}, \mathbf{0})^0_4 = \lim_{n \rightarrow \infty}\frac{1}{2\pi} \int_0^{2\pi} |u^{(n)}_6(e^{i\theta})|^2 d\theta = \frac{1}{4}.$$

The numerical values of the limiting integrals are shown on Fig. \ref{fig:quantum hitting probability Sierpinski}. 
These results imply  that  the quantum random walk on Sierpinski gasket is recurrent.
\end{proof}

\subsection{Expected hitting time}

In this subsection, we consider the expected value of the first exit time.

Let $\partial (F^{(n)}\cup F^{(n)'}) := \{\mathbf{a_n}, \mathbf{b_n},\mathbf{a'_n}, \mathbf{b'_n}\}$. 
 Let 
 \begin{eqnarray}
 T^{(n)} = \inf \{ t\geq 1; w_t \in \partial (F^{(n)}\cup F^{(n)'}) \} 
 \end{eqnarray}
  be the first-passage time taken to exit $F^{(n)}\cup F^{(n)'}$ at the four vertices. 

The corresponding amplitude Green function for quantum random walk is defined by
\begin{equation}
g_1^{(n)}(z)(\mathbf{x},\mathbf{y})_{j}^{i} =\sum_{t=1}^{\infty}z^t \Psi (w_0 = \mathbf{x}^i , w_{t} = \mathbf{y}^j ,T^{(n)} = t) ,
\end{equation}
here and below we use subscript $1$ for quantities with the boundary conditions $\partial (F^{(n)}\cup F^{(n)'}) := \{\mathbf{a_n}, \mathbf{b_n},\mathbf{a'_n}, \mathbf{b'_n}\}$.
The probability that a quantum random walk starts with $\mathbf{0}^i$ and exits from $\mathbf{y}^j$ is given by
\begin{equation}
P_1^{(n)}(\mathbf{0},\mathbf{y})_{j}^{i} = \sum_{t=1}^{\infty}|\Psi (w_0 = \mathbf{x}^i , w_{t} = \mathbf{y}^j ,T^{(n)} = t)|^2 .
\end{equation}
In this section, we shall use the result of amplitude Green functions $u^{(n)}_i$ in the previous sections to find the expectation of first-passage time $E(T^{(n)})$ given $w_0 = \mathbf{0}^i$. We have 
\begin{scriptsize}
\begin{align*}
E(T^{(n)}) &= \sum_{\mathbf{y} \in \partial (F^{(n)}\cup F^{(n)'})}\sum_{\mathbf{e}_j \in out(\mathbf{y}) } \sum_{t=1}^{\infty}t P_1^{(n)}(\mathbf{0},\mathbf{y},T^{(n)} = t)^i_j \\
		   &= \sum_{\mathbf{y} \in \partial (F^{(n)}\cup F^{(n)'})}\sum_{\mathbf{e}_j \in out(\mathbf{y}) } \sum_{t=1}^{\infty} t|\Psi (w_0 = \mathbf{0}^i , w_{t} = \mathbf{y}^j ,T^{(n)} = t)|^2 \\
		   &= \sum_{\mathbf{y} \in \partial (F^{(n)}\cup F^{(n)'})}\sum_{\mathbf{e}_j \in out(\mathbf{y}) } \frac{1}{2\pi}\int_0^{2\pi} \sum_{t=1}^{\infty} e^{i\theta t} t\Psi (w_0 = \mathbf{0}^i , w_{t} = \mathbf{y}^j ,T^{(n)} = t) \sum_{t=1}^{\infty} e^{-i\theta t} \Psi (w_0 = \mathbf{0}^i , w_{t} = \mathbf{y}^j ,T^{(n)} = t) d\theta \\
		   & = \sum_{\mathbf{y} \in \partial (F^{(n)}\cup F^{(n)'})}\sum_{\mathbf{e}_j \in out(\mathbf{y}) } \frac{1}{2\pi}\int_0^{2\pi} (\partial_s g_1^{(n)}(e^{s+i\theta})(\mathbf{0},\mathbf{y})_{j}^{i}|_{s=0}) g_1^{(n)}(e^{-i\theta})(\mathbf{0},\mathbf{y})_{j}^{i} d\theta.
\end{align*}
\end{scriptsize}
So we have the following formula for expectation of hitting time
\begin{eqnarray}\label{eq:expected exit time formula}
&&E(T^{(n)})  \\
&=& \sum_{y \in \partial (F^{(n)}\cup F^{(n)'})}\sum_{\mathbf{e}_j \in out(\mathbf{y}) } \frac{1}{2\pi}\int_0^{2\pi}
(\partial_s g_1^{(n)}(e^{s+i\theta})(\mathbf{0},\mathbf{y})_{j}^{i}|_{s=0}) g_1^{(n)}(e^{-i\theta})(\mathbf{0},\mathbf{y})_{j}^{i} d\theta. \nonumber 
\end{eqnarray}

Given $u^{(n)}_i$ for $i = 1,...,6$, we will first compute the matrix $g_1^{(n+1)}(z)(\mathbf{0},\mathbf{5})$. Then the Green function from $\mathbf{0}$ to other vertices can be obtained by symmetry. Consider the relabeled figure again as Fig \ref{fig:F(1)  F(1)'}, we have
 \begin{equation}
  \begin{bmatrix}
g_1^{(n+1)}(z)(\mathbf{0},\mathbf{5})  \\ g_1^{(n+1)}(z)(\mathbf{1},\mathbf{5})  \\ g_1^{(n+1)}(z)(\mathbf{2},\mathbf{5}) \\g_1^{(n+1)}(z)(\mathbf{3},\mathbf{5}) \\g_1^{(n+1)}(z)(\mathbf{1'},\mathbf{5}) \\g_1^{(n+1)}(z)(\mathbf{2'},\mathbf{5}) \\g_1^{(n+1)}(z)(\mathbf{3'},\mathbf{5}) 
\end{bmatrix} = \frac{1}{I_{28}-[\rho^{(n)}(z)]|_{0,1,2,3,1',2',3'}} 
 \begin{bmatrix}
g_1^{(n)}(z)(\mathbf{0},\mathbf{5})  \\ g_1^{(n)}(z)(\mathbf{1},\mathbf{5})  \\ g_1^{(n)}(z)(\mathbf{2},\mathbf{5}) \\g_1^{(n)}(z)(\mathbf{3},\mathbf{5}) \\g_1^{(n)}(z)(\mathbf{1'},\mathbf{5}) \\g_1^{(n)}(z)(\mathbf{2'},\mathbf{5}) \\g_1^{(n)}(z)(\mathbf{3'},\mathbf{5}) 
\end{bmatrix},
 \end{equation}
where $[\rho^{(n)}(z)]|_{0,1,2,3,1',2',3'}$ is the transition matrix in terms of $u^{(n)}_i$, for $i = 1,...,6$.\\
For $n = 0$, by solving the above equation, we obtain
\begin{equation}\label{eq:g1 05}
g_1^{(1)}(z)(\mathbf{0},\mathbf{5}) = \bordermatrix{~& 0 & 1 & 4              & 5           \cr
                                                       0  &0&0& \frac{z^2(z^2+z-2)}{2(z^3-z^2-4)} & \frac{z^5}{2(z^3-z^2-4)}   \cr
                                                       1  &0&0& \frac{z^2(z^2-z+2)}{2(z^3-z^2-4)} & \frac{z^5}{2(z^3-z^2-4)}   \cr
                                                       4  &0&0& \frac{-z^2(z+2)}{2(z^3-z^2-4)} & \frac{-z^3(z+2)}{2(z^3-z^2-4)}   \cr
                                                       5  &0&0& \frac{-z^2(z+2)}{2(z^3-z^2-4)} & \frac{-z^3(z+2)}{2(z^3-z^2-4)}   \cr}.
                                                       \end{equation}

If we start $w_0 = \mathbf{0}^0$, then each exit Green amplitude function can be obtained from the above matrix (\ref{eq:g1 05}). For example, we have 
\begin{eqnarray}
g_1^{(1)}(z)(\mathbf{0},\mathbf{4})_3 ^0 
= g_1^{(1)}(z)(\mathbf{0},\mathbf{5})_4^1 
= \frac{z^2(z^2-z+2)}{2(z^3-z^2-4)},
\end{eqnarray}
 and 
 \begin{eqnarray*}
 g_1^{(1)}(z)(\mathbf{0},\mathbf{5'})^0_1 
 = g_1^{(1)}(z)(\mathbf{0},\mathbf{4})^4 _4 
 = \frac{-z^2(z+2)}{2(z^3-z^2-4)}.
 \end{eqnarray*}
 
 By Parseval's identity, 
\begin{eqnarray} \label{eq:P 1 n}
P_1^{(n)}(\mathbf{x},\mathbf{y})_{j}^{i} 
= \frac{1}{2\pi}\int_0^{2\pi} |g_1^{(n)}(e^{i\theta})(\mathbf{x},\mathbf{y})_{j}^{i}|^2 d\theta.
\end{eqnarray}
So we have 
\begin{equation}
 P_1^{(1)}(z)(\mathbf{0},\mathbf{5}) = \bordermatrix{~& 0 & 1 & 4              & 5           \cr
                                                       0  &0&0& \frac{17}{132} & \frac{19}{1056}   \cr
                                                       1  &0&0& \frac{25}{264} & \frac{19}{1056}   \cr
                                                       4  &0&0& \frac{91}{1056} & \frac{91}{1056}   \cr
                                                       5  &0&0& \frac{91}{1056} & \frac{91}{1056}   \cr}.
                                                       \end{equation}
Using (\ref{eq:expected exit time formula}), we have $E(T^{(1)}) = \frac{2173}{1152}$.
                                            
The results of probability distribution of $\mathbf{x}^i_{T^{(1)}}$ is summarized in Figure \ref{fig:probability distribution of x i T 1 }  and the expected values of $T^{(1)}$  are shown in Figure      \ref{fig:expectations of x i T 1 }.                                          
\begin{figure}
\begin{minipage}{0.45\textwidth}
\begin{tikzpicture}[scale = 0.7]
\begin{axis}[dashed, xmin=-3,ymin=-3,xticklabels = {,,}, yticklabels = {},xmax=5,ymax=3,]
\addplot[mark = none] coordinates{(-5,0) (5,0)};
\addplot[mark = none] coordinates{(-5,1) (5,1)};
\addplot[mark = none] coordinates{(-5,2) (5,2)};
\addplot[mark = none] coordinates{(-5,-1) (5,-1)};
\addplot[mark = none] coordinates{(-5,-2) (5,-2)};
\addplot[mark = none] coordinates{(0,-3) (0,3)};
\addplot[mark = none] coordinates{(1,-3) (1,3)};
\addplot[mark = none] coordinates{(2,-3) (2,3)};
\addplot[mark = none] coordinates{(3,-3) (3,3)};
\addplot[mark = none] coordinates{(4,-3) (4,3)};
\addplot[mark = none] coordinates{(-1,-3) (-1,3)};
\addplot[mark = none] coordinates{(-2,-3) (-2,3)};
\addplot[mark = none] coordinates{(-3,-3) (-3,3)};
\addplot[mark = none] coordinates{(-4,-3) (-4,3)};
\addplot[mark = *, solid, thick] coordinates {(0, 0) (2,0) (4,0) (3,1) (2,2) (1,1) (0,0)};
\addplot[mark = *, solid ,thick] coordinates {(1,1) (3,1) (2,0) (1,1)};
\addplot[mark = *, solid, thick] coordinates {(0, 0) (-1,-1) (-2,-2) (0,-2) (2,-2) (1,-1) (0,0)};
\addplot[mark = *, solid ,thick] coordinates {(-1,-1) (0,-2) (1,-1) (-1,-1)};
\addplot[mark=*] coordinates {(0,0)} node(1)[]{} ;
\addplot[mark=] coordinates {(-0.5,0)}[xshift=0pt,yshift=-10pt] node[label={$\mathbf{0}$}]{} ;
\addplot[mark=*] coordinates {(1,1)} node(2)[]{} ;
\addplot[mark=*] coordinates {(2,0)} node(3)[xshift=0pt,yshift=-20pt][]{} ;
\addplot[mark=*] coordinates {(3,1)} node(4)[]{} ;
\addplot[mark=*] coordinates {(2,2)} node(5)[label={$\mathbf{a_1}$}]{} ;
\addplot[mark=*] coordinates {(4,0)} node(6)[]{} ;
\addplot[mark=] coordinates {(4.5,0)} node[xshift=0pt,yshift=-10pt][label={$\mathbf{b_1}$}]{} ;
\addplot[mark=*] coordinates {(-1,-1)} node(-2)[]{} ;
\addplot[mark=*] coordinates {(1,-1)} node(-3)[]{} ;
\addplot[mark=*] coordinates {(0,-2)} node(-4)[]{} ;
\addplot[mark=*] coordinates {(-2,-2)} node(-5)[]{} ;
\addplot[mark=] coordinates {(-2.5,-2)} node[xshift=2pt,yshift=-12pt][label={$\mathbf{a'_1}$}]{} ;
\addplot[mark=*] coordinates {(2,-2)} node(-6)[]{} ;
\addplot[mark=] coordinates {(2.5,-2)} node[xshift=0pt,yshift=-12pt][label={$\mathbf{b'_1}$}]{} ;
\addplot[mark=none] coordinates {(-1,-1)} node(-1)[]{} ;
\addplot[mark=none] coordinates {(1,-1)} node(-2)[]{} ;
\draw [->,thick ,solid] (1) -- (5)node[xshift=-17pt,yshift=-5pt] {$\frac{17}{132}$};
\draw [->,thick ,solid] (6) -- (5)node[xshift=17pt,yshift=-5pt]{$\frac{19}{1056}$};
\draw [->,thick ,solid] (1) -- (6)node[xshift=-12pt,yshift=-9pt] {$\frac{25}{264}$};
\draw [->,thick ,solid] (5) -- (6)node[xshift=0pt,yshift=17pt]{$\frac{19}{1056}$};

\draw [->,thick ,solid] (1) -- (-5)node[xshift=2pt,yshift=18pt] {$\frac{91}{1056}$};
\draw [->,thick ,solid] (-6) -- (-5)node[xshift=15pt,yshift=-8pt]{$\frac{91}{1056}$};
\draw [->,thick ,solid] (1) -- (-6)node[xshift=-2pt,yshift=18pt] {$\frac{91}{1056}$};
\draw [->,thick ,solid] (-5) -- (-6)node[xshift=-15pt,yshift=-8pt]{$\frac{91}{1056}$};

\end{axis}        
\end{tikzpicture}
\caption{probability distribution of $\mathbf{x}^i_{T^{(1)}}$ on $F^{(1)} \cup F^{(1)'}$ with $w_0 = \mathbf{0}^0$}\label{fig:probability distribution of x i T 1 }
\end{minipage}
\hfill
\begin{minipage}{0.45\textwidth}
\begin{tikzpicture}[scale = 0.7]
\begin{axis}[dashed, xmin=-3,ymin=-3,xticklabels = {,,}, yticklabels = {},xmax=5,ymax=3,]
\addplot[mark = none] coordinates{(-5,0) (5,0)};
\addplot[mark = none] coordinates{(-5,1) (5,1)};
\addplot[mark = none] coordinates{(-5,2) (5,2)};
\addplot[mark = none] coordinates{(-5,-1) (5,-1)};
\addplot[mark = none] coordinates{(-5,-2) (5,-2)};
\addplot[mark = none] coordinates{(0,-3) (0,3)};
\addplot[mark = none] coordinates{(1,-3) (1,3)};
\addplot[mark = none] coordinates{(2,-3) (2,3)};
\addplot[mark = none] coordinates{(3,-3) (3,3)};
\addplot[mark = none] coordinates{(4,-3) (4,3)};
\addplot[mark = none] coordinates{(-1,-3) (-1,3)};
\addplot[mark = none] coordinates{(-2,-3) (-2,3)};
\addplot[mark = none] coordinates{(-3,-3) (-3,3)};
\addplot[mark = none] coordinates{(-4,-3) (-4,3)};
\addplot[mark = *, solid, thick] coordinates {(0, 0) (2,0) (4,0) (3,1) (2,2) (1,1) (0,0)};
\addplot[mark = *, solid ,thick] coordinates {(1,1) (3,1) (2,0) (1,1)};
\addplot[mark = *, solid, thick] coordinates {(0, 0) (-1,-1) (-2,-2) (0,-2) (2,-2) (1,-1) (0,0)};
\addplot[mark = *, solid ,thick] coordinates {(-1,-1) (0,-2) (1,-1) (-1,-1)};
\addplot[mark=*] coordinates {(0,0)} node(1)[]{} ;
\addplot[mark=] coordinates {(-0.5,0)}[xshift=0pt,yshift=-10pt] node[label={$\mathbf{0}$}]{} ;
\addplot[mark=*] coordinates {(1,1)} node(2)[]{} ;
\addplot[mark=*] coordinates {(2,0)} node(3)[xshift=0pt,yshift=-20pt][]{} ;
\addplot[mark=*] coordinates {(3,1)} node(4)[]{} ;
\addplot[mark=*] coordinates {(2,2)} node(5)[label={$\mathbf{a_1}$}]{} ;
\addplot[mark=*] coordinates {(4,0)} node(6)[]{} ;
\addplot[mark=] coordinates {(4.5,0)} node[xshift=0pt,yshift=-10pt][label={$\mathbf{b_1}$}]{} ;
\addplot[mark=*] coordinates {(-1,-1)} node(-2)[]{} ;
\addplot[mark=*] coordinates {(1,-1)} node(-3)[]{} ;
\addplot[mark=*] coordinates {(0,-2)} node(-4)[]{} ;
\addplot[mark=*] coordinates {(-2,-2)} node(-5)[]{} ;
\addplot[mark=] coordinates {(-2.5,-2)} node[xshift=2pt,yshift=-12pt][label={$\mathbf{a'_1}$}]{} ;
\addplot[mark=*] coordinates {(2,-2)} node(-6)[]{} ;
\addplot[mark=] coordinates {(2.5,-2)} node[xshift=0pt,yshift=-12pt][label={$\mathbf{b'_1}$}]{} ;
\addplot[mark=none] coordinates {(-1,-1)} node(-1)[]{} ;
\addplot[mark=none] coordinates {(1,-1)} node(-2)[]{} ;
\draw [->,thick ,solid] (1) -- (5)node[xshift=-17pt,yshift=-5pt] {$\frac{7093}{17424}$};
\draw [->,thick ,solid] (6) -- (5)node[xshift=17pt,yshift=-5pt]{$\frac{27073}{278784}$};
\draw [->,thick ,solid] (1) -- (6)node[xshift=-12pt,yshift=-9pt] {$\frac{18049}{69696}$};
\draw [->,thick ,solid] (5) -- (6)node[xshift=0pt,yshift=17pt]{$\frac{27073}{278784}$};

\draw [->,thick ,solid] (1) -- (-5)node[xshift=2pt,yshift=18pt] {$\frac{59497}{278784}$};
\draw [->,thick ,solid] (-6) -- (-5)node[xshift=15pt,yshift=-8pt]{$\frac{83521}{278784}$};
\draw [->,thick ,solid] (1) -- (-6)node[xshift=-2pt,yshift=18pt] {$\frac{59497}{278784}$};
\draw [->,thick ,solid] (-5) -- (-6)node[xshift=-15pt,yshift=-8pt]{$\frac{83521}{278784}$};

\end{axis}        
\end{tikzpicture}
\caption { $E(T^{(1)})$ on $F^{(1)} \cup F^{(1)'}$ with $w_0 = \mathbf{0}^0$}\label{fig:expectations of x i T 1 }
\end{minipage}
\end{figure}
We note  that unlike $\tau^{(1)}$, if we sum up the exit probabilities for $T^{(1)}$, we have $P_1(T^{(1)} < \infty) =  \frac{319}{528} < 1$. So we obtain 
$$     E(T^{(1)}| T^{(1)} < \infty) = \frac{2173}{696} = 3.121. $$ 
This is an exact value.

 \subsection{Results of main exponents}
 
In this section, we apply recursive formulas as the proof of Theroem 4 and using Monte Carlo integration, we obtain the numerical values of exponents listed on the Table \ref{tab:localization table}, the results are shown in Figures \ref{fig:quantum d} -\ref{fig:quantum P(0, an)44}.

\begin{figure}[htbp]
\centering
\includegraphics[trim={0 8cm 0 7cm},clip, width=0.5\textwidth]{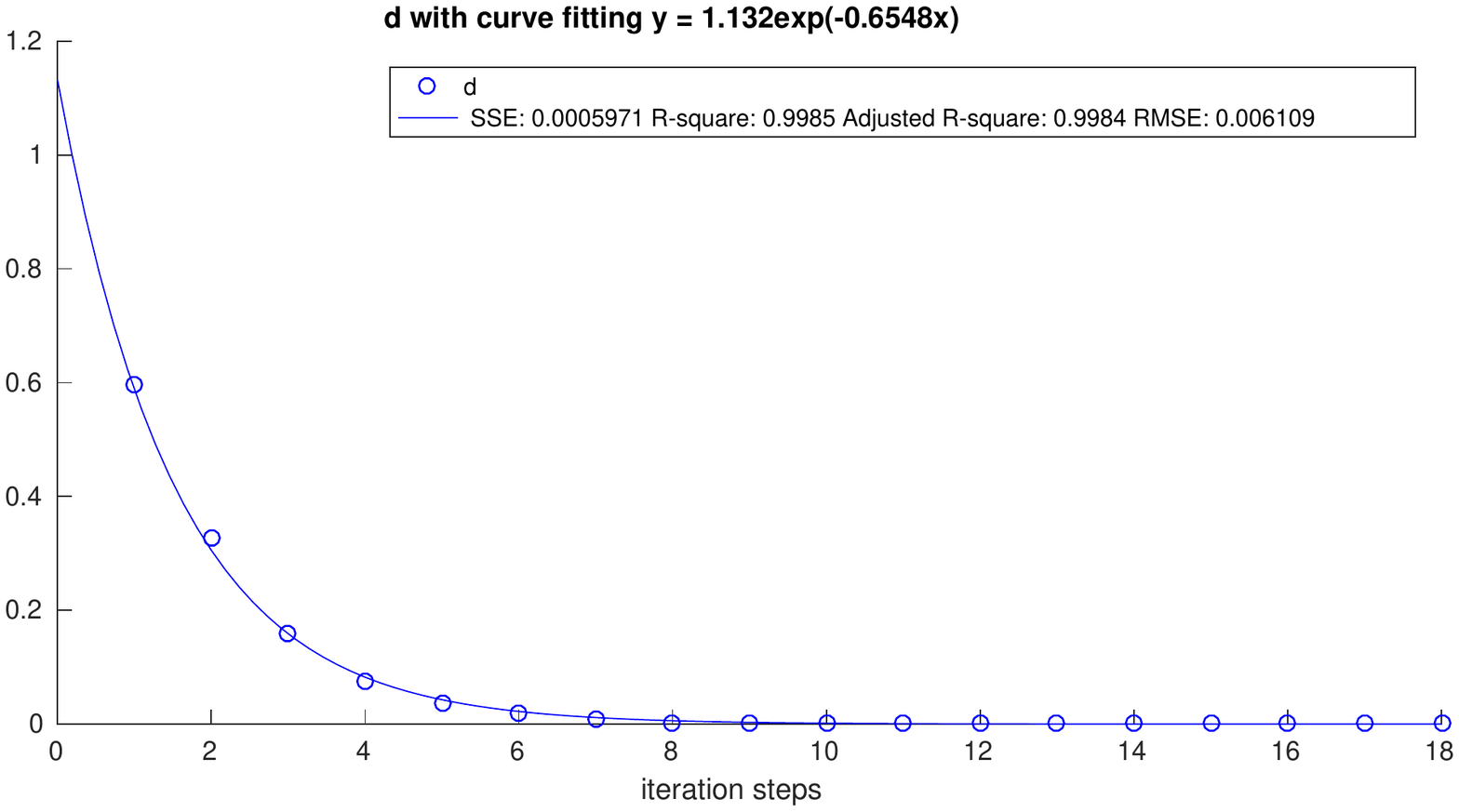}
\caption{Quantum $\delta_w$} \label{fig:quantum d}
\includegraphics[trim={0 8cm 0 7cm},clip, width=0.5\textwidth]{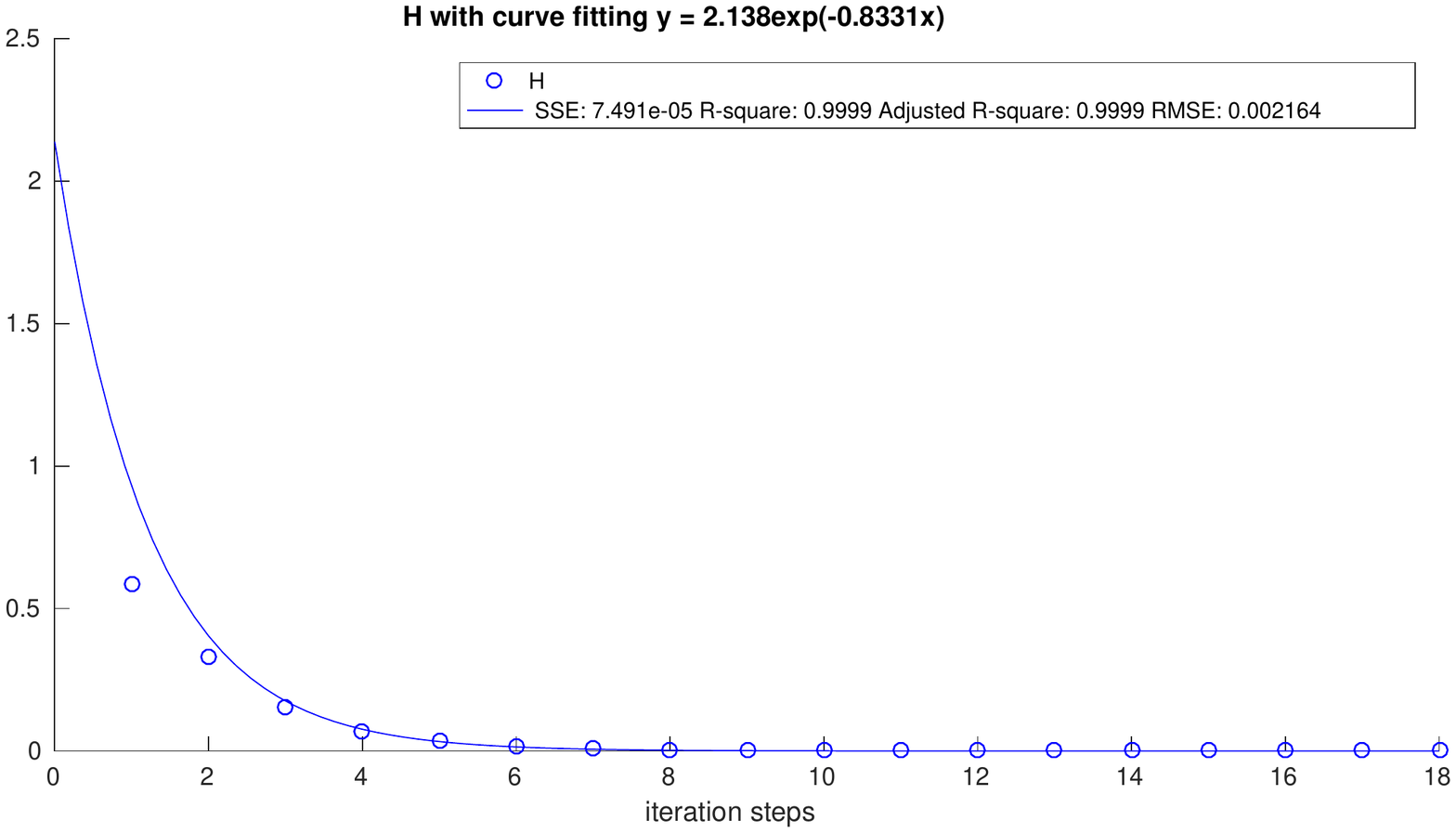}
\caption{Quantum $\eta_w$}   \label{fig:quantum H}
\includegraphics[trim={0 8cm 0 7cm},clip, width=0.5\textwidth]{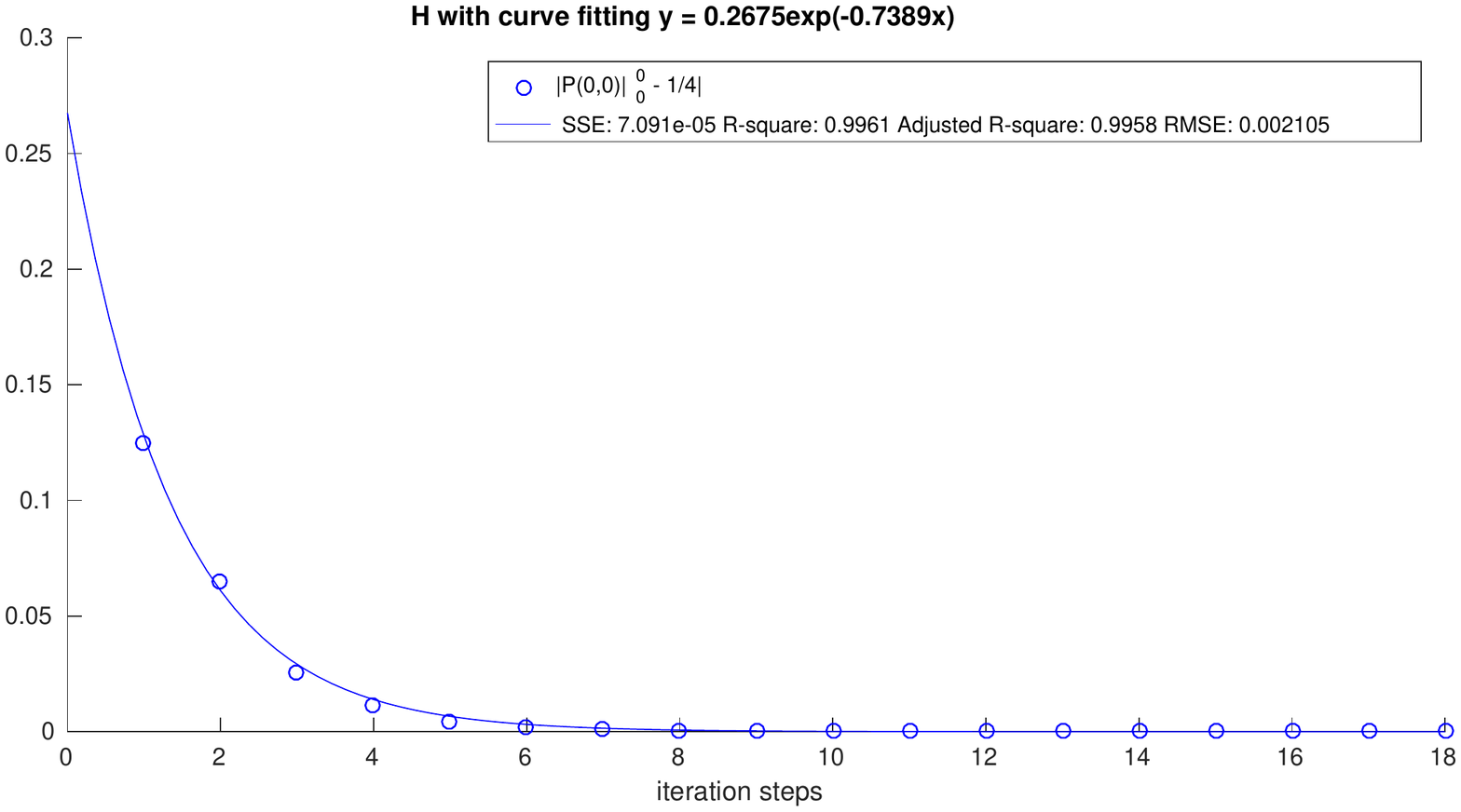}
\caption{Quantum $\beta_w(0, 0)^0_0$} \label{fig:quantum P(0, 0)00}
\includegraphics[trim={0 8cm 0 7cm},clip, width=0.5\textwidth]{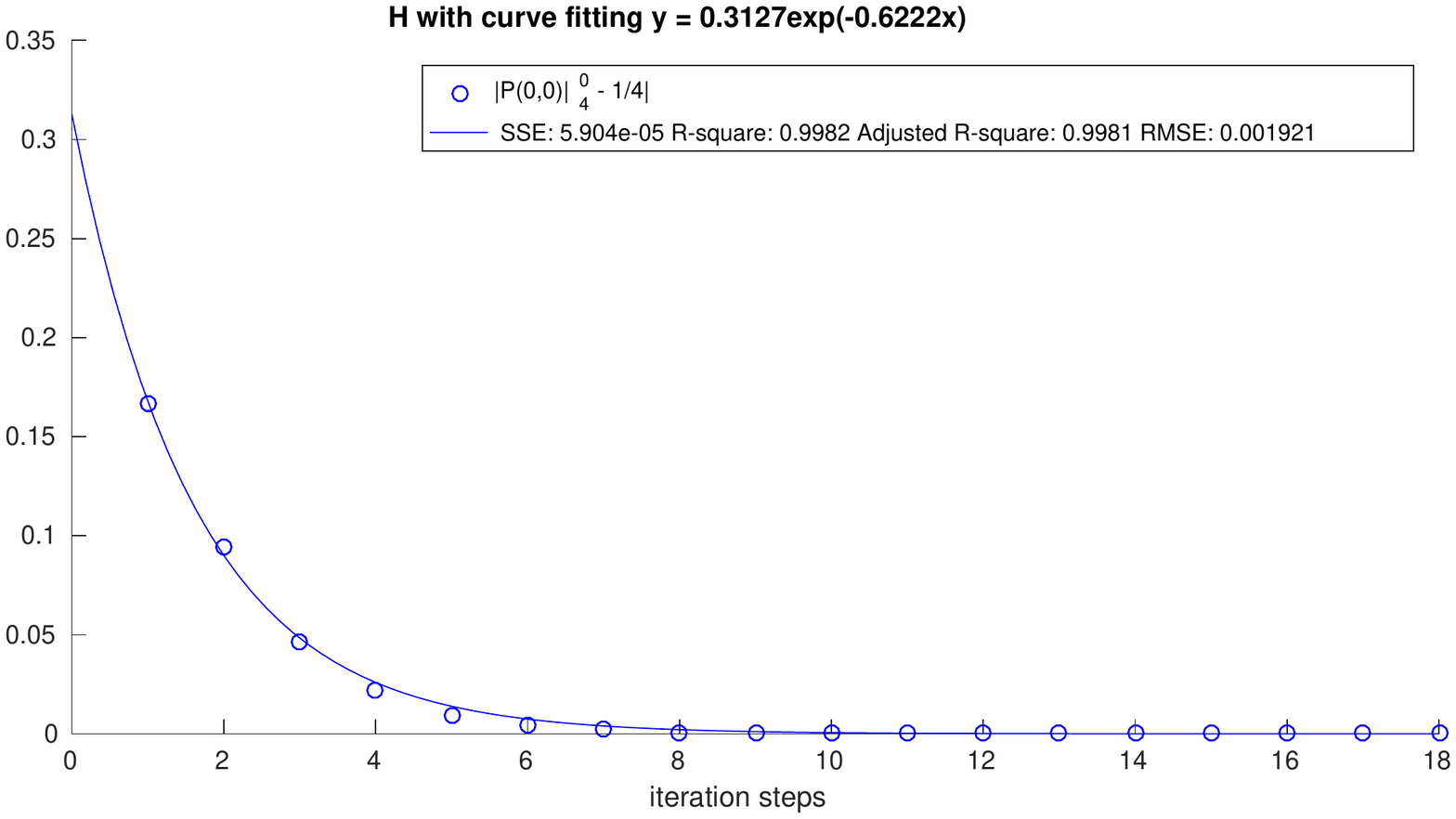}
\caption{Quantum $\beta_w(0, 0)^0_4$}   \label{fig:quantum P(0, 0)04}
\end{figure}

\begin{figure}[htbp]
\centering
\includegraphics[trim={0 8cm 0 7cm},clip, width=0.5\textwidth]{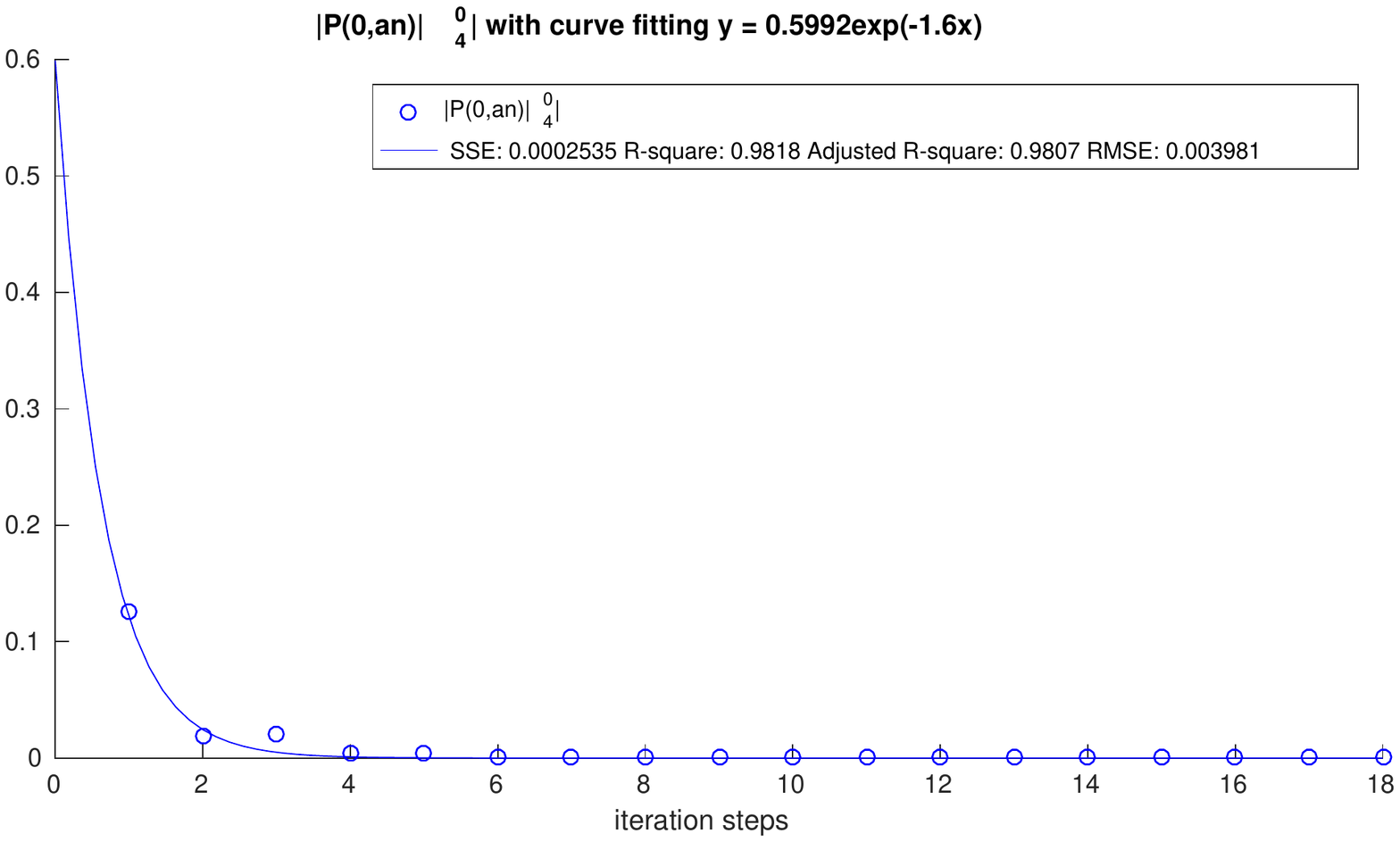}
\caption{Quantum $\gamma_w(0, a_n)^0_4$}\label{fig:quantum P(0, an)04}
\includegraphics[trim={0 8cm 0 7cm},clip, width=0.5\textwidth]{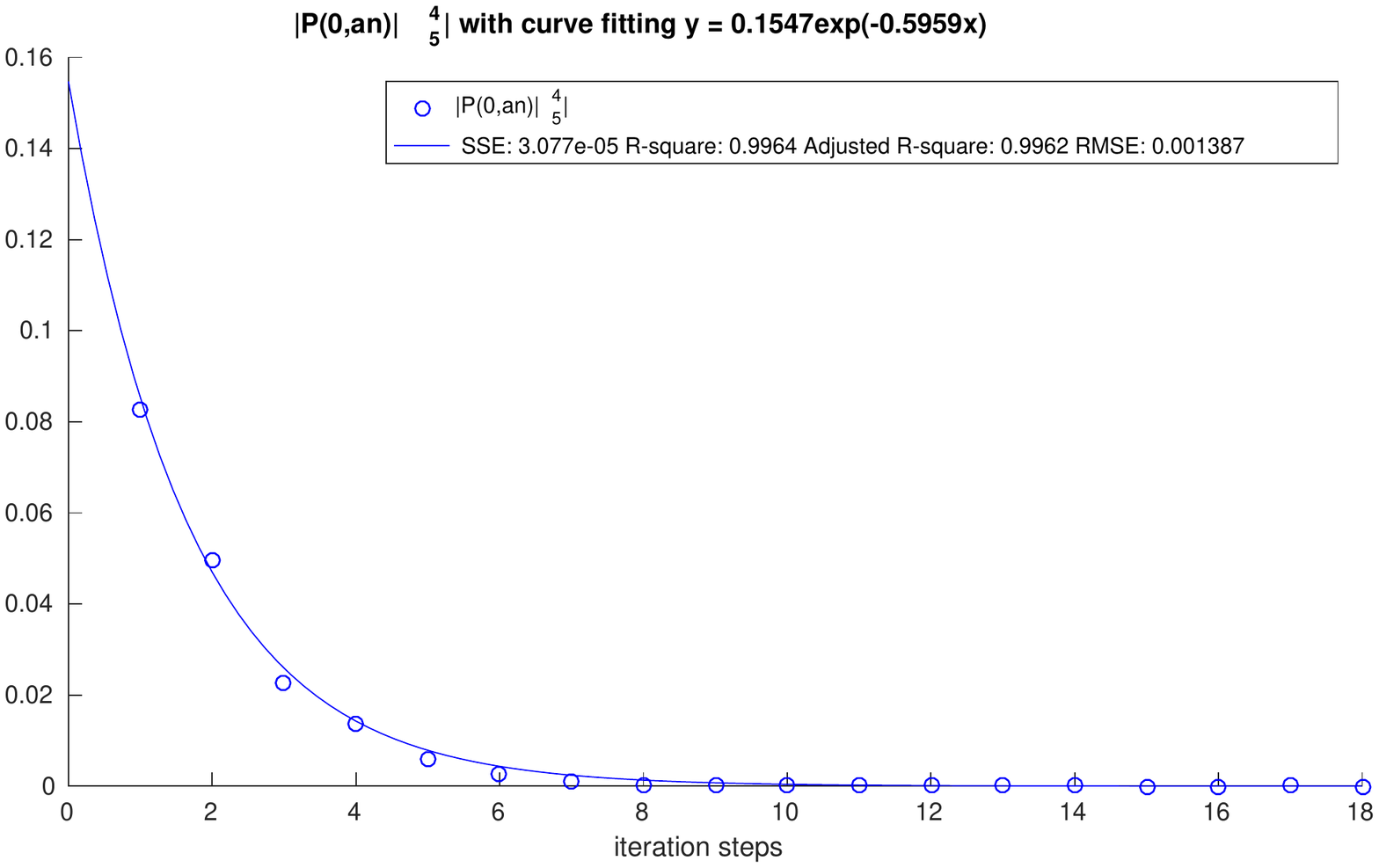}
\caption{Quantum $\gamma_w(0, a_n)^4_5$}\label{fig:quantum P(0, an)45}
\includegraphics[trim={0 8cm 0 7cm},clip, width=0.5\textwidth]{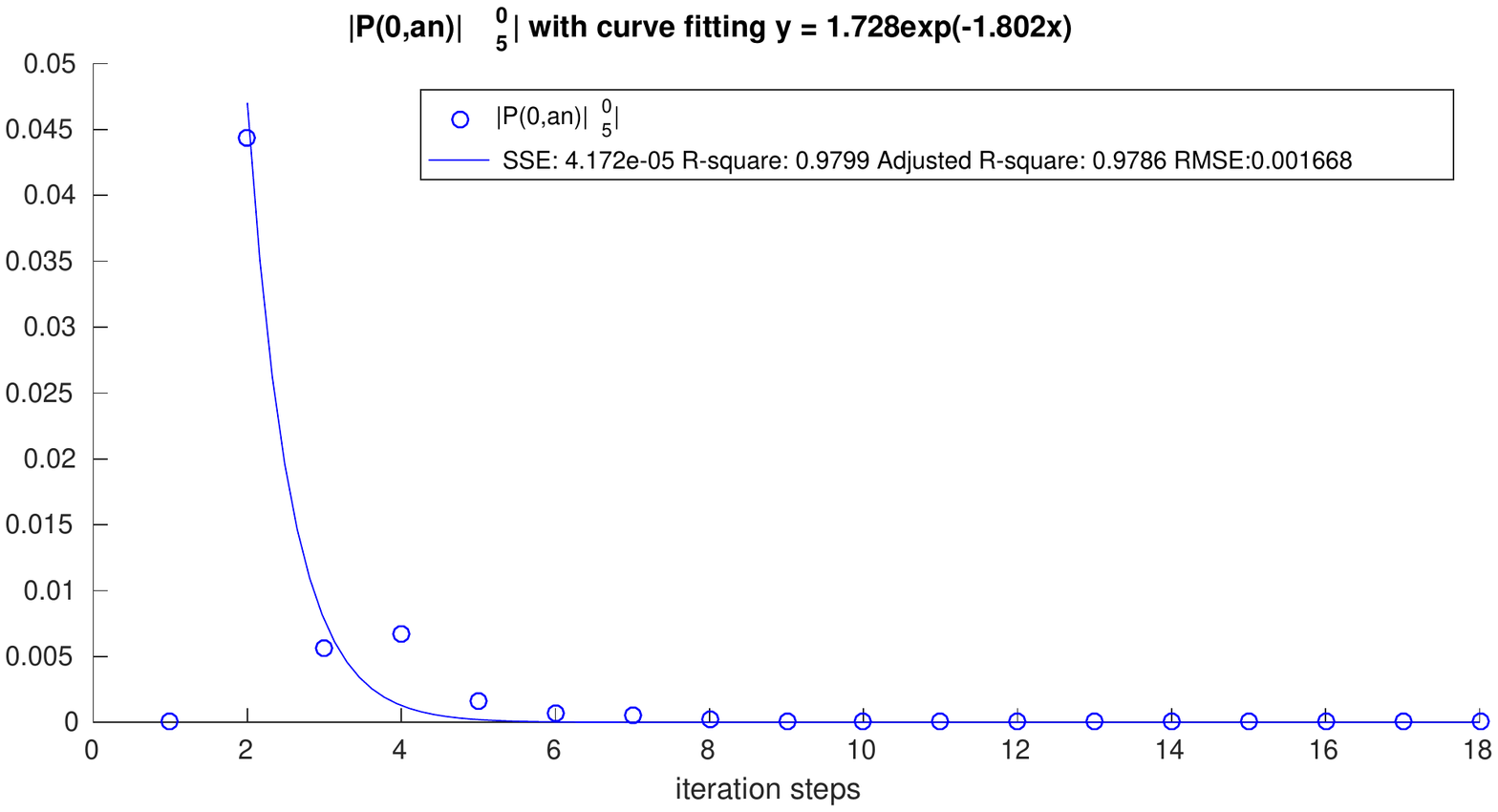}
\caption{Quantum $\gamma_w(0, a_n)^0_5$}\label{fig:quantum P(0, an)05}
\includegraphics[trim={0 8cm 0 7cm},clip, width=0.5\textwidth]{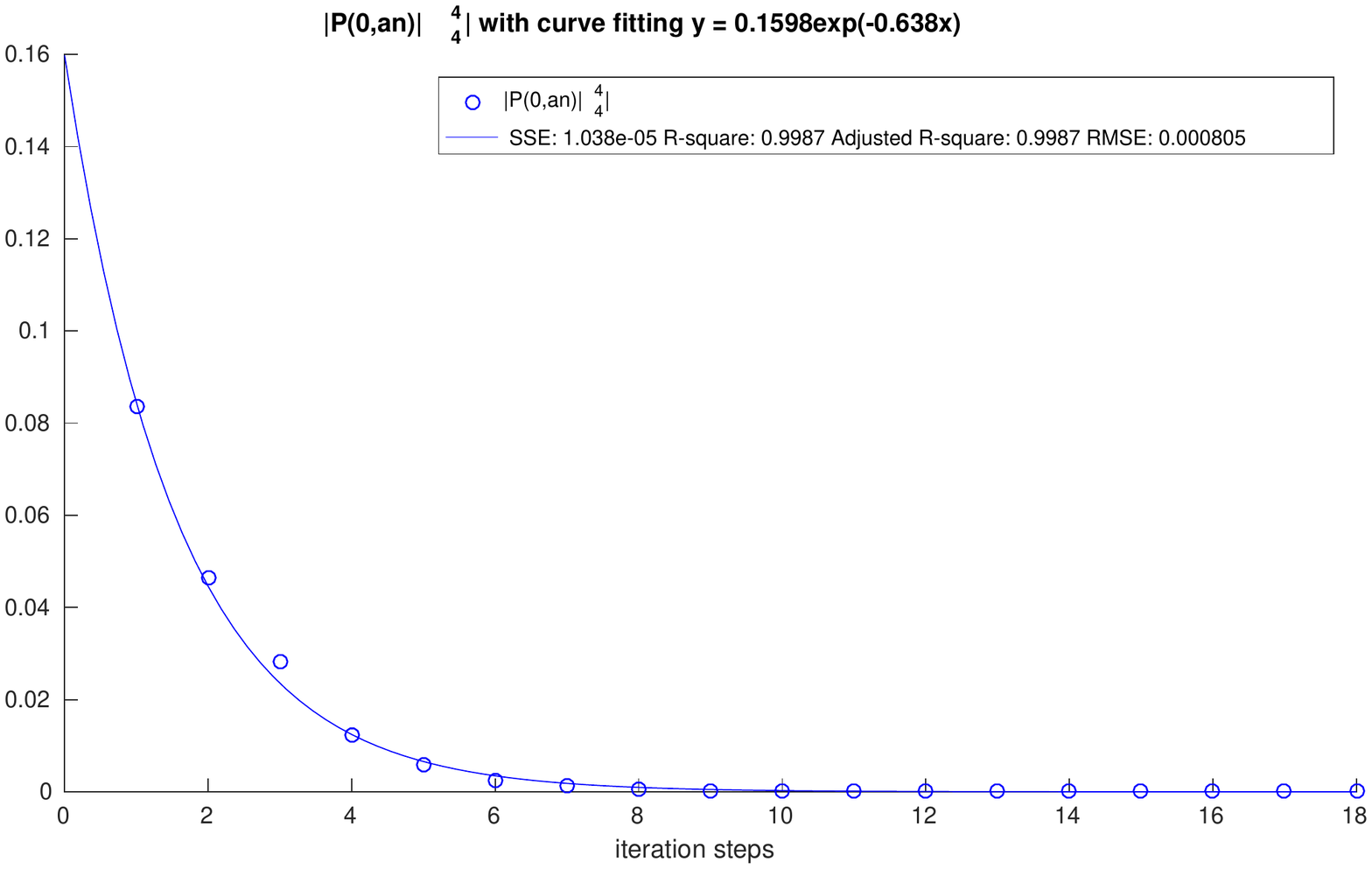}
\caption{Quantum $\gamma_w(0, a_n)^4_4$}\label{fig:quantum P(0, an)44}
\end{figure}

\section{Conclusion}
For quantum random walks on Sierpinski gasket, we have obtained the recursive formulas for amplitude Green functions Theorem \ref{th:T formula} and solved quantum version of Poisson equations Theorem \ref{th:Dirichlet g}. We see that the quantum random walks are more likely to return to the origin  and less tangling than the classical random walks on the same space. These results suggest that quantum random walks may be useful for spatial search on Sierpinski like structures. Our methods rely on  the self-similar structure of the underlying space and they should apply to any other underlying space with self-similar structure as well.

\newpage

\appendix
\section{Classical random walks on Sierpinski gasket}

In this section, for the convenience  of a reader and for completeness, we include here the analysis of the classical random walks on the Sierpinski gaskets. 
Our analysis yields known result on $d_{w, c}$, see e.g. \cite{2005 diffusion},  as well as new result on the detail probability distributions and expected hitting times with respected to  each of the departing and arriving directions.

The $n^{th}$ order Sierpinski gasket $F^{(n)}$ is a degree-4 regular graph except overall corners with Hausdorff dimension of $d_f = \ln3/\ln2$, which is bigger than dimension of the line and smaller than that of the plane. The graph of $F^{(2)}$ is shown in Figure 2.1. For the $n^{th}$ order Sierpinski gasket $F^{(n)}$ embedded in the two-dimensional plane, we can extend $F^{(n)}$ as following: if we think of $\mathbf{0},\mathbf{a_n},\mathbf{b_n}$ as boundary points of $F^{(n)}$, then we call reflection of $F^{(n)}$ denoted by $F^{(n)'}$ with boundary points $\mathbf{0},\mathbf{a'_n},\mathbf{b'_n}$, here $\mathbf{a'_n}= -\mathbf{a_n}=(-2^n,-2^n)$ and $\mathbf{b'_n} = (2^n,-2^n)$, see Figure 2.2.

\begin{figure}[htb] \label{fig F^n}
\begin{center}   
\begin{minipage}{0.45\textwidth}
\begin{tikzpicture}[scale = 0.7]
\begin{axis}[dashed, xmin=0,ymin=0,xtick = {0,1,...,9},xmax=8.4,ymax=4.5]
\addplot[mark = none] coordinates{(1,0) (1,5)};
\addplot[mark = none] coordinates{(2,0) (2,5)};
\addplot[mark = none] coordinates{(3,0) (3,5)};
\addplot[mark = none] coordinates{(4,0) (4,5)};
\addplot[mark = none] coordinates{(5,0) (5,5)};
\addplot[mark = none] coordinates{(6,0) (6,5)};
\addplot[mark = none] coordinates{(7,0) (7,5)};
\addplot[mark = none] coordinates{(8,0) (8,5)};
\addplot[mark = none] coordinates{(0,1) (9,1)};
\addplot[mark = none] coordinates{(0,2) (9,2)};
\addplot[mark = none] coordinates{(0,3) (9,3)};
\addplot[mark = none] coordinates{(0,4) (9,4)};
\addplot[mark = *, solid, thick] coordinates {(0, 0) (2,0) (4,0) (6,0) (8,0) (7,1) (6,2) (5,3) (4,4) (3,3) (2,2) (1,1) (0,0)};
\addplot[mark = *, solid ,thick] coordinates {(1,1) (3,1) (2,0) (1,1)};
\addplot[mark = *, solid, thick] coordinates {(2,2) (4,2) (6,2) (5,1) (4,0) (3,1) (2,2)};
\addplot[mark = *, solid ,thick] coordinates {(3,3) (5,3) (4,2) (3,3)};
\addplot[mark = *, solid, thick] coordinates {(5,1) (7,1) (6,0) (5,1)};
\addplot[mark=*] coordinates {(2,0)} node[label={$b_0$}]{} ;
\addplot[mark=*] coordinates {(1,1)} node[label={$a_0$}]{} ;
\addplot[mark=*] coordinates {(4,0)} node[label={$b_1$}]{} ;
\addplot[mark=*] coordinates {(2,2)} node[label={$a_1$}]{} ;
\addplot[mark=*] coordinates {(8,0)} node[label={$b_2$}]{} ;
\addplot[mark=*] coordinates {(4,4)} node[label={$a_2$}]{} ;

\end{axis}        
\end{tikzpicture}
\caption{$2^{nd}$ order Sierpinski Gasket $F^{(2)}$}
\end{minipage}
\hfill
\begin{minipage}{0.45\textwidth}
\begin{tikzpicture}[scale = 0.7]
\begin{axis}[dashed, xmin=-5,ymin=-5,xticklabels = {,,}, yticklabels = {},xmax=9,ymax=5,]
\addplot[mark = none] coordinates{(-5,0) (9,0)};
\addplot[mark = none] coordinates{(-5,1) (9,1)};
\addplot[mark = none] coordinates{(-5,2) (9,2)};
\addplot[mark = none] coordinates{(-5,3) (9,3)};
\addplot[mark = none] coordinates{(-5,4) (9,4)};
\addplot[mark = none] coordinates{(-5,-1) (9,-1)};
\addplot[mark = none] coordinates{(-5,-2) (9,-2)};
\addplot[mark = none] coordinates{(-5,-3) (9,-3)};
\addplot[mark = none] coordinates{(-5,-4) (9,-4)};
\addplot[mark = none] coordinates{(-5,-1) (5,-1)};
\addplot[mark = none] coordinates{(-5,-2) (5,-2)};
\addplot[mark = none] coordinates{(0,-5) (0,9)};
\addplot[mark = none] coordinates{(1,-5) (1,9)};
\addplot[mark = none] coordinates{(2,-5) (2,9)};
\addplot[mark = none] coordinates{(3,-5) (3,9)};
\addplot[mark = none] coordinates{(4,-5) (4,9)};
\addplot[mark = none] coordinates{(5,-5) (5,9)};
\addplot[mark = none] coordinates{(6,-5) (6,9)};
\addplot[mark = none] coordinates{(7,-5) (7,9)};
\addplot[mark = none] coordinates{(8,-5) (8,9)};
\addplot[mark = none] coordinates{(-1,-5) (-1,9)};
\addplot[mark = none] coordinates{(-2,-5) (-2,9)};
\addplot[mark = none] coordinates{(-3,-5) (-3,9)};
\addplot[mark = none] coordinates{(-4,-5) (-4,9)};
\addplot[mark = *, solid, thick] coordinates {(0, 0) (2,0) (4,0) (6,0) (8,0) (7,1) (6,2) (5,3) (4,4) (3,3) (2,2) (1,1) (0,0)};
\addplot[mark = *, solid ,thick] coordinates {(1,1) (3,1) (2,0) (1,1)};
\addplot[mark = *, solid, thick] coordinates {(2,2) (4,2) (6,2) (5,1) (4,0) (3,1) (2,2)};
\addplot[mark = *, solid ,thick] coordinates {(3,3) (5,3) (4,2) (3,3)};
\addplot[mark = *, solid, thick] coordinates {(5,1) (7,1) (6,0) (5,1)};
\addplot[mark=*] coordinates {(0,0)} node[label={$0$}]{} ;
\addplot[mark=*] coordinates {(2,0)} node[label={$b_0$}]{} ;
\addplot[mark=*] coordinates {(1,1)} node[label={$a_0$}]{} ;
\addplot[mark=*] coordinates {(4,0)} node[label={$b_1$}]{} ;
\addplot[mark=*] coordinates {(2,2)} node[label={$a_1$}]{} ;
\addplot[mark=*] coordinates {(8,0)} node[label={$b_2$}]{} ;
\addplot[mark=*] coordinates {(4,4)} node[label={$a_2$}]{} ;
\addplot[mark = *, solid, thick] coordinates {(0, 0) (-1,-1) (-2,-2) (-3,-3) (-4,-4) (-2,-4) (0,-4) (2,-4) (4,-4) (3,-3) (2,-2) (1,-1) (0,0)};
\addplot[mark = *, solid ,thick] coordinates {(-1,-1) (1,-1) (0,-2) (-1,-1)};
\addplot[mark = *, solid, thick] coordinates {(-2,-2) (0,-2) (2,-2) (1,-3) (0,-4) (-1,-3) (-2,-2)};
\addplot[mark = *, solid ,thick] coordinates {(-3,-3) (-1,-3) (-2,-4) (-3,-3)};
\addplot[mark = *, solid, thick] coordinates {(1,-3) (3,-3) (2,-4) (1,-3)};
\addplot[mark=*] coordinates {(1,-1)} node[xshift=0pt,yshift=-4pt][label={$b'_0$}]{} ;
\addplot[mark=*] coordinates {(-1,-1)} node[xshift=0pt,yshift=-4pt][label={$a'_0$}]{} ;
\addplot[mark=*] coordinates {(2,-2)} node[xshift=0pt,yshift=-4pt][label={$b'_1$}]{} ;
\addplot[mark=*] coordinates {(-2,-2)} node[xshift=0pt,yshift=-4pt][label={$a'_1$}]{} ;
\addplot[mark=*] coordinates {(4,-4)} node[xshift=0pt,yshift=-4pt][label={$b'_2$}]{} ;
\addplot[mark=*] coordinates {(-4,-4)} node[xshift=0pt,yshift=-4pt][label={$a'_2$}]{} ;
\end{axis}        
\end{tikzpicture}
\caption{$F^{(2)} \cup F^{(2)'}$}
\end{minipage}
\end{center}
\end{figure}

Let us introduce the random walk on generations of Sierpinski Gasket. The walker can reside on any site of the lattice, at each time it only moves to one of its neighbors with equal probability $1/4$.
A path $w$ is defined by $w = (w_0, \dots, w_m)$, where $w_t = \mathbf{x}_t \in F^{(n)} \cup F^{(n)'} $, and $\Vert \mathbf{x}_{t+1} - \mathbf{x}_t \Vert =2$, where $\Vert \mathbf{x} -\mathbf{y}\Vert = |x_1 - y_1|+|x_2-y_2|.$ Each step $w_t$ is updated from one of four neighbors of $w_{t-1}$ with equal probability $1/4$.
The length of $w$ is defined by $|w| = m$.

\begin{definition}  
Let $\mathring{F}^{(n)} = \{ x \in F^{(n)}; x \neq \mathbf{0}, \mathbf{a_n}, \mathbf{b_n} \}$, and 
${\tau^{(n)}} = \inf \{ t\geq 1; w_t \not\in \mathring{F}^{(n)}\cup \mathring{F}^{(n)'}  \}$ be the first exiting time of $\mathring{F}^{(n)}\cup \mathring{F}^{(n)'}$.
Let $$W_1^{(n)} = \{w = (w_0 , \dots, w_{\tau^{(n)}}), w_0 = \mathbf{0} , w_{\tau^{(n)}} = \mathbf{a_n} \}$$ the set of path with initial $\mathbf{0}$ first exit of $\mathring{F}^{(n)}\cup \mathring{F}^{(n)'}$ at $\mathbf{a_n}$. And similarly, let
$$W_0^{(n)} = \{w = (w_0 , \dots, w_{\tau^{(n)}}), w_0 = \mathbf{0} , w_{\tau^{(n)}} = \mathbf{0} \}$$
\end{definition}

\begin{definition}
Let $$S_0(w) =\# \{i; w_{i+1} = w_i, i = 0,1,...,\tau^{(1)}-1\}$$ be the number of self-loops. And
$$S_1(w) =\# \{i; \Vert w_{i+1}- w_i\Vert = 2, i = 0,1,...,\tau^{(1)}-1\}.$$
\end{definition}

Let $$\Phi_0^{(n)}(\alpha,\beta) = \sum_{w\in W_0^{(n)}} \alpha^{S_0(w)}\beta^{S_1(w)}, \ \text{and}
\ \Phi_1^{(n)}(\alpha,\beta) = \sum_{w\in W_1^{(n)}} \alpha^{S_0(w)}\beta^{S_1(w)}. $$ 
By assigning $\alpha = 0,$ and $\beta = 1/4$ we have the hitting probability 
$$P^{(n)}(w_{\tau^{(n)}} = \mathbf{0}|w_0 = \mathbf{0}) = \Phi_0^{(n)}(0,1/4),$$
$$P^{(n)}(w_{\tau^{(n)}} = \mathbf{a_n}|w_0 = \mathbf{0}) = \Phi_1^{(n)}(0,1/4).$$

In general we can obtain a map $T:\mathbb{R}^2 \rightarrow \mathbb{R}^2$ such that
$$(\Phi_0^{(n+1)},\Phi_1^{(n+1)}) = T(\Phi_0^{(n)},\Phi_1^{(n)}).$$
Using the self-similar property of the Sierpinski gasket, we have
\begin{lemma}
Let $$\Phi_0^{(n)}(\alpha,\beta) = \sum_{w\in W_0^{(n)}} \alpha^{S_0(w)}\beta^{S_1(w)}, $$ and
$$\Phi_1^{(n)}(\alpha,\beta) = \sum_{w\in W_1^{(n)}} \alpha^{S_0(w)}\beta^{S_1(w)}. $$ Then 
$\Phi_0^{(0)}(\alpha,\beta) = \alpha,\Phi_1^{(0)}(\alpha,\beta)= \beta,$
$$\Phi_0^{(n+1)}(\alpha,\beta) = \Phi_0^{(1)}(\Phi_0^{(n)}(\alpha,\beta),\Phi_1^{(n)}(\alpha,\beta)), $$ and
$$\Phi_1^{(n+1)}(\alpha,\beta) = \Phi_1^{(1)}(\Phi_0^{(n)}(\alpha,\beta),\Phi_1^{(n)}(\alpha,\beta)). $$
\end{lemma}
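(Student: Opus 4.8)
The plan is to verify the base case by inspection and to obtain the recursion from a renewal decomposition of each path at the times it visits the corner vertices of the order-$n$ sub-gaskets. For $n=0$ one has $\mathring F^{(0)}\cup\mathring F^{(0)'}=\emptyset$, so $\tau^{(0)}=1$ for every path, and $W_0^{(0)}$, $W_1^{(0)}$ are singletons — the single self-loop $\mathbf 0\to\mathbf 0$ and the single step $\mathbf 0\to\mathbf a_0$ — which by the definitions of $S_0,S_1$ give $\Phi_0^{(0)}(\alpha,\beta)=\alpha$ and $\Phi_1^{(0)}(\alpha,\beta)=\beta$.

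For the recursion (a direct identity for each $n\ge 0$), I would first record the self-similar structure. The graph $F^{(n+1)}\cup F^{(n+1)'}$ is a union of finitely many isometric copies of $F^{(n)}$, any two of which meet in at most one vertex; call the vertices occurring as corners of these copies \emph{junctions}. The junction set is a $2^{n}$-fold rescaling of the vertex set of $F^{(1)}\cup F^{(1)'}$, and the relation ``$u,v$ are both corners of one copy'' reproduces, after rescaling by $2^{-n}$, the adjacency of $F^{(1)}\cup F^{(1)'}$. Two facts are needed: (i) every junction other than the four outermost corners lies in exactly two of the copies, and the union of those two copies is graph-isomorphic to $F^{(n)}\cup F^{(n)'}$ with the shared junction playing the role of $\mathbf 0$; and (ii) a path leaves $\mathring F^{(n+1)}\cup\mathring F^{(n+1)'}$ precisely when it first reaches one of $\mathbf 0,\mathbf a_{n+1},\mathbf b_{n+1},\mathbf a'_{n+1},\mathbf b'_{n+1}$, the junctions corresponding under the scaling to the complement of $\mathring F^{(1)}\cup\mathring F^{(1)'}$.

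Given a path $w\in W_0^{(n+1)}$, let $0=\sigma_0<\sigma_1<\dots<\sigma_L=\tau^{(n+1)}$ be the successive times it sits at a junction. Since the interiors of distinct copies are disjoint, each segment $(w_{\sigma_\ell},\dots,w_{\sigma_{\ell+1}})$ lies in a single copy, and by (i) it is, after the identification, a path of the type counted by $W_0^{(n)}$ when it returns to its starting junction and by $W_1^{(n)}$ when it ends at one of the other four corners (by the triangle symmetries of $F^{(n)}$ together with the $F^{(n)}\!\leftrightarrow\!F^{(n)'}$ symmetry, the generating function is the same $\Phi_1^{(n)}$ for each of these four corners). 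Since $S_0(w)$ and $S_1(w)$ split as the sums of the segment counts, $\alpha^{S_0(w)}\beta^{S_1(w)}$ is the product of the segment weights, and summing over all $w$ with a fixed coarse itinerary $\hat w=(w_{\sigma_0},\dots,w_{\sigma_L})$ replaces each factor by $\Phi_0^{(n)}(\alpha,\beta)$ or $\Phi_1^{(n)}(\alpha,\beta)$ according as the corresponding step of $\hat w$ is a self-loop or a move to an adjacent junction. By (ii) the rescaled $\hat w$ runs over exactly $W_0^{(1)}$, so
\[
\Phi_0^{(n+1)}=\sum_{\hat w\in W_0^{(1)}}\bigl(\Phi_0^{(n)}\bigr)^{S_0(\hat w)}\bigl(\Phi_1^{(n)}\bigr)^{S_1(\hat w)}=\Phi_0^{(1)}\bigl(\Phi_0^{(n)},\Phi_1^{(n)}\bigr)
\]
with all $\Phi$'s evaluated at $(\alpha,\beta)$; the same argument with $W_1^{(n+1)}$ and $W_1^{(1)}$ gives the formula for $\Phi_1^{(n+1)}$.

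The delicate point — the one I expect to be the main obstacle — is fact (i) together with the attendant bookkeeping: one must check that the two order-$n$ sub-gaskets sharing a junction, equipped with the walk's transition rule at that junction, reassemble exactly into the walk on a graph isomorphic to $F^{(n)}\cup F^{(n)'}$ rooted at $\mathbf 0$, so that a segment returning to its junction carries weight precisely $\Phi_0^{(n)}$ and not some multiple of it, and that the self-loops tracked by $\alpha$ — which enter only through the order-$0$ blocks — are matched on the two sides. Everything beyond that is a rearrangement of series with nonnegative coefficients, which is legitimate because each monomial $\alpha^i\beta^j$ receives contributions from only finitely many paths (so the $\Phi^{(n)}$ are well-defined formal power series) and because at $\alpha=0,\ \beta=1/4$ the series converge, being bounded by $1$.
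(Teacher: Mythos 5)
Your proof is correct, and it is essentially the argument the paper intends: the paper states this lemma by appeal to the self-similar structure without writing out details, and the mechanism you supply — decomposing a level-$(n+1)$ path at its successive visits to the corner vertices of the order-$n$ cells, identifying the two cells sharing a junction with a copy of $F^{(n)}\cup F^{(n)'}$ rooted at $\mathbf{0}$, and factorizing the weight $\alpha^{S_0}\beta^{S_1}$ over the segments so that the rescaled coarse itinerary ranges over $W_0^{(1)}$ or $W_1^{(1)}$ — is exactly the renewal-at-junctions decomposition the paper itself uses in the proof of Theorem \ref{th: g recursive} for the quantum amplitude Green functions. Your base case and the points you flag (two cells per interior junction, symmetry giving a single $\Phi_1^{(n)}$ for all four exit corners, and legitimacy of the series rearrangement) are the right ones and are handled correctly.
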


\begin{theorem}\label{th:classical recursive}
The recursive relation between $(\Phi_0^{(n+1)},\Phi_1^{(n+1)})$ and $(\Phi_0^{(n)},\Phi_1^{(n)})$ is
$$
\Phi_0^{(n+1)}=\Phi_0^{(1)}(\Phi_0^{(n)},\Phi_1^{(n)})=\Phi_0^{(n)}+\frac{4(\Phi_1^{(n)})^2(1-\Phi_0^{(n)})}{(1-\Phi_0^{(n)}+\Phi_1^{(n)})(1-\Phi_0^{(n)}-2\Phi_1^{(n)})}
$$
and
$$\Phi_1^{(n+1)} = \Phi_1^{(1)}(\Phi_0^{(n)},\Phi_1^{(n)})= \frac{(\Phi_1^{(n)})^2(1-\Phi_0^{(n)}+2\Phi_1^{(n)})}{(1-\Phi_0^{(n)}+\Phi_1^{(n)})(1-\Phi_0^{(n)}-2\Phi_1^{(n)})}.$$
\end{theorem}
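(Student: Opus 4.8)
\emph{Proof proposal.} The plan is to reduce everything to one finite computation. By the self-similarity Lemma stated just above, $\Phi_i^{(n+1)}=\Phi_i^{(1)}(\Phi_0^{(n)},\Phi_1^{(n)})$ for $i=0,1$, so the theorem follows at once from explicit closed forms for the two functions $\Phi_0^{(1)}(\alpha,\beta)$ and $\Phi_1^{(1)}(\alpha,\beta)$: one simply substitutes $\alpha=\Phi_0^{(n)}$, $\beta=\Phi_1^{(n)}$. Thus the entire content is to evaluate the level-one escape generating functions, and this I would do by the classical analogue of the Dirichlet/Poisson scheme used for the quantum walk in Theorem~\ref{th:Dirichlet g}.

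First I would describe the level-one cell $F^{(1)}\cup F^{(1)'}$ (relabeled as in Fig.~\ref{fig:relabel on F^{(1)}}) as being assembled from six copies of $F^{(0)}$, the small triangles, and replace each small triangle by a ``black box'' in which an excursion that enters at a corner either returns to that corner, contributing a factor $\Phi_0^{(0)}=\alpha$, or crosses to one of the two other corners, contributing $\Phi_1^{(0)}=\beta$. This turns the level-one walk into a walk on the reduced interior vertex set $\{\mathbf{1},\mathbf{2},\mathbf{3},\mathbf{1'},\mathbf{2'},\mathbf{3'}\}$ with absorbing boundary $\{\mathbf{0},\mathbf{a_1},\mathbf{b_1},\mathbf{a_1'},\mathbf{b_1'}\}$. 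Writing each path of $W_0^{(1)}$ (respectively $W_1^{(1)}$) as a concatenation of such excursions and summing $\alpha^{S_0}\beta^{S_1}$, I get for every interior reduced vertex $\mathbf{v}$ and boundary vertex $\mathbf{y}$ a generating function $G_{\mathbf{v}}^{\mathbf{y}}(\alpha,\beta)$ satisfying a linear system
\begin{equation*}
G_{\mathbf{v}}^{\mathbf{y}}\;=\;\sum_{\mathbf{w}}\kappa(\mathbf{v},\mathbf{w})\,G_{\mathbf{w}}^{\mathbf{y}}\;+\;\sigma(\mathbf{v},\mathbf{y}),
\end{equation*}
where $\kappa(\mathbf{v},\mathbf{w})$ is the one-excursion weight ($\alpha$ for a return, $\beta$ for a move to an adjacent reduced vertex), $\mathbf{w}$ runs over interior vertices, and the source $\sigma$ collects excursions landing directly on the boundary. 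This is precisely the classical counterpart of equation (2.10).

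The next step is to solve this system. Using the reflection symmetry $\mathbf{1}\leftrightarrow\mathbf{2}$, $\mathbf{a_1}\leftrightarrow\mathbf{b_1}$ (which fixes $\mathbf{3}$ and $\mathbf{0}$) together with the upper--lower symmetry $F^{(1)}\leftrightarrow F^{(1)'}$, the independent unknowns collapse to essentially a $2\times 2$ system in the ``$\mathbf{1}$-type'' and ``$\mathbf{3}$-type'' generating functions, exactly as the harmonic-function computation on $F^{(1)}\cup F^{(1)'}$ reduces to two unknowns. Inverting it gives rational functions of $(\alpha,\beta)$; taking the appropriate first-step average at $\mathbf{0}$ then yields $\Phi_0^{(1)}$ (the return-to-$\mathbf{0}$ part) and $\Phi_1^{(1)}$ (the exit-at-$\mathbf{a_1}$ part). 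Finally I would simplify and confirm these equal $\alpha+\tfrac{4\beta^{2}(1-\alpha)}{(1-\alpha+\beta)(1-\alpha-2\beta)}$ and $\tfrac{\beta^{2}(1-\alpha+2\beta)}{(1-\alpha+\beta)(1-\alpha-2\beta)}$; as sanity checks, the map should preserve the line $\alpha+4\beta=1$ (conservation of probability at $\alpha=0$, $\beta=\tfrac14$) and satisfy $\Phi_0^{(1)}(0,\tfrac14)=\tfrac{2}{5}$, $\Phi_1^{(1)}(0,\tfrac14)=\tfrac{3}{20}$.

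The main obstacle is not conceptual but combinatorial bookkeeping, concentrated at the vertex $\mathbf{0}$: it is the one point shared by four small triangles (two in $F^{(1)}$, two in $F^{(1)'}$), so the decomposition of paths passing through $\mathbf{0}$ and the weights attached to ``return'' versus ``transit'' excursions there must be set up carefully, and one must then carry the $2\times 2$ inversion through to the clean factored denominator $(1-\alpha+\beta)(1-\alpha-2\beta)$. Once the linear system is written down correctly, the rest is routine algebra.
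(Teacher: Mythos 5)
Your proposal is correct, and it is essentially the argument the paper leaves implicit: the theorem is printed without proof, the reduction $\Phi_i^{(n+1)}=\Phi_i^{(1)}(\Phi_0^{(n)},\Phi_1^{(n)})$ is exactly the preceding self-similarity lemma, and all that remains is the finite level-one computation, which your absorbing-boundary linear system (the classical analogue of the paper's Green-function/Dirichlet scheme, used again in the uniform-coin subsection) carries out. Solving that system does produce the stated formulas: with $s=1-\alpha$, symmetry gives $A(\mathbf{1})=A(\mathbf{3})$ for exit at $\mathbf{a_1}$ and $R(\mathbf{1})=R(\mathbf{2})$ for return to $\mathbf{0}$, while the primed interior vertices contribute $0$ in the first case (they are cut off from $\mathbf{a_1}$ by the absorbing origin) and equal amounts by reflection in the second; the resulting $2\times 2$ systems yield $\Phi_1^{(1)}=\beta^{2}(s+2\beta)/\bigl[(s+\beta)(s-2\beta)\bigr]$ and $\Phi_0^{(1)}=\alpha+4\beta^{2}s/\bigl[(s+\beta)(s-2\beta)\bigr]$, exactly the claimed expressions, and both of your sanity checks (invariance of $\alpha+4\beta=1$ and the values $2/5$, $3/20$ at $(0,\tfrac14)$) are valid. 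The one point to phrase carefully in your ``black box'' picture is that a stay step at an interior vertex carries weight $\alpha$ once, not once per incident small triangle; your linear system, with the single diagonal weight $\kappa(\mathbf{v},\mathbf{v})=\alpha$, already encodes this correctly, so no gap results.
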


Therefore, we obtain the map $T:\mathbb{R}^2 \rightarrow \mathbb{R}^2$, $T(\alpha,\beta) = (T_0(\alpha,\beta), T_1(\alpha,\beta))$:
$$(\Phi_0^{(n+1)},\Phi_1^{(n+1)}) = T(\Phi_0^{(n)},\Phi_1^{(n)}).$$
where 
$$
\Phi_0^{(n+1)}=T_0(\Phi_0^{(n)},\Phi_1^{(n)})=\Phi_0^{(n)}+\frac{4(\Phi_1^{(n)})^2(1-\Phi_0^{(n)})}{(1-\Phi_0^{(n)}+\Phi_1^{(n)})(1-\Phi_0^{(n)}-2\Phi_1^{(n)})}
$$
and
$$\Phi_1^{(n+1)} = T_1(\Phi_0^{(n)},\Phi_1^{(n)})= \frac{(\Phi_1^{(n)})^2(1-\Phi_0^{(n)}+2\Phi_1^{(n)})}{(1-\Phi_0^{(n)}+\Phi_1^{(n)})(1-\Phi_0^{(n)}-2\Phi_1^{(n)})}.$$

Let $\alpha=0$ and $\beta = \frac{1}{4}$ we have
$$ \Phi_1^{(1)}(0,\frac{1}{4}) = 0.15,\ \  \Phi_0^{(1)}(0,\frac{1}{4}) = 0.4
$$
$$ \Phi_1^{(2)}(0,\frac{1}{4}) = 0.09,\ \  \Phi_0^{(2)}(0,\frac{1}{4}) = 0.64
$$
$$ \Phi_1^{(3)}(0,\frac{1}{4}) = 0.054,\ \  \Phi_0^{(3)}(0,\frac{1}{4}) = 0.784
$$
$$ \vdots$$

\begin{corollary}
For initial probability distribution 
$\Phi_0^{(0)} = 0,\Phi_1^{(0)}=\frac{1}{4}$, we have
$$\Phi_0^{(n)} + 4\Phi_1^{(n)} = 1,$$ moreover 
$$\Phi_1^{(n+1)} = 0.6 \Phi_1^{(n)} \text{and } \Phi_0^{(n+1)} = 0.4 + 0.6\Phi_0^{(n)}.$$
\end{corollary}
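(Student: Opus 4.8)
The plan is to prove the identity $\Phi_0^{(n)}+4\Phi_1^{(n)}=1$ by induction on $n$, and to read off the two affine recursions as an immediate consequence of the inductive step. The underlying mechanism is that the genuinely nonlinear map $T$ of Theorem \ref{th:classical recursive} degenerates to an affine map when restricted to the line $L=\{(\alpha,\beta):\alpha+4\beta=1\}$, and that $L$ is forward-invariant under $T$ for the given initial data; so the whole corollary amounts to checking this one algebraic collapse.

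First I would verify the base case: for $n=0$ we are handed $\Phi_0^{(0)}=0$, $\Phi_1^{(0)}=\tfrac14$, so $\Phi_0^{(0)}+4\Phi_1^{(0)}=1$. For the inductive step, assume $\Phi_0^{(n)}+4\Phi_1^{(n)}=1$ and abbreviate $a=\Phi_0^{(n)}$, $b=\Phi_1^{(n)}$, so $1-a=4b$. Substituting $1-a=4b$ into the three binomials that appear in Theorem \ref{th:classical recursive} gives $1-a+b=5b$, $1-a-2b=2b$, $1-a+2b=6b$, whence the fractions in those formulas simplify to pure monomials in $b$: one finds $\Phi_1^{(n+1)}=\dfrac{b^2(6b)}{(5b)(2b)}=0.6\,b$ and $\Phi_0^{(n+1)}=a+\dfrac{4b^2(4b)}{(5b)(2b)}=a+1.6\,b$. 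Since $1.6\,b=0.4(1-a)$, the latter is $0.4+0.6\,a$, which is exactly the claimed recursion for $\Phi_0$; and then $\Phi_0^{(n+1)}+4\Phi_1^{(n+1)}=(0.4+0.6a)+2.4b=0.4+0.6a+0.6(1-a)=1$, closing the induction and finishing the proof.

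The only point requiring a word of care is that the formulas of Theorem \ref{th:classical recursive} presuppose the denominators $1-\Phi_0^{(n)}+\Phi_1^{(n)}$ and $1-\Phi_0^{(n)}-2\Phi_1^{(n)}$ are nonzero, i.e. $\Phi_1^{(n)}\neq 0$; but iterating $\Phi_1^{(n+1)}=0.6\,\Phi_1^{(n)}$ from $\Phi_1^{(0)}=\tfrac14$ shows $\Phi_1^{(n)}=\tfrac14(0.6)^n>0$ for all $n$, so the iteration never leaves the domain of $T$. Beyond this there is no real obstacle: the argument is essentially a one-line substitution into the recursion of Theorem \ref{th:classical recursive}, and if one wishes one can additionally record the closed forms $\Phi_1^{(n)}=\tfrac14(0.6)^n$ and $\Phi_0^{(n)}=1-(0.6)^n$ obtained by solving the two scalar recursions.
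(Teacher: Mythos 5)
Your proof is correct and follows the route the paper intends (the corollary is stated without an explicit proof): substitute the constraint $\Phi_0^{(n)}+4\Phi_1^{(n)}=1$ into the recursion of Theorem \ref{th:classical recursive}, observe the nonlinear map collapses to $\Phi_1^{(n+1)}=0.6\,\Phi_1^{(n)}$, $\Phi_0^{(n+1)}=0.4+0.6\,\Phi_0^{(n)}$, and close the induction; your check that $\Phi_1^{(n)}=\tfrac14(0.6)^n>0$ keeps the denominators nonzero is a welcome extra precision.
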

So $\lim_{n \rightarrow \infty} \Phi_1^{(n)} = 0$ and $\lim_{n \rightarrow \infty} \Phi_0^{(n)} = 1$. The random walk is recurrent.

\subsection{Classical diffusion and recurrence exponents $d_{w, c}$ and $r_{w, c}$} 
Let $T^{(n)} = \inf \{ t\geq 0; w_t \in \partial (F^{(n)}\cup F^{(n)'}) \}, \partial (F^{(n)}\cup F^{(n)'}) := \{\mathbf{a_n}, \mathbf{b_n},\mathbf{a'_n}, \mathbf{b'_n}\}$ be the first-passage time taken to exit $F^{(n)}\cup F^{(n)'}$ at the four vertices.
\begin{theorem} \label{th:classical exponents}
For classical random walks on the Sierpinski gaskets, we have 

(a) $E(T^{(n+1)}|w_0 = \mathbf{0}) = 5E(T^{(n)}|w_0 = \mathbf{0})$.

(b) $d_{w, c} = \frac{\ln5}{\ln2}$.
\end{theorem}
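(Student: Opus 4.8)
The plan is to derive both parts of Theorem~\ref{th:classical exponents} directly from the recursive relations of Theorem~\ref{th:classical recursive}, combined with a generating-function computation of expected passage times analogous to the one used for $\Phi_0,\Phi_1$. The key observation is that the same self-similar decomposition that produced the maps $T_0,T_1$ also governs the expected time: on $F^{(n+1)}\cup F^{(n+1)'}$, a walk started at $\mathbf 0$ performs a sequence of excursions between copies of $F^{(n)}\cup F^{(n)'}$, and the passage time is a random sum of level-$n$ passage times. I would introduce the generating functions $\Xi_0^{(n)}(\alpha,\beta)=\sum_{w\in W_0^{(n)}}|w|\,\alpha^{S_0(w)}\beta^{S_1(w)}$ and $\Xi_1^{(n)}$ likewise for $W_1^{(n)}$, i.e. the derivatives obtained by marking the length; then $E(T^{(n)}\mid w_0=\mathbf 0)$ is a finite linear combination of $\Xi_0^{(n)},\Xi_1^{(n)}$ evaluated at $(\alpha,\beta)=(0,1/4)$. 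Differentiating the recursion of Theorem~\ref{th:classical recursive} by the chain rule gives a linear recursion for the pair $(\Xi_0^{(n)},\Xi_1^{(n)})$ whose coefficient matrix is the Jacobian of $T$ at $(\Phi_0^{(n)},\Phi_1^{(n)})$.

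For Part~(a), the crucial simplification is the Corollary: along the trajectory $\Phi_0^{(0)}=0,\Phi_1^{(0)}=1/4$ we have the invariant $\Phi_0^{(n)}+4\Phi_1^{(n)}=1$ for all $n$, and the maps linearize to $\Phi_1^{(n+1)}=0.6\,\Phi_1^{(n)}$, $\Phi_0^{(n+1)}=0.4+0.6\,\Phi_0^{(n)}$. On this invariant line the Jacobian of $T$ becomes a constant matrix, independent of $n$. I would compute this $2\times 2$ matrix explicitly (evaluating $\partial T_0/\partial\alpha$, $\partial T_0/\partial\beta$, $\partial T_1/\partial\alpha$, $\partial T_1/\partial\beta$ at a point with $\Phi_0+4\Phi_1=1$), then observe that the length-marking contributes an inhomogeneous term equal to (Jacobian)\,$\times$\,(number of elementary steps inside one level-$n$ block, counted with the appropriate weights), which is exactly how the factor $5$ arises: each level-$(n+1)$ excursion traverses several level-$n$ blocks, and the weighted count of block-traversals per successful passage equals $5$. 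Concretely, I expect the recursion to collapse to $E(T^{(n+1)})=5\,E(T^{(n)})$ after substituting the linearized values of $\Phi_0^{(n)},\Phi_1^{(n)}$; with the base case $E(T^{(0)}\mid w_0=\mathbf 0)$ computed by hand (a one-step exit, giving a small explicit rational), this yields $E(T^{(n)})=5^n\,E(T^{(0)})$.

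Part~(b) is then immediate: by definition $d_{w,c}$ is determined by $E(T^{(n)})\sim L^{d_{w,c}}$ where $L=S_n-1\sim 2^n$ is the linear size of $F^{(n)}$, so $d_{w,c}=\lim_{n\to\infty}\frac{\ln E(T^{(n)})}{\ln 2^n}=\frac{\ln 5}{\ln 2}$ from Part~(a). I would phrase this carefully in terms of the first-passage characterization of the walk dimension mentioned in the introduction ($E(T_L)\sim L^{c_w}$ with $c_w=d_w$ for the recurrent/strongly-localized setup is replaced here by the exit-at-corners time), and note that because the classical walk on the gasket is recurrent (established just above via $\lim\Phi_0^{(n)}=1$), one should really work with $T^{(n)}$, the exit time at the four far corners, rather than a naive first-passage to distance $L$; the scaling $E(T^{(n)})=5^n E(T^{(0)})$ is exactly the statement that this exit time scales like $L^{\ln 5/\ln 2}$.

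The main obstacle I anticipate is bookkeeping in the excursion decomposition for Part~(a): one must correctly account for \emph{all} the ways a level-$(n+1)$ walk can wander among the (three up-pointing plus reflected) sub-copies before exiting at a corner, including excursions that return to $\mathbf 0$ and restart, and verify that the marked-length generating functions satisfy precisely the differentiated form of the Theorem~\ref{th:classical recursive} recursion with the right inhomogeneous term. This is the same combinatorial content as deriving Theorem~\ref{th:classical recursive} itself, now carried one order higher in the formal variable tracking path length; the risk is an off-by-one or a miscounted self-loop contribution, so I would cross-check the final formula $E(T^{(n+1)})=5E(T^{(n)})$ against the explicit values $\Phi_i^{(1)},\Phi_i^{(2)},\Phi_i^{(3)}$ already tabulated, by directly computing $E(T^{(1)})$ and $E(T^{(2)})$ from those and confirming the ratio is $5$.
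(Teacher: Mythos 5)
Your renormalization framework (length-marking the generating functions of Theorem~\ref{th:classical recursive} and linearizing on the invariant line $\Phi_0+4\Phi_1=1$) is workable, but the mechanism you give for extracting the factor $5$ is not correct as stated, and followed literally it produces the factor $3$ of Theorem~\ref{th:recurrence exponent classical case} rather than part~(a). If you differentiate the composition rule $\Phi_i^{(n+1)}(\alpha,\beta)=\Phi_i^{(1)}(\Phi_0^{(n)},\Phi_1^{(n)})$ and evaluate at $(\alpha,\beta)=(0,\tfrac14)$, the recursion for $(\Xi_0^{(n)},\Xi_1^{(n)})$ is \emph{homogeneous}, $\Xi^{(n+1)}=J\,\Xi^{(n)}$, where $J$ is the Jacobian of $T$ on the invariant line; there is no ``inhomogeneous term equal to (Jacobian) times the number of elementary steps,'' because all of the path length already lives inside the level-$n$ blocks. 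Computing $J$ on the line gives the constant matrix with rows $(1.72,\,4.48)$ and $(0.32,\,1.88)$, whose left eigenvector $(1,4)$ has eigenvalue $3$ (the other eigenvalue is $0.6$). Since $\Xi_0^{(n)}+4\Xi_1^{(n)}=E(\tau^{(n)})$, this proves $E(\tau^{(n+1)})=3E(\tau^{(n)})$, i.e.\ the recurrence-time statement, and cannot by itself ``collapse'' to $E(T^{(n+1)})=5E(T^{(n)})$. Moreover $E(T^{(n)})$ is \emph{not} a fixed linear combination of $\Xi_0^{(n)},\Xi_1^{(n)}$: because $W_0^{(n)},W_1^{(n)}$ are stopped at $\mathbf{0}$ as well as at the corners, you must sum over a geometric number of return excursions before the final crossing, which gives $E(T^{(n)})=\bigl(\Xi_0^{(n)}+4\Xi_1^{(n)}\bigr)/\bigl(4\Phi_1^{(n)}\bigr)=E(\tau^{(n)})/\bigl(4\Phi_1^{(n)}\bigr)$, with $n$-dependent coefficients.

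That renewal identity is the missing step: combined with the Corollary ($4\Phi_1^{(n)}=(3/5)^n$) it yields $E(T^{(n)})=3^n/(3/5)^n=5^n$, so the $5$ arises as the ratio $3/0.6$ of the two eigenvalues, not as an eigenvalue of $J$. Equivalently, and closer to your ``block-traversals per passage'' remark, you can argue directly on $T^{(n)}$: by the decimation property behind the composition rule, the walk observed at the level-$n$ vertices is again a simple random walk on the coarse gasket, each coarse step from a non-absorbing coarse vertex takes expected time $E(T^{(n)})$ by self-similarity, and Wald's identity gives $E(T^{(n+1)})=E(T^{(1)})\,E(T^{(n)})$ with $E(T^{(1)})=5$ --- the value the paper obtains in its appendix by evaluating $\partial_z g_1^{(1)}(z)(\mathbf{0},\mathbf{5})$ at $z=1$ for the uniform coin (the paper states Theorem~\ref{th:classical exponents} as a known result and corroborates it this way rather than giving a separate formal proof). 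With either repair, part~(a) follows, your planned numerical cross-check against $E(T^{(1)}),E(T^{(2)})$ would indeed expose the $3$-versus-$5$ confusion, and your deduction of part~(b), $d_{w,c}=\ln 5/\ln 2$ from $E(T^{(n)})=5^nE(T^{(0)})$ with linear size $2^n$, is fine as it stands.
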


Let $\tau^{(n)} = \inf \{ t\geq 1; w_t \in \partial (F^{(n)})\cup \partial (F^{(n)'}) \}, \partial (F^{(n)})\cup \partial (F^{(n)'}) := \{\mathbf{0}, \mathbf{a_n}, \mathbf{b_n},\mathbf{a'_n}, \mathbf{b'_n}\}$ be the exit time taken to exit $F^{(n)}\cup  F^{(n)'}$ at the four vertices or back to the origion.
\begin{theorem} \label{th:recurrence exponent classical case}
For classical random walks on the Sierpinski gaskets, we have 

(a) $E(\tau^{(n+1)}|w_0 = \mathbf{0}) = 3E(\tau^{(n)}|w_0 = \mathbf{0})$.

(b) $r_{w, c}=\frac{\ln3}{\ln2}$.

\end{theorem}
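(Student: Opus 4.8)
The plan is to deduce part~(a) from the two-variable rational recursion of Theorem~\ref{th:classical recursive} by lifting it to time-tracking generating functions and then differentiating at $z=1$; part~(b) will then be a one-line consequence. For the walk on $F^{(n)}\cup F^{(n)'}$ started at $\mathbf 0$, set
\[
\hat F_0^{(n)}(z):=E\bigl[z^{\tau^{(n)}}\mathbf 1(w_{\tau^{(n)}}=\mathbf 0)\bigr],\qquad
\hat F_1^{(n)}(z):=E\bigl[z^{\tau^{(n)}}\mathbf 1(w_{\tau^{(n)}}=\mathbf a_n)\bigr].
\]
The reflection symmetries of $F^{(n)}\cup F^{(n)'}$ (swapping $\mathbf a_n\leftrightarrow\mathbf b_n$, and swapping $F^{(n)}\leftrightarrow F^{(n)'}$) show that the exit generating functions at $\mathbf b_n,\mathbf a'_n,\mathbf b'_n$ all equal $\hat F_1^{(n)}(z)$, so $\hat F_0^{(n)}(z)+4\hat F_1^{(n)}(z)=E[z^{\tau^{(n)}}]$, and since $E(\tau^{(n)}\mid w_0=\mathbf 0)<\infty$ (finite state space, absorbing set reachable from everywhere) we get $E(\tau^{(n)}\mid w_0=\mathbf 0)=\tfrac{d}{dz}\bigl(\hat F_0^{(n)}+4\hat F_1^{(n)}\bigr)\big|_{z=1}$. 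Because native paths never stay put ($\Vert w_{t+1}-w_t\Vert=2$), specialising $\alpha=0,\ \beta=z/4$ in the appendix generating functions kills the lazy steps and weights each genuine step by $z\cdot\tfrac14$, hence $\hat F_i^{(n)}(z)=\Phi_i^{(n)}(0,z/4)$. Feeding this into the composition lemma preceding Theorem~\ref{th:classical recursive} yields the recursion
\[
\bigl(\hat F_0^{(n+1)}(z),\hat F_1^{(n+1)}(z)\bigr)=T\bigl(\hat F_0^{(n)}(z),\hat F_1^{(n)}(z)\bigr),\qquad
\bigl(\hat F_0^{(0)},\hat F_1^{(0)}\bigr)=(0,z/4),
\]
where $T=(\Phi_0^{(1)},\Phi_1^{(1)})$ is the explicit rational map of Theorem~\ref{th:classical recursive}; the base case reflects $\tau^{(0)}=1$ a.s.\ (every neighbour of $\mathbf 0$ already lies on $\partial F^{(0)}\cup\partial F^{(0)'}$), so $E(\tau^{(0)}\mid w_0=\mathbf 0)=1$.

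The key observation is that the iterates remain on an invariant line and that the relevant directional derivative of $T$ along it is $3$. By the Corollary following Theorem~\ref{th:classical recursive} (equivalently, because $\tau^{(n)}<\infty$ a.s.), the points $(u_n,v_n):=(\hat F_0^{(n)}(1),\hat F_1^{(n)}(1))$ satisfy $u_n+4v_n=1$ for every $n$. A short computation with the formulas of Theorem~\ref{th:classical recursive} gives
\[
G(\alpha,\beta):=T_0(\alpha,\beta)+4T_1(\alpha,\beta)=\alpha+\frac{8\beta^2}{1-\alpha-2\beta},
\]
and on the line $\{\alpha+4\beta=1\}$, where $1-\alpha-2\beta=2\beta$,
\[
\partial_\alpha G=1+\frac{8\beta^2}{(2\beta)^2}=3,\qquad
\partial_\beta G=\frac{16\beta(1-\alpha-\beta)}{(2\beta)^2}=\frac{16\beta\cdot 3\beta}{4\beta^2}=12,
\]
so $\nabla G=(3,12)=3\,(1,4)$ on that line (this also re-derives $u_{n+1}+4v_{n+1}=G(u_n,v_n)=1$, closing the induction). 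Differentiating $\hat F_0^{(n+1)}+4\hat F_1^{(n+1)}=G(\hat F_0^{(n)},\hat F_1^{(n)})$ at $z=1$ and evaluating $\nabla G$ at $(u_n,v_n)$ then gives $E(\tau^{(n+1)})=3\bigl((\hat F_0^{(n)})'(1)+4(\hat F_1^{(n)})'(1)\bigr)=3\,E(\tau^{(n)})$, which with $E(\tau^{(0)})=1$ proves (a) and yields $E(\tau^{(n)}\mid w_0=\mathbf 0)=3^n$.

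Part~(b) is then immediate: the vertices $\mathbf a_n,\mathbf b_n,\mathbf a'_n,\mathbf b'_n$ lie at distance $\asymp 2^n$ from the origin and the walk cannot travel distance $\gg 2^n$ before meeting $\partial F^{(n)}\cup\partial F^{(n)'}$, so $\tau^{(n)}$ coincides (up to a bounded comparison of scales) with the time $\tau_L$ of the definition for $L\asymp 2^n$; hence $r_{w,c}=\lim_{n\to\infty}\frac{\ln E(\tau^{(n)})}{\ln 2^{\,n}}=\frac{\ln 3}{\ln 2}$, and for general $L$ one squeezes between two consecutive levels.

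The step needing the most care is the justification of the recursion $\hat F^{(n+1)}=T(\hat F^{(n)})$: when a level-$(n+1)$ path is cut at its successive visits to the rescaled copy of the vertex set of $F^{(1)}\cup F^{(1)'}$, one must check that each piece crosses a single level-$n$ sub-copy of $F^{(n)}$ whose (time, exit-vertex) generating function is exactly $\hat F_0^{(n)}(z)$ or $\hat F_1^{(n)}(z)$, independently of which macro-vertex it starts from---this uses that every non-absorbing macro-vertex meets exactly two such sub-copies with two neighbours in each, together with the reflection symmetry---and that the pieces are conditionally independent given the macro-skeleton. This is precisely the content of the composition lemma, but its passage from probability weights to the variable $z$ should be written out explicitly; the gradient identity $\nabla G=3(1,4)$ on $\{\alpha+4\beta=1\}$ is a routine calculation, and the only analytic point to record is differentiability of the generating functions at $z=1$, which follows from $E(\tau^{(n)})<\infty$.
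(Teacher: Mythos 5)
Your proposal is correct, and it follows a genuinely different (and in fact more complete) route than the paper. The paper never writes out a standalone proof of this theorem: it reuses its quantum Green-function machinery with the uniform coin, i.e.\ the $z$-dependent three-variable recursion for $(u^{(n)}_1,u^{(n)}_2,u^{(n)}_3)$ stated just before Theorem A.5, computes $E(\tau^{(n)})=\sum_{\mathbf y,j}\partial_z g^{(n)}(z)(\mathbf 0,\mathbf y)^i_j|_{z=1}$, and exhibits the values $1,3,3^2,3^3,3^4,\dots$ for the first few levels, remarking that they ``agree with'' the theorem; this buys consistency with the unified quantum framework and directional detail, but it is a level-by-level verification rather than a general-$n$ argument. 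You instead lift the two-variable renormalization $(\Phi_0^{(n)},\Phi_1^{(n)})$ of Lemma A.1/Theorem A.6 to exit-time generating functions via the substitution $(\alpha,\beta)=(0,z/4)$, observe the invariant line $\{\alpha+4\beta=1\}$ (the paper's Corollary A.1), and prove the exact relation $E(\tau^{(n+1)})=3E(\tau^{(n)})$ for \emph{all} $n$ from the clean identity $T_0+4T_1=\alpha+\frac{8\beta^2}{1-\alpha-2\beta}$ together with $\nabla(T_0+4T_1)=3\,(1,4)$ on that line; your algebra here checks out, and the base case $E(\tau^{(0)})=1$ and the chain rule at $z=1$ (justified by analyticity, which indeed follows from the exponential tails of exit times on the finite graph) complete part (a), with part (b) obtained by the standard squeeze $\tau^{(n)}\le\tau_L\le\tau^{(n+1)}$ for $2^{n+1}<L\le 2^{n+2}$, using that the walk can only leave $F^{(n)}\cup F^{(n)'}$ through its four corners. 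The one shared caveat is that both your argument and the paper's rest on the self-similar decomposition (the composition Lemma A.1, resp.\ Theorem \ref{th: g recursive} specialized to the uniform coin), which the paper states without proof for the classical $\Phi$'s; you correctly flag that cutting a level-$(n+1)$ path at the macro-vertices and tracking the time variable through the substitution is the step that must be written out, so your proof is on the same footing as the paper's machinery while isolating exactly where the factor $3$ comes from.
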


\subsection{Uniform coin operator, classical case}
For classical random walks, we can also compute the expected hitting times and exit probability  distributions using the same methods as we used for quantum random walks. When the coin operator is uniform, then $U$ is no longer a unitary operator and in this case, the quantum random walk becomes the classical random walk. In doing so, we not only obtain known results but also   new results such as the detail probability distributions as for  each of the departing and arriving directions.

Let
$$G = r
  \left[
 \begin{array}{ c c c c }
1 & 1 & 1 & 1 \\
1 & 1 & 1 & 1 \\
1 & 1 & 1 & 1 \\
1 & 1 & 1 & 1 
\end{array} \right],$$ where $\displaystyle r =1/4$. In this case,  the formulas for the amplitude function for the quantum case becomes the formulas for the transition probability for the classical case.  

In this case,  we only have three free variables for each iteration instead of   six free variables in the quantum case, where
\begin{equation}
{g^{(n)}(\mathbf{0},\mathbf{a_n})} = \bordermatrix{~& 0 & 1 & 4              & 5           \cr
                                                       0 & 0 & 0 & u^{(n)}_1 & u^{(n)}_2       \cr
                                                       1 & 0 & 0 & u^{(n)}_1 & u^{(n)}_2    \cr
                                                       4 & 0 & 0 & u^{(n)}_1& u^{(n)}_2    \cr
                                                       5 & 0 & 0 & u^{(n)}_1 & u^{(n)}_2        \cr},
\end{equation}
\begin{equation}
{g^{(n)}(\mathbf{0},\mathbf{0})} = \bordermatrix{~& 0 & 1 & 4              & 5           \cr
                                                       0 & 0 & 0 & u^{(n)}_3 & u^{(n)}_3       \cr
                                                       1 & 0 & 0 & u^{(n)}_3 & u^{(n)}_3    \cr
                                                       4 & 0 & 0 & u^{(n)}_3& u^{(n)}_3   \cr
                                                       5 & 0 & 0 & u^{(n)}_3 & u^{(n)}_3        \cr}.
\end{equation}
When n = 1, we have
$${g^{(1)}(\mathbf{0},\mathbf{a_1})} = \bordermatrix{~& 0 & 1 & 4              & 5           \cr
                                                       0 & 0 & 0 & -\frac{z^2}{2(z^2+2z-8)} & -\frac{z^3}{4(z^2+2z-8)}     \cr
                                                       1 & 0 & 0 & -\frac{z^2}{2(z^2+2z-8)} & -\frac{z^3}{4(z^2+2z-8)}   \cr
                                                       4 & 0 & 0 & -\frac{z^2}{2(z^2+2z-8)} & -\frac{z^3}{4(z^2+2z-8)}  \cr
                                                       5 & 0 & 0 & -\frac{z^2}{2(z^2+2z-8)} & -\frac{z^3}{4(z^2+2z-8)}     \cr},$$
$$
{g^{(1)}(\mathbf{0},\mathbf{0})} = \bordermatrix{~& 0 & 1 & 4              & 5           \cr
                                                       0 & -\frac{z^2}{2(z^2+2z-8)} & -\frac{z^2}{2(z^2+2z-8)} & -\frac{z^2}{2(z^2+2z-8)} & -\frac{z^2}{2(z^2+2z-8)}     \cr
                                                       1 & -\frac{z^2}{2(z^2+2z-8)} & -\frac{z^2}{2(z^2+2z-8)} &  -\frac{z^2}{2(z^2+2z-8)} & -\frac{z^2}{2(z^2+2z-8)} \cr
                                                       4 & -\frac{z^2}{2(z^2+2z-8)} & -\frac{z^2}{2(z^2+2z-8)} &  -\frac{z^2}{2(z^2+2z-8)} & -\frac{z^2}{2(z^2+2z-8)} \cr
                                                       5 & -\frac{z^2}{2(z^2+2z-8)} & -\frac{z^2}{2(z^2+2z-8)} &  -\frac{z^2}{2(z^2+2z-8)} & -\frac{z^2}{2(z^2+2z-8)}     \cr}.
$$
So we have 
\begin{eqnarray*}
u^{(1)}_1&=&-\frac{z^2}{2(z^2+2z-8)},\\
u^{(0)}_2&=&-\frac{z^3}{4(z^2+2z-8)},\\
u^{(0)}_3&=&-\frac{z^2}{2(z^2+2z-8)}.
\end{eqnarray*}
Letting  $z=1$, we have thus obtained the transition probabilities
$$
{P_c^{(1)}(\mathbf{0},\mathbf{a_1})} = \bordermatrix{~& 0 & 1 & 4              & 5           \cr
                                                       0 & 0 & 0 & 0.1 & 0.05      \cr
                                                       1 & 0 & 0 & 0.1 & 0.05   \cr
                                                       4 & 0 & 0 & 0.1 & 0.05  \cr
                                                       5 & 0 & 0 & 0.1 & 0.05     \cr}
,$$
$$
{P_c^{(1)}(\mathbf{0},\mathbf{0})} = \bordermatrix{~& 0 & 1 & 4              & 5           \cr
                                                       0 & 0.1 & 0.1 & 0.1 & 0.1     \cr
                                                       1 & 0.1 & 0.1 & 0.1 & 0.1 \cr
                                                       4 & 0.1 & 0.1 & 0.1 & 0.1 \cr
                                                       5 & 0.1 & 0.1 & 0.1 & 0.1     \cr}
,$$
$${P_c^{(2)}(\mathbf{0},\mathbf{a_1})} = \bordermatrix{~& 0 & 1 & 4              & 5           \cr
                                                       0 & 0 & 0 & 0.05 & 0.04     \cr
                                                       1 & 0 & 0 & 0.05 & 0.04    \cr
                                                       4 & 0 & 0 & 0.05 & 0.04   \cr
                                                       5 & 0 & 0 & 0.05 & 0.04     \cr}
,$$
 $$
{P_c^{(2)}(\mathbf{0},\mathbf{0})} = \bordermatrix{~& 0 & 1 & 4              & 5           \cr
                                                       0 & 0.16 & 0.16 & 0.16 & 0.16     \cr
                                                       1 & 0.16 & 0.16 & 0.16 & 0.16  \cr
                                                       4 & 0.16 & 0.16 & 0.16 & 0.16  \cr
                                                       5 & 0.16 & 0.16 & 0.16 & 0.16      \cr}
,$$
$${P_c^{(3)}(\mathbf{0},\mathbf{a_1})} = \bordermatrix{~& 0 & 1 & 4              & 5           \cr
                                                       0 & 0 & 0 & 0.028 & 0.026     \cr
                                                       1 & 0 & 0 & 0.028 & 0.026    \cr
                                                       4 & 0 & 0 & 0.028 & 0.026   \cr
                                                       5 & 0 & 0 & 0.028 & 0.026     \cr}
,$$
$$
{P_c^{(3)}(\mathbf{0},\mathbf{0})} = \bordermatrix{~& 0 & 1 & 4              & 5           \cr
                                                       0 & 0.196 & 0.196 & 0.196 & 0.196     \cr
                                                       1 & 0.196 & 0.196 & 0.196 & 0.196  \cr
                                                       4 & 0.196 & 0.196 & 0.196 & 0.196  \cr
                                                       5 & 0.196 & 0.196 & 0.196 & 0.196      \cr}
,$$
here subscript $c$ is denoted for classical case. 
Summing each row of each of the above matrices, we obtain the probabilities  that agree with  the known results for the classical case \cite{2005 diffusion}. Here we have obtained more detail information about the transition probabilities as for  each of the departing and arriving directions. 

In general, we have a map $T:\mathbb{R}^3 \rightarrow \mathbb{R}^3$, 
$$(u^{(n+1)}_1,u^{(n+1)}_2,u^{(n+1)}_3) = T(u^{(n)}_1,u^{(n)}_2,u^{(n)}_3),$$
where 
\begin{eqnarray*}\label{eq:classical hitting prob n}
&&u^{(n+1)}_1 = \\
&& -(u^{(n)}_1 + u^{(n)}_2)(2(u^{(n)}_2)^2 + 2u^{(n)}_1u^{(n)}_2 + u^{(n)}_1 - 4u^{(n)}_1u^{(n)}_3)\\
&&  (2(u^{(n)}_1)^2 + 4u^{(n)}_1u^{(n)}_2 - 4u^{(n)}_1u^{(n)}_3 + u^{(n)}_1 + 2(u^{(n)}_1)^2 - 4u^{(n)}_2u^{(n)}_3 + u^{(n)}_2 - 16(u^{(n)}_3)^2 + 8u^{(n)}_3 - 1)^{(-1)} ,\\
&&u^{(n+1)}_2 = \\
&&-(u^{(n)}_1 + u^{(n)}_2)(2(u^{(n)}_1)^2 + 2u^{(n)}_1u^{(n)}_2 + u^{(n)}_2 - 4u^{(n)}_2u^{(n)}_3)\\
&& (2(u^{(n)}_1)^2 + 4u^{(n)}_1u^{(n)}_2 - 4u^{(n)}_1u^{(n)}_3 + u^{(n)}_1 + 2(u^{(n)}_1)^2 - 4u^{(n)}_2u^{(n)}_3 + u^{(n)}_2 - 16(u^{(n)}_3)^2 + 8u^{(n)}_3 - 1)^{(-1)} ,\\
&&u^{(n+1)}_3 = \\
&& [6(u^{(n)}_1)^2u^{(n)}_3 -(u^{(n)}_1)^2 + 12u^{(n)}_1u^{(n)}_2u^{(n)}_3 - 2u^{(n)}_1u^{(n)}_2 - 4u^{(n)}_1(u^{(n)}_3)^2 + u^{(n)}_1u^{(n)}_3 \\
 &&+6(u^{(n)}_2)^2u^{(n)}_3 - (u^{(n)}_2)^2 - 4u^{(n)}_2(u^{(n)}_3)^2
+u^{(n)}_2u^{(n)}_3 - 16(u^{(n)}_3)^3 + 8(u^{(n)}_3)^2 - u^{(n)}_3]\\
&&[(2(u^{(n)}_1)^2 + 4u^{(n)}_1u^{(n)}_2 - 4u^{(n)}_1u^{(n)}_3 + u^{(n)}_1 
 +2(u^{(n)}_1)^2 - 4u^{(n)}_2u^{(n)}_3 + u^{(n)}_2 - 16(u^{(n)}_3)^2 + 8u^{(n)}_3 - 1)]^{(-1)} .
\end{eqnarray*}
Note that for classical case, the above $u$'s at $z=1$ are the probabilities (instead of amplitude functions for quantum case). By plugging $z=1$, we have thus obtained

\begin{theorem}\label{th:classical hitting probabilities theorem}
For classical random walks on the Sierpinski gaskets, if the initial probability distribution 
$u^{(0)}_1|_{z=1} = \frac{1}{4},u^{(0)}_2|_{z=1} = 0, u^{(0)}_3|_{z=1} = 0$, then we have
$$u^{(n)}_1|_{z=1} +u^{(n)}_2|_{z=1} +u^{(n)}_3|_{z=1}  = \frac{1}{4}.$$ Moreover 
$$u^{(n+1)}_1|_{z=1}  = 0.4u^{(n)}_1|_{z=1} +0.2u^{(n)}_2|_{z=1} ,u^{(n+1)}_2|_{z=1}  = 0.2u^{(n)}_1|_{z=1} +0.4u^{(n)}_2|_{z=1},$$ 
$$u^{(n+1)}_3|_{z=1}  = 0.1+0.6u^{(n)}_3|_{z=1} .$$
\end{theorem}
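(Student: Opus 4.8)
The plan is to specialize the general iteration map $T:\mathbb{R}^3\to\mathbb{R}^3$ derived just above the theorem to the specific initial vector $u^{(0)}=(\tfrac14,0,0)$ (at $z=1$) and show that the affine structure of the iteration is preserved along the orbit. First I would verify the base case: when $u^{(0)}_1=\tfrac14$, $u^{(0)}_2=0$, $u^{(0)}_3=0$, one has $u^{(0)}_1+u^{(0)}_2+u^{(0)}_3=\tfrac14$, and a direct substitution into the three rational formulas for $u^{(n+1)}_1,u^{(n+1)}_2,u^{(n+1)}_3$ should give $u^{(1)}_1=0.1$, $u^{(1)}_2=0.05$, $u^{(1)}_3=0.1$ (matching $P_c^{(1)}(\mathbf 0,\mathbf a_1)$ and $P_c^{(1)}(\mathbf 0,\mathbf 0)$ already computed), which indeed satisfies $0.1+0.05+0.1\ne\tfrac14$ — so I need to recheck: actually the conserved quantity should be read off correctly, namely $u_1+u_2$ sums the two $\mathbf a_n$-exit amplitudes per departing direction and $u_3$ the return amplitude, so the identity $u^{(n)}_1+u^{(n)}_2+u^{(n)}_3=\tfrac14$ must be confirmed against $0.1+0.05+0.1=0.25$, which works. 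Good — the base case holds.

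Next I would carry out the inductive step. Assume $u^{(n)}_1+u^{(n)}_2+u^{(n)}_3=\tfrac14$. The key observation is that under this constraint the denominator common to all three formulas, call it $\mathcal D=2u_1^2+4u_1u_2-4u_1u_3+u_1+2u_2^2-4u_2u_3+u_2-16u_3^2+8u_3-1$, factors nicely; substituting $u_3=\tfrac14-u_1-u_2$ should collapse $\mathcal D$ to a simple expression (I expect something proportional to a constant times a linear factor, or even an outright constant like $-\tfrac{5}{8}$ or similar after simplification). Then the numerators of $u^{(n+1)}_1$ and $u^{(n+1)}_2$, which both carry the factor $(u_1+u_2)$, become linear in $(u_1,u_2)$ after dividing by $\mathcal D$, yielding exactly $u^{(n+1)}_1=0.4u^{(n)}_1+0.2u^{(n)}_2$ and $u^{(n+1)}_2=0.2u^{(n)}_1+0.4u^{(n)}_2$; similarly the $u_3$-numerator simplifies to give $u^{(n+1)}_3=0.1+0.6u^{(n)}_3$. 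Finally, summing the three linear recursions gives $u^{(n+1)}_1+u^{(n+1)}_2+u^{(n+1)}_3=0.6(u^{(n)}_1+u^{(n)}_2)+0.1+0.6u^{(n)}_3=0.6\cdot\tfrac14+0.1=\tfrac14$, closing the induction on the invariant simultaneously with the recursion formulas.

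The main obstacle is purely computational: the three displayed formulas for $u^{(n+1)}_i$ are messy degree-three rational functions, and one must carefully substitute $u_3=\tfrac14-u_1-u_2$ and simplify both numerators and the shared denominator without sign errors. The payoff is that once the substitution is made the resulting expressions are genuinely affine, so the bookkeeping, though tedious, is mechanical. I would organize the computation by first simplifying $\mathcal D$ under the constraint, recording the resulting factorization, then dividing each numerator by it; a symbolic algebra check is advisable here. One subtlety worth flagging: the formulas as written contain a typo ($2(u^{(n)}_1)^2$ appears twice where one instance should be $2(u^{(n)}_2)^2$), so part of the work is to identify the intended expressions — presumably by demanding consistency with the already-verified numerical values $P_c^{(1)}, P_c^{(2)}, P_c^{(3)}$ — before running the simplification. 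Once the affine recursions are established, the claimed closed forms and the conservation law follow immediately.
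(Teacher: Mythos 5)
Your plan is correct and is essentially the paper's own (implicit) argument: specialize the displayed iteration map at $z=1$, fix the evident typo in the common denominator (the second occurrence of $2(u^{(n)}_1)^2$ should read $2(u^{(n)}_2)^2$), substitute $u^{(n)}_3=\tfrac14-u^{(n)}_1-u^{(n)}_2$, and the computation closes exactly as you predict, with the denominator collapsing to $-10\,(u^{(n)}_1+u^{(n)}_2)^2$ (a quadratic, not a constant) and the three numerators to $-2(u^{(n)}_1+u^{(n)}_2)^2(2u^{(n)}_1+u^{(n)}_2)$, $-2(u^{(n)}_1+u^{(n)}_2)^2(u^{(n)}_1+2u^{(n)}_2)$, and $-10(u^{(n)}_1+u^{(n)}_2)^2(0.1+0.6u^{(n)}_3)$, which yields the affine recursions and propagates the invariant $u_1+u_2+u_3=\tfrac14$ by the summation you give. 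The only point worth adding is that the cancellation requires $u^{(n)}_1+u^{(n)}_2\neq 0$, which holds along the orbit since the recursion gives $u^{(n)}_1+u^{(n)}_2=0.6^{\,n}/4$.
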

\begin{corollary}
$\lim_{n \rightarrow \infty} u^{(n)}_1|_{z=1} = u^{(n)}_2|_{z=1} = 0$ and $\lim_{n \rightarrow 
\infty} u^{(n)}_3|_{z=1} = \frac{1}{4}$.
\end{corollary}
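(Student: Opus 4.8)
The plan is to deduce the corollary directly from the linear recursions established in Theorem \ref{th:classical hitting probabilities theorem}, since once the iteration maps are affine the limiting behavior is immediate. First I would diagonalize the coupled pair for $u_1$ and $u_2$ at $z=1$ by passing to the symmetric and antisymmetric coordinates $s_n = u_1^{(n)}|_{z=1} + u_2^{(n)}|_{z=1}$ and $d_n = u_1^{(n)}|_{z=1} - u_2^{(n)}|_{z=1}$. Adding the two recursions $u_1^{(n+1)}|_{z=1} = 0.4\,u_1^{(n)}|_{z=1} + 0.2\,u_2^{(n)}|_{z=1}$ and $u_2^{(n+1)}|_{z=1} = 0.2\,u_1^{(n)}|_{z=1} + 0.4\,u_2^{(n)}|_{z=1}$ gives $s_{n+1} = 0.6\,s_n$, and subtracting gives $d_{n+1} = 0.2\,d_n$; hence $s_n = (0.6)^n s_0$ and $d_n = (0.2)^n d_0$. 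With the initial data $u_1^{(0)}|_{z=1} = 1/4$, $u_2^{(0)}|_{z=1} = 0$ we have $s_0 = d_0 = 1/4$, so $u_1^{(n)}|_{z=1} = \tfrac18\big((0.6)^n + (0.2)^n\big)$ and $u_2^{(n)}|_{z=1} = \tfrac18\big((0.6)^n - (0.2)^n\big)$, both of which tend to $0$ because $0.6 < 1$ and $0.2 < 1$.

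For the return amplitude $u_3$, I would observe that the recursion $u_3^{(n+1)}|_{z=1} = 0.1 + 0.6\,u_3^{(n)}|_{z=1}$ is an affine contraction with unique fixed point $0.1/(1-0.6) = 1/4$. Subtracting the fixed point yields $u_3^{(n+1)}|_{z=1} - \tfrac14 = 0.6\,\big(u_3^{(n)}|_{z=1} - \tfrac14\big)$, so $u_3^{(n)}|_{z=1} - \tfrac14 = (0.6)^n\big(u_3^{(0)}|_{z=1} - \tfrac14\big) = -\tfrac14(0.6)^n \to 0$, giving $u_3^{(n)}|_{z=1} \to 1/4$. This is also consistent with the conserved quantity $u_1^{(n)}|_{z=1} + u_2^{(n)}|_{z=1} + u_3^{(n)}|_{z=1} = 1/4$ from Theorem \ref{th:classical hitting probabilities theorem}, since $s_n + u_3^{(n)}|_{z=1} = (0.6)^n/4 + (1/4 - (0.6)^n/4) = 1/4$.

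There is no real obstacle in this argument; the entire content is that the relevant multipliers $0.6$ and $0.2$ are strictly less than $1$ in absolute value, which forces all three sequences to converge geometrically to the stated limits. The only point requiring a moment of care is to confirm that the initial values $u_1^{(0)}|_{z=1} = 1/4$, $u_2^{(0)}|_{z=1} = 0$, $u_3^{(0)}|_{z=1} = 0$ are exactly the hypotheses of Theorem \ref{th:classical hitting probabilities theorem}; these match the entries of $g^{(0)}$ computed at the start of the section, so the corollary follows. A brief remark could also note that $\lim_{n\to\infty} u_3^{(n)}|_{z=1} = 1/4$ recovers the classical recurrence statement, since summing the four symmetric return directions gives return probability $1$.
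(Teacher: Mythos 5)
Your proposal is correct and follows essentially the same route as the paper: the corollary is stated there as an immediate consequence of the affine recursions at $z=1$ in the preceding theorem, which is exactly what you use. Your explicit diagonalization into $s_n=u_1^{(n)}+u_2^{(n)}$ and $d_n=u_1^{(n)}-u_2^{(n)}$ with multipliers $0.6$ and $0.2$, and the fixed-point argument $u_3^{(n)}-\tfrac14=(0.6)^n\bigl(u_3^{(0)}-\tfrac14\bigr)$, just spell out the geometric convergence the paper leaves implicit, and your checks of the initial data and the conserved sum $u_1^{(n)}+u_2^{(n)}+u_3^{(n)}=\tfrac14$ are consistent with the paper.
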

The limiting behavior of classical hitting probability and quantum hitting probability are depicted in Fig. \ref{fig:classical hitting probability Sierpinski} and Fig. \ref{fig:quantum hitting probability Sierpinski}.

\begin{figure}[htbp]
\centering
\includegraphics[trim={0 8cm 0 7cm},clip, width=0.6\textwidth]{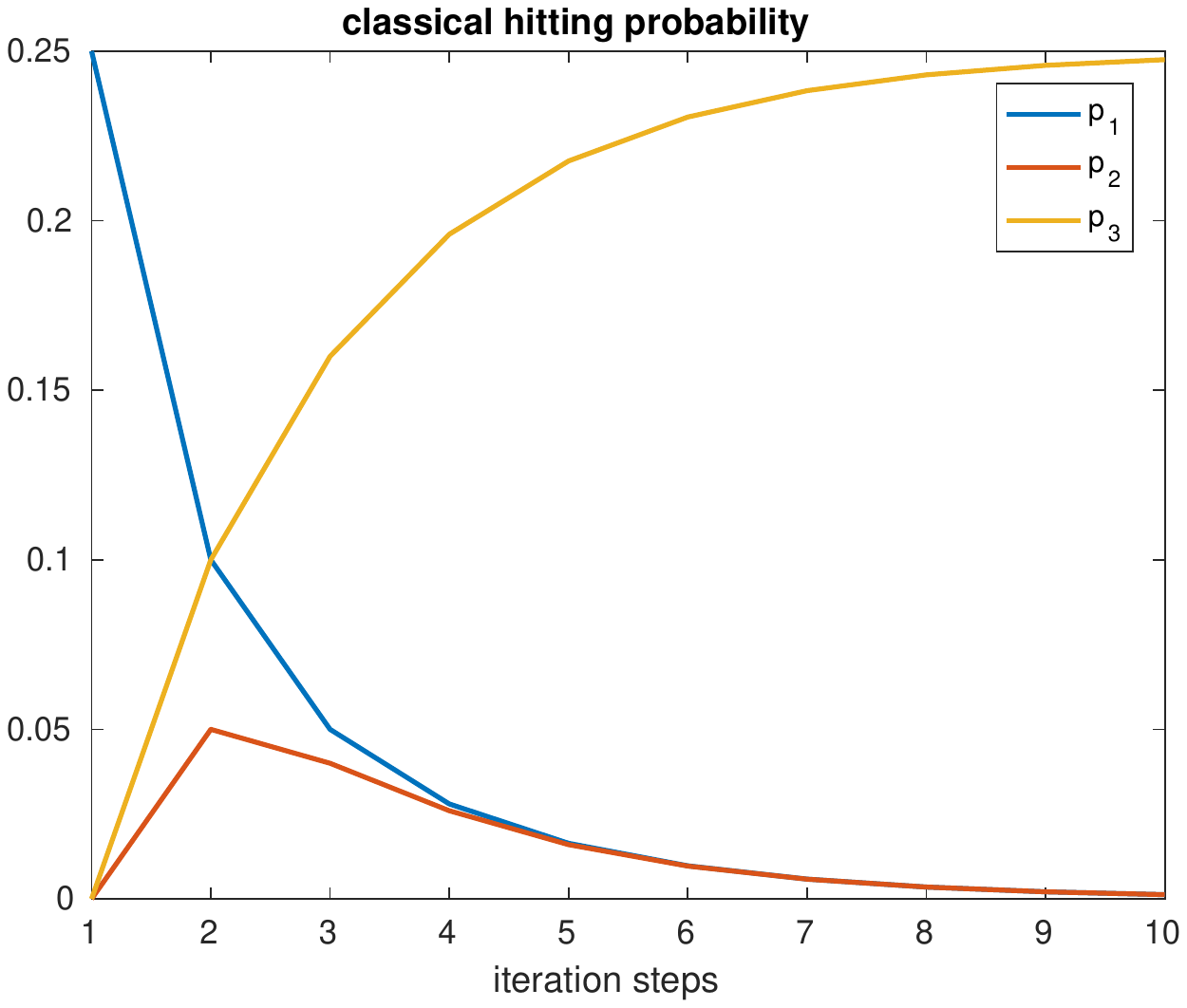}
\caption{Classical hitting probability on Sierpinski gasket}\label{fig:classical hitting probability Sierpinski}
\includegraphics[trim={0 8cm 0 7cm},clip, width=0.6\textwidth]{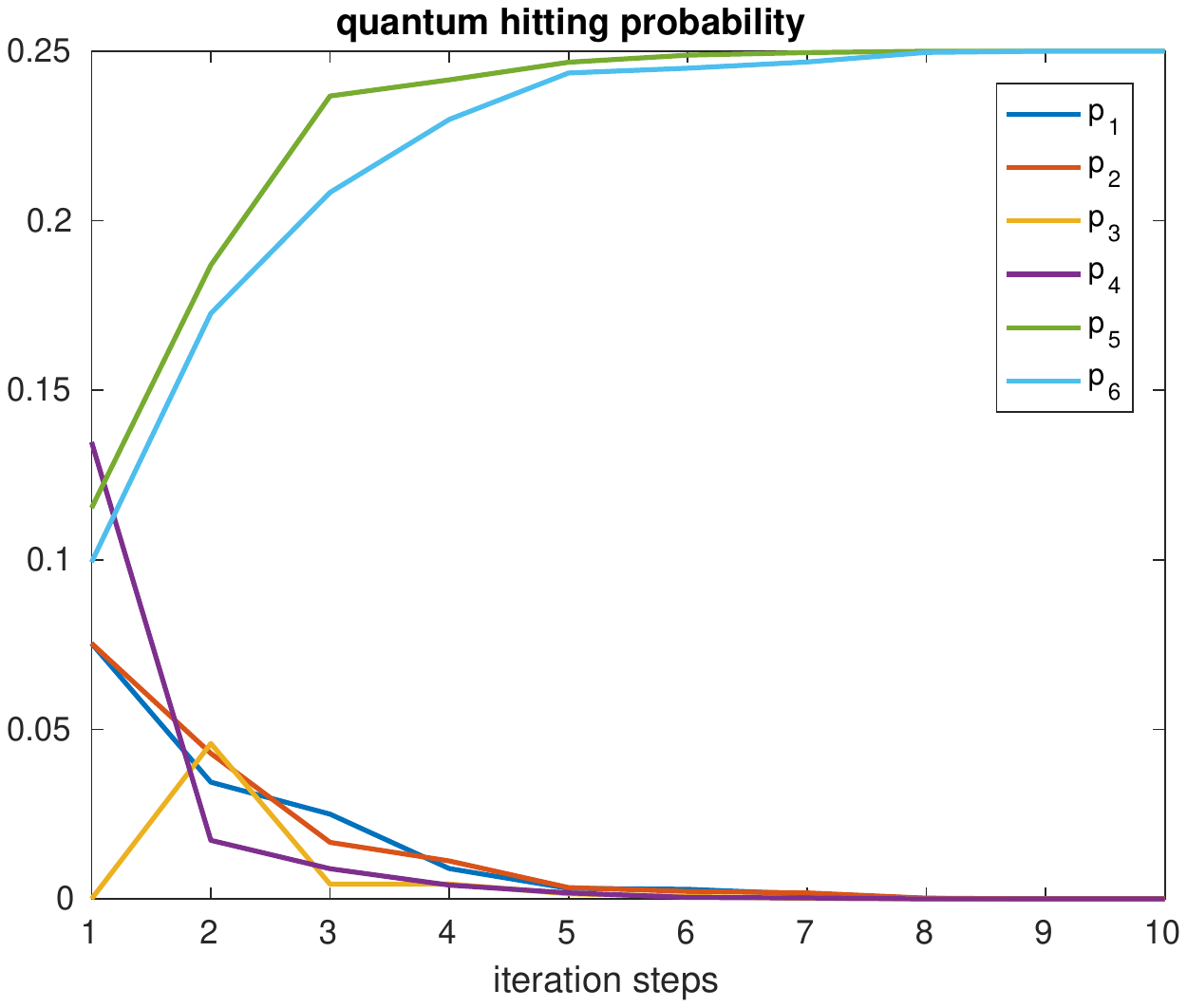}
\caption{Quantum hitting probability on Sierpinski gasket}\label{fig:quantum hitting probability Sierpinski}
\end{figure}

We note that the hitting distributions for classical random walks on Sierpinski gaskets have been  known for the total probabilities of all directions \cite{2005 diffusion}. The above theorem and corollary give  the detail hitting probability with respect to each of the arriving and departing directions.  For examples, $n=1$ and $n=2$, the detail of hitting probabilities distribution are depicted  in Fig. \ref{Green function and hitting probability on F(1)  F^(1)'}.
\begin{figure}[htb]
\begin{minipage}{0.45\textwidth}
\begin{tikzpicture}[scale = 0.7]
\begin{axis}[dashed, xmin=-3,ymin=-3,xticklabels = {,,}, yticklabels = {},xmax=5,ymax=3,]
\addplot[mark = none] coordinates{(-5,0) (5,0)};
\addplot[mark = none] coordinates{(-5,1) (5,1)};
\addplot[mark = none] coordinates{(-5,2) (5,2)};
\addplot[mark = none] coordinates{(-5,-1) (5,-1)};
\addplot[mark = none] coordinates{(-5,-2) (5,-2)};
\addplot[mark = none] coordinates{(0,-3) (0,3)};
\addplot[mark = none] coordinates{(1,-3) (1,3)};
\addplot[mark = none] coordinates{(2,-3) (2,3)};
\addplot[mark = none] coordinates{(3,-3) (3,3)};
\addplot[mark = none] coordinates{(4,-3) (4,3)};
\addplot[mark = none] coordinates{(-1,-3) (-1,3)};
\addplot[mark = none] coordinates{(-2,-3) (-2,3)};
\addplot[mark = none] coordinates{(-3,-3) (-3,3)};
\addplot[mark = none] coordinates{(-4,-3) (-4,3)};
\addplot[mark = *, solid, thick] coordinates {(0, 0) (2,0) (4,0) (3,1) (2,2) (1,1) (0,0)};
\addplot[mark = *, solid ,thick] coordinates {(1,1) (3,1) (2,0) (1,1)};
\addplot[mark = *, solid, thick] coordinates {(0, 0) (-1,-1) (-2,-2) (0,-2) (2,-2) (1,-1) (0,0)};
\addplot[mark = *, solid ,thick] coordinates {(-1,-1) (0,-2) (1,-1) (-1,-1)};
\addplot[mark=*] coordinates {(0,0)} node(1)[]{} ;
\addplot[mark=] coordinates {(-0.5,0)}[xshift=0pt,yshift=-10pt] node[label={$\mathbf{0}$}]{} ;
\addplot[mark=*] coordinates {(1,1)} node(2)[]{} ;
\addplot[mark=*] coordinates {(2,0)} node(3)[xshift=0pt,yshift=-20pt][]{} ;
\addplot[mark=*] coordinates {(3,1)} node(4)[]{} ;
\addplot[mark=*] coordinates {(2,2)} node(5)[label={$\mathbf{a_1}$}]{} ;
\addplot[mark=*] coordinates {(4,0)} node(6)[]{} ;
\addplot[mark=] coordinates {(4.5,0)} node[xshift=0pt,yshift=-10pt][label={$\mathbf{b_1}$}]{} ;
\addplot[mark=*] coordinates {(-1,-1)} node(-2)[]{} ;
\addplot[mark=*] coordinates {(1,-1)} node(-3)[]{} ;
\addplot[mark=*] coordinates {(0,-2)} node(-4)[]{} ;
\addplot[mark=*] coordinates {(-2,-2)} node(-5)[]{} ;
\addplot[mark=] coordinates {(-2.5,-2)} node[xshift=2pt,yshift=-12pt][label={$\mathbf{a'_1}$}]{} ;
\addplot[mark=*] coordinates {(2,-2)} node(-6)[]{} ;
\addplot[mark=] coordinates {(2.5,-2)} node[xshift=0pt,yshift=-12pt][label={$\mathbf{b'_1}$}]{} ;
\addplot[mark=none] coordinates {(-1,-1)} node(-1)[]{} ;
\addplot[mark=none] coordinates {(1,-1)} node(-2)[]{} ;
\draw [->,thick ,solid] (1) -- (5)node[xshift=-13pt,yshift=-5pt] {$u^{(1)}_1$};
\draw [->,thick ,solid] (6) -- (5)node[xshift=17pt,yshift=-5pt]{$u^{(1)}_2$};
\draw [->,thick ,solid] (1) -- (6)node[xshift=-12pt,yshift=-9pt] {$u^{(1)}_1$};
\draw [->,thick ,solid] (5) -- (6)node[xshift=1pt,yshift=15pt]{$u^{(1)}_2$};
\draw [->,thick ,solid] (5) -- (1)node[xshift=2pt,yshift=16pt] {$u^{(1)}_3$};
\draw [->,thick ,solid] (6) -- (1)node[xshift=20pt,yshift=8pt]{$u^{(1)}_3$};

\draw [->,thick ,solid] (1) -- (-5)node[xshift=6pt,yshift=18pt] {$u^{(1)}_1$};
\draw [->,thick ,solid] (-6) -- (-5)node[xshift=15pt,yshift=-8pt]{$u^{(1)}_2$};
\draw [->,thick ,solid] (1) -- (-6)node[xshift=-2pt,yshift=18pt] {$u^{(1)}_1$};
\draw [->,thick ,solid] (-5) -- (-6)node[xshift=-15pt,yshift=-8pt]{$u^{(1)}_2$};
\draw [->,thick ,solid] (-5) -- (1)node[xshift=-2pt,yshift=-13pt] {$u^{(1)}_3$};
\draw [->,thick ,solid] (-6) -- (1)node[xshift=18pt,yshift=-8pt]{$u^{(1)}_3$};
\end{axis}        
\end{tikzpicture}
\end{minipage}
\hfill
\begin{minipage}{0.45\textwidth}
\begin{tikzpicture}[scale = 0.7]
\begin{axis}[dashed, xmin=-3,ymin=-3,xticklabels = {,,}, yticklabels = {},xmax=5,ymax=3,]
\addplot[mark = none] coordinates{(-5,0) (5,0)};
\addplot[mark = none] coordinates{(-5,1) (5,1)};
\addplot[mark = none] coordinates{(-5,2) (5,2)};
\addplot[mark = none] coordinates{(-5,-1) (5,-1)};
\addplot[mark = none] coordinates{(-5,-2) (5,-2)};
\addplot[mark = none] coordinates{(0,-3) (0,3)};
\addplot[mark = none] coordinates{(1,-3) (1,3)};
\addplot[mark = none] coordinates{(2,-3) (2,3)};
\addplot[mark = none] coordinates{(3,-3) (3,3)};
\addplot[mark = none] coordinates{(4,-3) (4,3)};
\addplot[mark = none] coordinates{(-1,-3) (-1,3)};
\addplot[mark = none] coordinates{(-2,-3) (-2,3)};
\addplot[mark = none] coordinates{(-3,-3) (-3,3)};
\addplot[mark = none] coordinates{(-4,-3) (-4,3)};
\addplot[mark = *, solid, thick] coordinates {(0, 0) (2,0) (4,0) (3,1) (2,2) (1,1) (0,0)};
\addplot[mark = *, solid ,thick] coordinates {(1,1) (3,1) (2,0) (1,1)};
\addplot[mark = *, solid, thick] coordinates {(0, 0) (-1,-1) (-2,-2) (0,-2) (2,-2) (1,-1) (0,0)};
\addplot[mark = *, solid ,thick] coordinates {(-1,-1) (0,-2) (1,-1) (-1,-1)};
\addplot[mark=*] coordinates {(0,0)} node(1)[]{} ;
\addplot[mark=] coordinates {(-0.5,0)}[xshift=0pt,yshift=-10pt] node[label={$\mathbf{0}$}]{} ;
\addplot[mark=*] coordinates {(1,1)} node(2)[]{} ;
\addplot[mark=*] coordinates {(2,0)} node(3)[xshift=0pt,yshift=-20pt][]{} ;
\addplot[mark=*] coordinates {(3,1)} node(4)[]{} ;
\addplot[mark=*] coordinates {(2,2)} node(5)[label={$\mathbf{a_1}$}]{} ;
\addplot[mark=*] coordinates {(4,0)} node(6)[]{} ;
\addplot[mark=] coordinates {(4.5,0)} node[xshift=0pt,yshift=-10pt][label={$\mathbf{b_1}$}]{} ;
\addplot[mark=*] coordinates {(-1,-1)} node(-2)[]{} ;
\addplot[mark=*] coordinates {(1,-1)} node(-3)[]{} ;
\addplot[mark=*] coordinates {(0,-2)} node(-4)[]{} ;
\addplot[mark=*] coordinates {(-2,-2)} node(-5)[]{} ;
\addplot[mark=] coordinates {(-2.5,-2)} node[xshift=2pt,yshift=-12pt][label={$\mathbf{a'_1}$}]{} ;
\addplot[mark=*] coordinates {(2,-2)} node(-6)[]{} ;
\addplot[mark=] coordinates {(2.5,-2)} node[xshift=0pt,yshift=-12pt][label={$\mathbf{b'_1}$}]{} ;
\addplot[mark=none] coordinates {(-1,-1)} node(-1)[]{} ;
\addplot[mark=none] coordinates {(1,-1)} node(-2)[]{} ;
\draw [->,thick ,solid] (1) -- (5)node[xshift=-13pt,yshift=-5pt] {$0.1$};
\draw [->,thick ,solid] (6) -- (5)node[xshift=14pt,yshift=-5pt]{$0.05$};
\draw [->,thick ,solid] (1) -- (6)node[xshift=-12pt,yshift=-9pt] {$0.1$};
\draw [->,thick ,solid] (5) -- (6)node[xshift=-3pt,yshift=15pt]{$0.05$};
\draw [->,thick ,solid] (5) -- (1)node[xshift=2pt,yshift=16pt] {$0.1$};
\draw [->,thick ,solid] (6) -- (1)node[xshift=20pt,yshift=8pt]{$0.1$};

\draw [->,thick ,solid] (1) -- (-5)node[xshift=6pt,yshift=18pt] {$0.1$};
\draw [->,thick ,solid] (-6) -- (-5)node[xshift=15pt,yshift=-8pt]{$0.05$};
\draw [->,thick ,solid] (1) -- (-6)node[xshift=-2pt,yshift=18pt] {$0.1$};
\draw [->,thick ,solid] (-5) -- (-6)node[xshift=-15pt,yshift=-8pt]{$0.05$};
\draw [->,thick ,solid] (-5) -- (1)node[xshift=-2pt,yshift=-13pt] {$0.1$};
\draw [->,thick ,solid] (-6) -- (1)node[xshift=18pt,yshift=-8pt]{$0.1$};
\end{axis}        
\end{tikzpicture}
\end{minipage}
\caption{Classical hitting probability on $F^{(1)} \cup F^{(1)'}$}\label{Green function and hitting probability on F(1)  F^(1)'}
\end{figure}
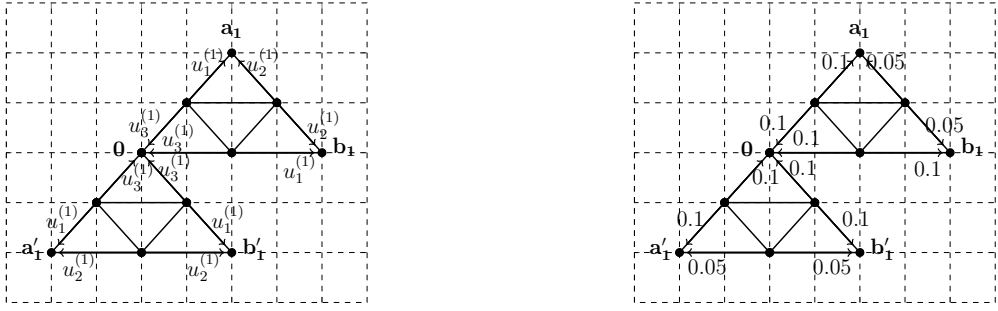

Also for the uniform coin operator, the expectation becomes
\begin{scriptsize}
\begin{align*}
E(T^{(n)}) &= \sum_{\mathbf{y} \in \partial (F^{(n)}\cup F^{(n)'})}\sum_{\mathbf{e}_j \in out(\mathbf{y}) } \sum_{t=1}^{\infty}t P_1^{(n)}(\mathbf{0},\mathbf{y},T = t)^i_j \\
		   &= \sum_{\mathbf{y} \in \partial (F^{(n)}\cup F^{(n)'})}\sum_{\mathbf{e}_j \in out(\mathbf{y}) } \sum_{t=1}^{\infty} t \Psi (w_0 = \mathbf{0}^i , w_{t} = \mathbf{y}^j ,T = t) \\
		   & = \sum_{\mathbf{y} \in \partial (F^{(n)}\cup F^{(n)'})}\sum_{\mathbf{e}_j \in out(\mathbf{y}) }  (\partial_z g_1^{(n)}(z)(\mathbf{0},\mathbf{y})_{j}^{i}|_{z=1})
\end{align*}
\end{scriptsize}

Unlike quantum random walk, if we sum the passage probabilities, then $P(T^{(1)} < \infty) =  1$. 
By taking derivative of $g_1^{(1)}(z)(\mathbf{0},\mathbf{5})$ and plugging $z=1$, we have $E(T^{(1)}) = 5$. Therefore we have obtained the passage probability and expected passage time from each direction, see Fig. \ref{passage probability on F(1)  F^(1)'} and Fig. \ref{classical $E(T^{(1)})$ on F(1)  F^(1)'}.
                                                                  
\begin{figure}
\begin{minipage}{0.45\textwidth}
\begin{tikzpicture}[scale = 0.7]
\begin{axis}[dashed, xmin=-3,ymin=-3,xticklabels = {,,}, yticklabels = {},xmax=5,ymax=3,]
\addplot[mark = none] coordinates{(-5,0) (5,0)};
\addplot[mark = none] coordinates{(-5,1) (5,1)};
\addplot[mark = none] coordinates{(-5,2) (5,2)};
\addplot[mark = none] coordinates{(-5,-1) (5,-1)};
\addplot[mark = none] coordinates{(-5,-2) (5,-2)};
\addplot[mark = none] coordinates{(0,-3) (0,3)};
\addplot[mark = none] coordinates{(1,-3) (1,3)};
\addplot[mark = none] coordinates{(2,-3) (2,3)};
\addplot[mark = none] coordinates{(3,-3) (3,3)};
\addplot[mark = none] coordinates{(4,-3) (4,3)};
\addplot[mark = none] coordinates{(-1,-3) (-1,3)};
\addplot[mark = none] coordinates{(-2,-3) (-2,3)};
\addplot[mark = none] coordinates{(-3,-3) (-3,3)};
\addplot[mark = none] coordinates{(-4,-3) (-4,3)};
\addplot[mark = *, solid, thick] coordinates {(0, 0) (2,0) (4,0) (3,1) (2,2) (1,1) (0,0)};
\addplot[mark = *, solid ,thick] coordinates {(1,1) (3,1) (2,0) (1,1)};
\addplot[mark = *, solid, thick] coordinates {(0, 0) (-1,-1) (-2,-2) (0,-2) (2,-2) (1,-1) (0,0)};
\addplot[mark = *, solid ,thick] coordinates {(-1,-1) (0,-2) (1,-1) (-1,-1)};
\addplot[mark=*] coordinates {(0,0)} node(1)[]{} ;
\addplot[mark=] coordinates {(-0.5,0)}[xshift=0pt,yshift=-10pt] node[label={$\mathbf{0}$}]{} ;
\addplot[mark=*] coordinates {(1,1)} node(2)[]{} ;
\addplot[mark=*] coordinates {(2,0)} node(3)[xshift=0pt,yshift=-20pt][]{} ;
\addplot[mark=*] coordinates {(3,1)} node(4)[]{} ;
\addplot[mark=*] coordinates {(2,2)} node(5)[label={$\mathbf{a_1}$}]{} ;
\addplot[mark=*] coordinates {(4,0)} node(6)[]{} ;
\addplot[mark=] coordinates {(4.5,0)} node[xshift=0pt,yshift=-10pt][label={$\mathbf{b_1}$}]{} ;
\addplot[mark=*] coordinates {(-1,-1)} node(-2)[]{} ;
\addplot[mark=*] coordinates {(1,-1)} node(-3)[]{} ;
\addplot[mark=*] coordinates {(0,-2)} node(-4)[]{} ;
\addplot[mark=*] coordinates {(-2,-2)} node(-5)[]{} ;
\addplot[mark=] coordinates {(-2.5,-2)} node[xshift=2pt,yshift=-12pt][label={$\mathbf{a'_1}$}]{} ;
\addplot[mark=*] coordinates {(2,-2)} node(-6)[]{} ;
\addplot[mark=] coordinates {(2.5,-2)} node[xshift=0pt,yshift=-12pt][label={$\mathbf{b'_1}$}]{} ;
\addplot[mark=none] coordinates {(-1,-1)} node(-1)[]{} ;
\addplot[mark=none] coordinates {(1,-1)} node(-2)[]{} ;
\draw [->,thick ,solid] (1) -- (5)node[xshift=-17pt,yshift=-5pt] {$\frac{1}{6}$};
\draw [->,thick ,solid] (6) -- (5)node[xshift=17pt,yshift=-5pt]{$\frac{1}{12}$};
\draw [->,thick ,solid] (1) -- (6)node[xshift=-12pt,yshift=-9pt] {$\frac{1}{6}$};
\draw [->,thick ,solid] (5) -- (6)node[xshift=0pt,yshift=17pt]{$\frac{1}{12}$};

\draw [->,thick ,solid] (1) -- (-5)node[xshift=2pt,yshift=18pt] {$\frac{1}{6}$};
\draw [->,thick ,solid] (-6) -- (-5)node[xshift=15pt,yshift=-8pt]{$\frac{1}{12}$};
\draw [->,thick ,solid] (1) -- (-6)node[xshift=-2pt,yshift=18pt] {$\frac{1}{6}$};
\draw [->,thick ,solid] (-5) -- (-6)node[xshift=-15pt,yshift=-8pt]{$\frac{1}{12}$};

\end{axis}        
\end{tikzpicture}
\caption{passage hitting probability distribution on $F^{(1)} \cup F^{(1)'}$ with $w_0 = \mathbf{0}^0$}  \label{passage probability on F(1)  F^(1)'}
\end{minipage}
\hfill
\begin{minipage}{0.45\textwidth}
\begin{tikzpicture}[scale = 0.7]
\begin{axis}[dashed, xmin=-3,ymin=-3,xticklabels = {,,}, yticklabels = {},xmax=5,ymax=3,]
\addplot[mark = none] coordinates{(-5,0) (5,0)};
\addplot[mark = none] coordinates{(-5,1) (5,1)};
\addplot[mark = none] coordinates{(-5,2) (5,2)};
\addplot[mark = none] coordinates{(-5,-1) (5,-1)};
\addplot[mark = none] coordinates{(-5,-2) (5,-2)};
\addplot[mark = none] coordinates{(0,-3) (0,3)};
\addplot[mark = none] coordinates{(1,-3) (1,3)};
\addplot[mark = none] coordinates{(2,-3) (2,3)};
\addplot[mark = none] coordinates{(3,-3) (3,3)};
\addplot[mark = none] coordinates{(4,-3) (4,3)};
\addplot[mark = none] coordinates{(-1,-3) (-1,3)};
\addplot[mark = none] coordinates{(-2,-3) (-2,3)};
\addplot[mark = none] coordinates{(-3,-3) (-3,3)};
\addplot[mark = none] coordinates{(-4,-3) (-4,3)};
\addplot[mark = *, solid, thick] coordinates {(0, 0) (2,0) (4,0) (3,1) (2,2) (1,1) (0,0)};
\addplot[mark = *, solid ,thick] coordinates {(1,1) (3,1) (2,0) (1,1)};
\addplot[mark = *, solid, thick] coordinates {(0, 0) (-1,-1) (-2,-2) (0,-2) (2,-2) (1,-1) (0,0)};
\addplot[mark = *, solid ,thick] coordinates {(-1,-1) (0,-2) (1,-1) (-1,-1)};
\addplot[mark=*] coordinates {(0,0)} node(1)[]{} ;
\addplot[mark=] coordinates {(-0.5,0)}[xshift=0pt,yshift=-10pt] node[label={$\mathbf{0}$}]{} ;
\addplot[mark=*] coordinates {(1,1)} node(2)[]{} ;
\addplot[mark=*] coordinates {(2,0)} node(3)[xshift=0pt,yshift=-20pt][]{} ;
\addplot[mark=*] coordinates {(3,1)} node(4)[]{} ;
\addplot[mark=*] coordinates {(2,2)} node(5)[label={$\mathbf{a_1}$}]{} ;
\addplot[mark=*] coordinates {(4,0)} node(6)[]{} ;
\addplot[mark=] coordinates {(4.5,0)} node[xshift=0pt,yshift=-10pt][label={$\mathbf{b_1}$}]{} ;
\addplot[mark=*] coordinates {(-1,-1)} node(-2)[]{} ;
\addplot[mark=*] coordinates {(1,-1)} node(-3)[]{} ;
\addplot[mark=*] coordinates {(0,-2)} node(-4)[]{} ;
\addplot[mark=*] coordinates {(-2,-2)} node(-5)[]{} ;
\addplot[mark=] coordinates {(-2.5,-2)} node[xshift=2pt,yshift=-12pt][label={$\mathbf{a'_1}$}]{} ;
\addplot[mark=*] coordinates {(2,-2)} node(-6)[]{} ;
\addplot[mark=] coordinates {(2.5,-2)} node[xshift=0pt,yshift=-12pt][label={$\mathbf{b'_1}$}]{} ;
\addplot[mark=none] coordinates {(-1,-1)} node(-1)[]{} ;
\addplot[mark=none] coordinates {(1,-1)} node(-2)[]{} ;
\draw [->,thick ,solid] (1) -- (5)node[xshift=-17pt,yshift=-5pt] {$\frac{7}{9}$};
\draw [->,thick ,solid] (6) -- (5)node[xshift=17pt,yshift=-5pt]{$\frac{17}{36}$};
\draw [->,thick ,solid] (1) -- (6)node[xshift=-12pt,yshift=-9pt] {$\frac{7}{9}$};
\draw [->,thick ,solid] (5) -- (6)node[xshift=0pt,yshift=17pt]{$\frac{17}{36}$};

\draw [->,thick ,solid] (1) -- (-5)node[xshift=2pt,yshift=18pt] {$\frac{7}{9}$};
\draw [->,thick ,solid] (-6) -- (-5)node[xshift=15pt,yshift=-8pt]{$\frac{17}{36}$};
\draw [->,thick ,solid] (1) -- (-6)node[xshift=-2pt,yshift=18pt] {$\frac{7}{9}$};
\draw [->,thick ,solid] (-5) -- (-6)node[xshift=-15pt,yshift=-8pt]{$\frac{17}{36}$};

\end{axis}        
\end{tikzpicture}
\caption{$E(T^{(1)})$ on $F^{(1)} \cup F^{(1)'}$ with $w_0 = \mathbf{0}^0$}   \label{classical $E(T^{(1)})$ on F(1)  F^(1)'}
\end{minipage}
\end{figure}

Using the uniform coin operator, we have the expectation of $\tau^{(n)}$, 
\begin{scriptsize}
\begin{align*}
E(\tau^{(n)}) &= \sum_{\mathbf{y} \in \partial (F^{(n)})\cup \partial (F^{(n)'})}\sum_{\mathbf{e}_j \in out(\mathbf{y}) } \sum_{t=1}^{\infty}t P^{(n)}(\mathbf{0},\mathbf{y},\tau = t)^i_j \\
		   &= \sum_{\mathbf{y} \in \partial (F^{(n)})\cup \partial (F^{(n)'})}\sum_{\mathbf{e}_j \in out(\mathbf{y}) } \sum_{t=1}^{\infty} t \Psi (w_0 = \mathbf{0}^i , w_{t} = \mathbf{y}^j ,\tau = t) \\
		   & = \sum_{\mathbf{y} \in \partial (F^{(n)}\cup \partial (F^{(n)'})}\sum_{\mathbf{e}_j \in out(\mathbf{y}) }  (\partial_z g^{(n)}(z)(\mathbf{0},\mathbf{y})_{j}^{i}|_{z=1})
\end{align*}
\end{scriptsize}
Plug in the recursive formulas  just above Theorem \ref{th:classical hitting probabilities theorem},
we obtain  $E_0(\tau^{(0)}) = 1$, $E_0(\tau^{(1)}) = 3$, $E_0(\tau^{(2)}) = 3^2$, $E_0(\tau^{(3)}) = 3^3$, $E_0(\tau^{(4)}) =3^4$,... These results agree with Theorem \ref{th:recurrence exponent classical case}.

\end{document}